\newcommand{\q}{$``?"$}
\newtheorem{proposition}{Proposition}
\newtheorem{corollary}{Corollary}
\newtheorem{todo}{TODO}
\newtheorem{theorem}{Theorem}
\newtheorem{lemma}{Lemma}
\newtheorem{remark}{Remark}
\DeclareMathOperator{\Ex}{\mathbb{E}}
\newcommand\bigpar[1]{\bigl(#1\bigr)}
\newcommand\bigabs[1]{\bigl|#1\bigr|}
\title{Consensus with Bounded Space and Minimal Communication}
\author{Simina Br\^anzei\thanks{Purdue University, USA. E-mail: \url{simina.branzei@gmail.com}.} \and Yuval Peres\thanks{E-mail: \url{yuval@yuvalperes.com}.}}
\date{}
\begin{document}
\maketitle

\begin{abstract}
	Population protocols are a fundamental model in distributed computing, where many nodes with bounded memory and computational power have random pairwise interactions over time. This model has been studied in a rich body of literature aiming to understand the tradeoffs between the memory and time needed to perform computational tasks.
	
	We study the population protocol model focusing on the communication complexity needed to achieve consensus with high probability. When the number of memory states is $s = O(\log \log{n})$, the best upper bound known was given by a protocol with $O(n \log{n})$ communication, while the best lower bound was $\Omega(n \log(n)/s)$ communication. 
	
	We design a protocol that shows the lower bound is sharp, solving an open problem from \cite{comm_init}. When each agent has  $s=O(\log{n}^{\theta})$ states of memory, with $\theta \in (0,1/2)$, consensus can be reached in time $O(\log(n))$ with $O(n \log{(n)}/s)$ communications with high probability.
\end{abstract}

\newpage 
\section{Introduction}
Population protocols are a basic model in distributed computing introduced in~\cite{AADFP04}, where a collection of agents with bounded memory have random pairwise interactions over time. Population protocols can be used to model colonies of insects such as ants and bees, flocks of birds~\cite{AADFP04,DFGR06}, chemical reaction networks~\cite{CardelliHDC17}, gene regulatory networks~\cite{BB04}, wireless sensor networks~\cite{DV12}, and opinion formation in social networks~\cite{PVV09} (see overview in~\cite{AAEGR17}). 

In a population protocol there are $n$ agents, each of which is represented as a finite state machine and has a Poisson clock with unit rate. When the clock of an agent rings, the node wakes up and gets matched with a random other node. The two nodes update their states as a function of both of their previous states. 
In the consensus (majority) problem, each agent starts with an initial belief bit and the goal is to have all the agents learn, through interactions, the belief with higher initial count. Note this model is equivalent to a discrete model where in each round, a random node wakes up and is matched to another random node. \footnote{The equivalence between the continuous and discrete models holds up to rescaling the time.}

The population protocol model can be used to provide insights into how eusocial insects solve problems such as searching for a new home (reaching consensus), foraging for food, or allocating tasks to workers~\cite{Radeva17}.
In the case of chemical reaction networks, assumptions about chemical solutions are very similar to the rules that define population protocols: molecules can be seen as agents that have pairwise interactions (collisions), the next pair to interact is chosen randomly (as in a well-mixed solution), each molecule has a finite number of states, and each interaction can update the state of one or both molecules (see~\cite{CardelliHDC17}).
More recent experimental evidence suggests that population protocols can be implemented at the molecular level by DNA nucleotides and are equivalent to computations carried out by living cells to function correctly (see discussion in ~\cite{AAEGR17} and ~\cite{CDSPCSS13,CC12}).

The consensus problem has been studied extensively from a computational point of view, aiming to understand the resources required to reach consensus, such as time and memory~\cite{AngluinAER07,DraiefV12,AAEGR17,AlistarhDKSU17,AlistarhAG18,AlistarhG18,BerenbrinkGK20}. One of the simplest consensus protocols is the \emph{three state protocol}~\cite{AAE08}, where every node has only $3$ states of memory, labeled ``$0$”, ``$1$”, and ``$?$”. Each node starts by initializing its state to their initial belief bit. When the clock of a node $i$ rings, the node wakes up and gets matched with another random node $j$, with the following update rules. If both nodes have the same belief in $\{0,1\}$, then they keep their beliefs. However, if node $i$’s belief is $b \in \{0,1\}$ and node $j$’s belief is ``$?$'', then node $j$ updates its belief to $b$. If on the other hand nodes $i$ and $j$ have opposing beliefs in $\{0,1\}$, then node $j$ changes their belief to ``$?$'' while node $i$ keeps its belief. The three state protocol converges to the correct majority bit with high probability in time $O(\log n)$ and communication $O(n \log{n})$. \cite{5062181} gives an elegant approach for studying the three state protocol, by considering the deterministic system obtained when the number of nodes $n$ goes to infinity and studying the resulting differential equations. 

In many distributed settings, such as blockchain, setting up a communication channel between the nodes is expensive, so light communication is crucial for the efficiency of a protocol~\cite{comm_init}. Minimizing the communication complexity (or cost) requires that nodes do not communicate each time they wake up. Rather, when a node wakes up it decides first whether to communicate or not. If it decides to communicate, then it gets matched with a random other node as usual and they exchange states. If it decides to not communicate, then the node can update its internal state by itself. For example, a useful type of update that a node can do individually is to increase a counter that tells it how many rounds it has been since it last communicated. Thus the goal is to minimize the number of communications required to reach consensus with high probability. \footnote{Thus each time two nodes communicate will count as one communication, regardless of the amount of information passed through the channel.}

The study of communication in population protocols was initiated in~\cite{comm_init}, which showed that when the number $s$ of memory states of a node is $s > \log \log{n}$, then consensus can be reached with $O(n)$ communication. When $s < \log \log{n}$, the upper and lower bounds did not match; the best upper bound known is given by an algorithm with $O(n \log{n})$ communication, while the lower bound is $\Omega(n \log(n)/s)$ communication. 

Our main result solves the open problem from ~\cite{comm_init} by showing that the lower bound is tight. We design and analyze a protocol that achieves consensus with $O(n\log(n)/s)$ communication in $O(\log{n})$ time w.h.p., where $s = O(\log \log(n)^3)$ is the number of memory states per agent. Note the three state protocol described above also runs in time $O(\log{n})$ and in fact this is the minimum possible time for \emph{any} consensus protocol that is correct.

\begin{theorem}[Main Theorem] \label{thm:main}
	Suppose that each node has $s=O(\log{n}^{\theta})$ states of memory where $\theta \in (0,1/2)$. Then with high probability, consensus can be reached in time $O(\log(n))$ with $O(n \log{(n)}/s)$ communications.
\end{theorem}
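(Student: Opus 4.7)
The plan is to design a multi-phase protocol that reduces the $n$ initial votes to about $n/s$ ``delegate'' votes, runs a consensus subroutine on these delegates, and then broadcasts the outcome. The $s$ memory states are partitioned between a belief field, a level/confidence counter, and a short phase clock that synchronizes the stages.

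I would begin each node active at level $0$ carrying its initial bit. Whenever two active nodes at the same level meet, they merge via the three-state update rule, with one advancing to the next level and the other becoming passive; after $\log_2 s$ levels, about $n/s$ active delegates remain at level $\log_2 s$. A mean-field argument on the decaying active-node density shows this \emph{compression} stage completes in time $O(s) = O((\log n)^\theta)$ with $O(n \log s)$ communications (dominated by wasted interactions between active and passive nodes), both well inside the respective budgets. In the following \emph{consensus} stage, the $\Theta(n/s)$ delegates run a three-state-like protocol; because delegates encounter one another only at rate $1/s$, passive nodes act as one-hop relays, briefly storing a delegate's state and forwarding it on their next interaction. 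With this routing, consensus completes in time $O(\log n)$ using $O(n \log n / s)$ communications. Finally, \emph{broadcast} propagates the agreed bit and costs only $O(n)$ communications over $O(\log n)$ time. A small counter using $O(\log s)$ states, combined with Chernoff concentration on the Poisson wakeup counts, suffices to synchronize the stage transitions across the population.

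The principal obstacle is verifying that the majority gap survives the compression stage. Although the signed sum of beliefs drops whenever two same-bit delegates merge, the number of active nodes drops in tandem, so the \emph{mean} belief among active nodes is a martingale in expectation across each merge. An Azuma-style concentration argument then bounds the deviation of this normalized mean by $o(\delta)$ over the $\log_2 s$ merge levels, preserving an $\Omega(n/s)$-sized delegate-level majority with high probability whenever the initial bias $\delta$ is bounded away from zero. A second, subtler obstacle is establishing constant-rate convergence of the sparse relay-augmented consensus in time $O(\log n)$; this calls for extending the mean-field ODE analysis of the three-state protocol from \cite{5062181} to the routed setting and showing that passive-mediated flow does not introduce an $s$-factor slowdown. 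The constraint $\theta < 1/2$ then enters cleanly: it leaves comfortable slack to fit the level, phase, and belief fields into $s$ states simultaneously, while keeping the compression time $O(s)$ well below the $O(\log n)$ deadline and keeping the $O(n \log s)$ auxiliary communication below the $O(n \log n / s)$ target.
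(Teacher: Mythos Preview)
Your architecture---compress to $n/s$ delegates, run routed three-state among them, then broadcast---is genuinely different from the paper's, which has a single time-homogeneous protocol (leaders continuously push to and pull from counter-equipped followers) and places the three phases only in the \emph{analysis}. The central gap is your consensus stage. You correctly flag that showing the relay avoids an $s$-factor slowdown is the crux, but the one-hop relay you describe does \emph{not} avoid it: a message deposited by a delegate reaches another delegate on the forwarding hop with probability only $1/s$, so the effective delegate--delegate interaction rate is $\Theta(1/s)$ per delegate, giving $\Theta(s\log n)$ time; if instead messages hop until they hit a delegate, each costs $\Theta(s)$ communications and the total blows up to $\Theta(n\log n)$. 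The variant where passives store the last delegate state they saw and delegates pull from them fares no better, since the stored state is $\Theta(s)$ time units stale on average and the self-consistent minority decay rate is then $O(1/s)$. The paper confirms this is not an artefact---its Phase~II has decay rate exactly $\Theta(1/s)$---and resolves it not by smarter routing but by a two-stage potential argument: after $\Theta(s^2)$ time the follower minority $\beta$ drops below $2^{-\Theta(s)}$, and then a new potential $\Lambda_3=\alpha+d\,\delta+\sum_j a_j\beta_j$ with geometrically decaying age-bin weights $a_j=2^{1-j}/s$ decays at constant rate. This is the real reason for $\theta<1/2$: it forces $s^2=o(\log n)$ so Phase~II fits the time budget, not, as you suggest, to leave room in $s$ states for level and phase fields.

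A secondary issue is synchronization: your explicit stage transitions require nodes to time a $\Theta(\log n)$-long consensus stage before switching to broadcast, but with only $s=O((\log n)^\theta)$ states a node cannot count that high. The paper sidesteps this entirely because its protocol has no stage transitions; the phases are purely analytic.
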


\begin{remark}
When $s < \log{\log{n}}$, the upper bound from Theorem~\ref{thm:main} matches up to constants the lower bound from \cite{comm_init}. For $s < \left( \log{\log{n}}\right)^3$, Theorem~\ref{thm:main} improves the state of the art, while for $s >  \left( \log{\log{n}}\right)^3$ there is an algorithm with $O(n)$ communication. 
\end{remark}

\paragraph{Algorithm} 
Our first main contribution is to design the following consensus algorithm, which achieves the communication and time bounds of Theorem~\ref{thm:main}:


\begin{itemize}
	\item Self-select a number of $n/s$ \emph{leaders}\footnote{This step can be implemented in a decentralized way by using the randomness in the interactions. See ~\cite{comm_init} for how this is done. Note this has a negligible effect on the protocol and proofs work the same.}. The rest of the nodes will be \emph{followers}.
	\item Each follower keeps a counter that can take a value from $\{1, \ldots, 8 s+1\}$. The counter value of each follower is initialized uniformly at random from $\{1, \ldots, 8s\}$. We denote by bin $j$ the set of followers with counter value $j$.
	The followers are divided in two groups:
	\begin{itemize}
		\item \emph{Informed nodes:} with counter value in $\{1, \ldots, 8 s \}$. When the clock of such a node rings, the node increases its counter value by one.
		\item \emph{Uninformed nodes:} with counter value $8 s+1$. When the clock of such a node rings, they communicate, and if they reach an informed node, then they adopt its bit and counter value.
	\end{itemize}
	\item Each leader has a belief, which can be $0$, $1$, or \q. When the clock of a leader rings, the leader acts as follows depending on its belief:
	\begin{itemize}
		\item 0 or 1: If the leader gets matched with an informed follower, then the leader flips a fair coin. If the coin turns heads, then the leader pushes, and if it turns tails, then the leader pulls from the follower.
		
		In case of a pull, if the bits match, then the leader keeps its belief, and if they don't match, then the leader switches to \q.
		
		\item \q: If the leader meets an informed node, then it adopts the informed node's bit.
	\end{itemize}
\end{itemize}

To understand this protocol, we study the following deterministic approximation, which can be understood as taking the number of nodes $n$ to $\infty$ while keeping the parameter $s$ fixed. The deterministic system will be given by some differential equations that we analyze and then show the random system will closely approximate it for large enough $n$. We keep track of the following fractions, the denominator for all of which is the number of nodes $n$ (and then take the limit of $n \to \infty$):
\begin{description}
	\item[$\bullet$] $\alpha(t)$: Fraction of nodes that are leaders with the incorrect bit at time $t$.
	\item[$\bullet$] $\delta(t)$: Fraction of nodes that are leaders and have belief \q.
	\item[$\bullet$] $\beta_j(t)$: Fraction of nodes that are followers, informed, and have the incorrect bit and counter value $j$.
	\item[$\bullet$] $\beta(t) = \sum_{j=1}^{8s} \beta_j(t)$: Fraction of nodes that are followers, informed, and have the incorrect bit.
	\item[$\bullet$] $\gamma_j(t)$: Fraction of nodes that are followers with counter value $j$.
	\item[$\bullet$] $u(t)$: Fraction of nodes that are uninformed.
\end{description}

We let $\widetilde{\alpha}(j)$ be the number of leaders with the incorrect bit after $j$ clock rings in the random system given by our protocol.
The execution of our protocol for $n=6000$ nodes with $s=5$ is illustrated in Figure 1. 

\begin{figure}[h!]
	\centering
	\subfigure[$\frac{\widetilde{\alpha}(t)}{n} \cdot s$ (in red) and $\frac{\widetilde{\beta}(t)}{n} \cdot \frac{s}{s-1}$  (in blue) over time.]
	{
		\includegraphics[scale=0.56]{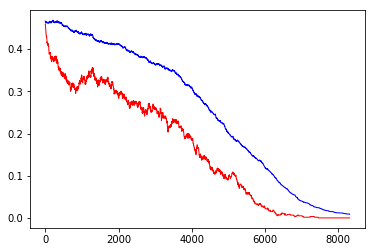}
	}
	\subfigure[${{\alpha}(t)} \cdot s$ (in red) and ${{\beta}(t)} \cdot \frac{s}{s-1}$ (in blue) over time.]
	{
		\includegraphics[scale=0.56]{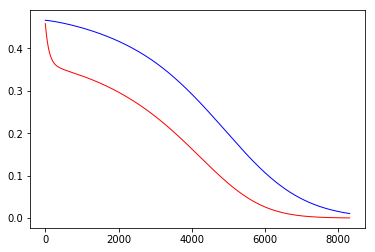}
	}
	\subfigure[$\frac{\widetilde{\delta}(t)}{n} \cdot s$ over time.]
	{
		\includegraphics[scale=1.5]{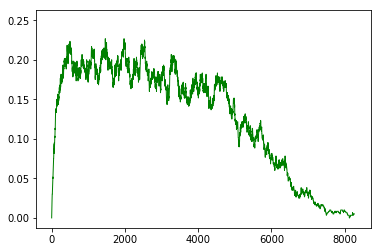}
	}
	\subfigure[${{\delta}(t)} \cdot s$ over time.]
	{
		\includegraphics[scale=1.5]{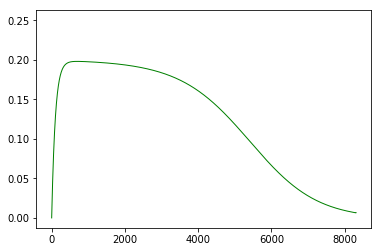}
	}
	\caption{The random and deterministic system for $n=3000$, $s=5$, and initial minority $45\%$ (that is, $\rho=0.1$).}
\end{figure}

\subsection{Overview}

To prove Theorem 1, it suffices to show that w.h.p. consensus is reached in time $O(\log(n)$ and the rate of communication is $O(1/s)$. i.e. only one in $\Omega(s)$ clock rings leads to a communication. The latter claim will follow if we show that there at most $O(n/s)$ uninformed nodes throughout, since the only nodes that initiate communications are the $n/s$ leaders and the uninformed nodes.

\medskip

The proof of Theorem 1 consists of three phases:
\begin{description} 
\item[I.]	From near tie to a dominant majority in time $\Theta(s)$;
\item[II.]	Exponential decay of minority at rate $\Omega(1/s)$ for $\Theta(s^2)$ time units;
\item[III.]	Exponential decay of minority at constant rate for $\Theta(\log n)$ time units.
\end{description}
Each phase is first proved for a deterministic approximation to the process that solves a system of $\Theta(s)$ coupled differential equations. Then martingale arguments, described further below, enable us to analyze the random system.
The system of differential equations is too involved to solve directly, so four different potential functions are used to control it.

\medskip

{\bf Phase I} is the most delicate. The key to this phase is to change variables and consider the relative advantage of the majority among decisive leaders and informed followers. Thus the new variables are 
$$\xi=\frac{\left(1/s-\delta-\alpha\right)-\alpha}{1/s-\delta} \quad \text{and} \quad 
\eta_j=\frac{(\gamma_j-\beta_j)-\beta_j}{\gamma_j} \,.
$$
It is quite intuitive (and easy to verify, see Proposition 1) that the minimal relative advantage
$$ \Phi = \min\Bigl\{  \xi, \eta_1, \ldots, \eta_{8s} \Bigr\}$$
is nondecreasing, but to bound the convergence time this must be made more quantitative. 

In our protocol, followers adopt the belief bits sent to them, but leaders only flip their belief bits after pulling contrary bits from two or more followers. This the relative advantage of the majority among leaders grows faster than the corresponding advantage among followers. Our proof uses this to define a modified potential function $\Psi$, see (\ref {psi_fun}), that grows at rate $\Omega(s/rho)$ as long as $\Phi$ is bounded away from 1, where $\rho$ is the initial value of $\Phi$. We deduce in Proposition \ref{prop:phase1_progress} that $\Phi$ will exceed a prescribed level $\lambda<1$ by time $T_1=O\left(\frac{s}{\rho(1-\lambda)}\right)$. 
	
	\medskip 
	
	{\bf Phase II}  starts at time $T_1$. In this phase (using a bound on $\beta$ obtained in phase I) we show that a linear potential function in the original variables 
	$$\Lambda_2=\alpha+\frac{\delta}{16}+\frac{\beta}{4}$$
	exhibits exponential decay at rate $\Omega(1/s)$.
	
	\medskip

{\bf Phase III} starts at time $T_2=\Theta(s^2)$. At this time, $\beta$ is exponentially small in $s$ (via phase II). This allows us to use a modified
potential function $\Lambda_3$ (a linear combination of $\alpha, \delta$ and the $\beta_j$)  that exhibits true exponential decay. We deduce that $\delta$ and all the minority fractions $\alpha,\beta_j$ will be well below $1/n$ by time $O(\log n)$. 

To handle the random fluctuations in phases I and II we use the differential equation method for approximating stochastic processes developed by Kurtz~\cite{kurtz} and Wormald~\cite{wormald}, in the optimized version presented by Warnke~\cite{warnke}. Such approximations cannot be used in phase III, because the approximation error between the random process and the deterministic one scaled by $n$ is of order $\Omega(n^{1/2})$.

In the main text, we focus on the case where the initial advantage $\rho=\Phi(0)$ is a positive constant. However, even if $\rho$ tends to zero with $n$, the protocol works and reaches consensus on the majority value w.h.p provided 
$\rho>n^{-\sigma}$ for some $\sigma \in (0,1/2)$. See the discussion in the concluding remarks.



\subsection{Related literature} \label{sec:further_related_work}

\cite{AngluinAER07} studies the computational power of population protocols and gives a precise characterizations of the class of predicates that can be computed in a stable way on complete interaction graphs (i.e. where each agent can get matched with any other agent).

The tradeoffs between time and memory required to reach consensus have been studied in a rich body of literature. \cite{AAEGR17} studies majority and leader election, showing a  unified lower bound for the two problems that relates the space available per node with the time complexity achievable by a protocol. In particular, solving these tasks when each agents has $O(\log \log{n})$ memory states takes $\Omega(n/\polylog(n))$ time in expectation. On the other hand both tasks can be solved in polylogarithmic time with $O(\log(n)^2)$ memory states per agent. \cite{AlistarhAG18} shows a lower bound of $\Omega(\log{n})$ states for any consensus protocol that stabilizes in time $O(n^{1-c})$, for any constant $c > 0$. This result is complemented by a consensus protocol that uses $O(\log{n})$ states and stabilizes in $O(\log^{2}{n})$ time.

\cite{PVV09} studied the three state protocol and suggested a symmetric variant that converges faster, where both the initiator and responder update their states. 
Their paper shows the differential equations that characterize the three state protocol and inspired the approach in the present paper.
\cite{DraiefV12} studies the problem of exact consensus, where the protocol must converge to the correct final state with probability $1$ and the goal is bound the expected convergence time. Their work uses four memory states per agent to design a protocol that achieves consensus with probability $1$ from any starting configuration. 

Plurality consensus is the problem where initially each node has a color chosen from a set $\{1, \ldots, k\}$ and one of the colors has an initial advantage. The goal is that all the nodes eventually adopt the color found in highest proportion initially; bounds on the convergence time were given in~\cite{BecchettiCNPS15,BecchettiCNPST17}.

In the leader election problem, all the agents start from the same state, and the goal is that they reach an outcome where exactly one agent is in a special leader state. The time and memory required for leader election in population protocols were studied in~\cite{DotyS18}, which showed a lower bound of $\Omega(n)$ on the time required to select a leader with probability $1$, when each agent has a constant amount of space. \cite{AlistarhG15} showed that  poly-logarithmic stabilization time can be achieved by allowing $O(\log^3{n})$ states rather than constant number of states per agent. \cite{SOIKM20} designed a protocol that runs in $O(\log{n})$ parallel time in expectation with $O(\log{n})$ states per agent. 
\cite{BerenbrinkGK20} designs and analyzes a population protocol that uses $\Theta(\log \log{n})$ states per agent, and elects a leader in $O(n \log{n})$ interactions in expectation. 

\cite{MSA19} studies the proportion computation problem, where each agent starts in one of two states $A$ and $B$, and the goal is that each agent learns the fraction of agents that initially started in state $A$.
\cite{CGKK+15} studies Lotka-Volterra type of dynamics in the population protocol model,
where agents have types and the update rule is such that when an agent $a$ of type  $i$ interacts with an agent $b$ of type $j$ with $ a$ as the initiator, then $b$’s type becomes $i$ with probability $P_{ij}$. This update rule is a simple variant of the Lotka-Volterra update rule and can be used to model opinion dynamics in social networks. The work in \cite{CGKK+15} shows that any such protocol converges in time polynomial in $n$ when the interaction graph is complete, while convergence time can be exponential when interaction is restricted (e.g. the graph is a star).

\section{Deterministic system} \label{sec:deterministic_system_def}

We start by studying the deterministic dynamical system obtained by taking $n \to \infty$ in the algorithm. Recall the variables defined earlier for the deterministic system. We introduce a few auxiliary variables and deduce the differential equations for the deterministic system.

\medskip 

\noindent \textbf{Notation}: Let $\Gamma(t)$ denote the fraction of informed followers at time $t$ $$\Gamma(t) = 1 - \frac{1}{s}- u(t) = \sum_{j=1}^{8s} \gamma_{j}(t)$$ and $R(t)$ the rate at which informed nodes depart each bin $j$ (see equation \ref{eq:def_beta_j_dot}) $$R(t) = 1 + \frac{1}{2s} - \frac{\delta(t)}{2} - u(t).$$

\begin{lemma} \label{lem:def_deterministic}
The deterministic system obtained by taking $n \to \infty$ in the algorithm can be described by the following differential equations:
\begin{align}
\dot{\alpha} & = - \frac{\alpha}{2} \cdot \left(\Gamma - \beta \right) + \delta \cdot \beta \label{alpha_dot} \\
\dot{\delta} & = \frac{\alpha}{2} \cdot \left(\Gamma - \beta \right) + \frac{1/s - \alpha - \delta}{2} \cdot \beta - \delta \cdot \Gamma \\
\dot{u} & = \gamma_{c \cdot s} - u \cdot \Gamma \label{u_dot} \\
\label{eq:def_beta_j_dot}
\dot{\beta}_j & = \beta_{j-1} - \beta_j \cdot R, \; \; \forall j \in\{2, \ldots, 8s\}\\
\label{eq:def_beta_1_dot}
\dot{\beta}_1 & = -\beta_1 \cdot R + \frac{\alpha}{2} \cdot \Gamma \\
\dot{\gamma}_j & = \gamma_{j-1} - \gamma_j \cdot R, \; \; \forall j \in \{2, \ldots, 8s\} \label{gamma_j_dot} \\
\dot{\gamma}_1 & = - \gamma_1 \cdot R + \left(\frac{1}{2s} - \frac{\delta}{2}\right) \cdot \Gamma \label{gamma_1_dot}
\end{align}
\end{lemma}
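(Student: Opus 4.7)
The strategy is a mean-field drift calculation. In the random system each of the $n$ nodes carries a clock of rate $1$, so events occur at total rate $n$. For each state variable $X$ (one of $\alpha,\delta,u,\beta_j,\gamma_j$, each normalized by $n$), the plan is to compute the expected drift $n\cdot \mathbb{E}[\Delta X \mid \mathcal{F}_t]$ by enumerating every ring-event that modifies $X$; the ODEs in the statement are the pointwise $n\to\infty$ limits of these drifts. The bookkeeping splits naturally into leader variables, bin variables, and the uninformed variable.

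For the leader variables, only rings initiated by a leader can modify a leader's state (followers never modify leaders). An incorrect leader (mass $\alpha$) whose clock rings picks a uniform partner: with probability $\Gamma$ the partner is an informed follower, with probability $1/2$ the coin is tails (pull), and with conditional probability $(\Gamma-\beta)/\Gamma$ the pulled bit is correct, flipping the leader to \q. This gives the $-\alpha(\Gamma-\beta)/2$ term in $\dot\alpha$. A \q-leader (mass $\delta$) that meets an informed follower (prob.\ $\Gamma$) adopts an incorrect bit with probability $\beta/\Gamma$, yielding the $+\delta\beta$ term in $\dot\alpha$ and, summed over all informed partners, the total outflow $-\delta\Gamma$ in $\dot\delta$. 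The two remaining inflows in $\dot\delta$, namely $\alpha(\Gamma-\beta)/2$ and $(1/s-\alpha-\delta)\beta/2$, come symmetrically from incorrect leaders pulling a correct bit and from correct leaders pulling an incorrect bit.

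For the bin variables with $j\ge 2$, exactly four ring-events alter $\gamma_j$: (i) a bin-$(j-1)$ node rings and advances, giving inflow $\gamma_{j-1}$; (ii) a bin-$j$ node rings and advances, giving outflow $\gamma_j$; (iii) a decisive leader (rate $1/s-\delta$) meets a bin-$j$ follower and pushes on heads (prob.\ $1/2$), resetting the counter to $1$, giving outflow $(1/s-\delta)\gamma_j/2$; (iv) an uninformed follower (rate $u$) picks a bin-$j$ follower and copies its bit-and-counter, cloning a fresh occupant of bin $j$ while leaving the source intact, giving inflow $u\gamma_j$. Summing gives $\dot{\gamma}_j = \gamma_{j-1} - \gamma_j\bigl(1 + 1/(2s) - \delta/2 - u\bigr) = \gamma_{j-1} - \gamma_j R$. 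The analogous four events for $\beta_j$ (restricted to the subset whose bit is incorrect) yield $\dot\beta_j = \beta_{j-1} - \beta_j R$, because every push, pull, or uninformed adoption to a bin-$j$ follower preserves the follower's bit-correctness at bin $j$ itself (counter resets move the follower out of bin $j$, where they are counted in the $\beta_1$ inflow). The $j=1$ equations add the inflow from all leader pushes, which always reset the counter to $1$: every such push feeds $\gamma_1$ at rate $(1/s-\delta)\Gamma/2$, while only the $\alpha$-leader pushes feed $\beta_1$, at rate $\alpha\Gamma/2$.

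The last equation, $\dot u = \gamma_{8s}-u\Gamma$, is immediate: one becomes uninformed only when a bin-$8s$ node's own clock rings, and leaves the uninformed pool only when an uninformed clock rings (rate $u$) and meets an informed follower (prob.\ $\Gamma$). The step that requires the most care is case (iv) above: an uninformed follower acquires the source's bit \emph{and} counter simultaneously, so such interactions contribute a negative term to the $R$ coefficient even though the source bin-$j$ node is unchanged; overlooking this would produce the wrong coefficient $1+1/(2s)-\delta/2$ in place of $R$. Once the per-event drifts are assembled, the ODE limit follows by dividing by $1/n$ and letting $n\to\infty$ pointwise on admissible states; the separate quantitative question of how tightly the random trajectories follow these ODEs on a finite time horizon is handled later by the Kurtz--Wormald--Warnke differential-equation method invoked in Phases~I and~II.
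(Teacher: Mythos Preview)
Your proposal is correct and follows essentially the same approach as the paper: both compute the mean-field drift of each state variable by enumerating the ring events that can change it, then read off the ODE. Your bookkeeping is slightly more systematic---you organize the bin-variable analysis around four event types and apply them uniformly, whereas the paper writes out each equation from scratch---but the content and the resulting computations are identical. Your closing remark flagging the uninformed-adoption term $+u\gamma_j$ as the subtle contribution to $R$ is a nice touch absent from the paper's terser derivation.
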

\begin{proof}
	Note that the fraction of correct leaders is $1/s - \alpha(t) - \delta(t)$ and the fraction of correctly informed followers is $1 - 1/s - u(t) - \beta(t)$.
	
	The identity for $\dot{\alpha}$ holds since the first term in the update for $\dot{\alpha}(t) $ counts the expected number of leaders with the incorrect bit at time $t$ that pull from an informed node with the correct bit. The second term counts leaders with \q that pull from an informed node with the wrong bit.
	
	The identity for $\dot{\delta}$ holds since the first term counts wrong leaders that pulled from a correct informed node, the second term counts correct leaders that pulled from an incorrect informed node, and the last term counts leaders with \q that pulled from an informed node.
	
	
	For $\dot{\beta}_1$ we get the following identities, where the first term counts the followers with counter value $1$ that get a clock tick and increase their counter to 2, the second term counts the followers that get pushed a correct bit from the leaders (and so leave the set of followers with the wrong bit and counter value 1), the third term counts the followers that get pushed the wrong bit from leaders with incorrect information, and the last term counts uninformed followers that pull the wrong bit from followers with counter value 1. Thus
	\begin{align}
	\dot{\beta}_1 & = - \beta_1 - \frac{1/s - \delta - \alpha}{2} \cdot \beta_1 + \frac{\alpha}{2} \cdot \left( \Gamma - \beta_1 \right) + u \cdot \beta_1 \\
	& = -\beta_1 \cdot \Bigl( 1 + \frac{1}{2s} - \frac{\delta}{2} - u\Bigr) +  \frac{\alpha}{2} \cdot \Gamma \\
	& = -\beta_1 \cdot R + \frac{\alpha}{2} \cdot \Gamma
	\end{align}
	
	For $\dot{\beta}_j$, where $j \in \{2, \ldots, 8s\}$, we get the update rule
	\begin{align}
	\dot{\beta}_j & = \beta_{j-1} - \frac{1/s - \delta}{2} \cdot \beta_j + u \beta_j, \; \; \forall j \in\{2, \ldots, 8s\} \notag \\
	& = \beta_{j-1} - \beta_j \cdot R
	\end{align}
	
	For $\dot{\gamma}_j$, where $j \in \{2, \ldots, 8s\}$, we get the update rule
	\begin{align}
	\dot{\gamma}_j & = \gamma_{j-1} - \gamma_j \cdot \Bigl( 1 + \frac{1}{2s} -  \frac{\delta}{2} - u\Bigr) \notag \\
	& = \gamma_{j-1} - \gamma_j \cdot R
	\end{align}
	
	For $\dot{\gamma}_1$, we get the update rule
	\begin{align}
	\dot{\gamma}_1 & =  - \gamma_1 + \left(\frac{1}{2s} - \frac{\delta}{2} \right) \cdot \left( \Gamma - \gamma_1 \right) + u \cdot \gamma_1 \notag \\
	 &  = - \gamma_1 \cdot R + \left(\frac{1}{2s} - \frac{\delta}{2}\right) \cdot \Gamma
	\end{align}
\end{proof}

\begin{figure}[h!]
	\centering
	\subfigure[$\alpha$ (in red) and $\beta$ (in blue) over time. The values are normalized, so the plot shows in fact $\alpha(t) \cdot s$ and $\beta(t) \cdot s/(s-1)$.]
	{
		\includegraphics[scale=0.56]{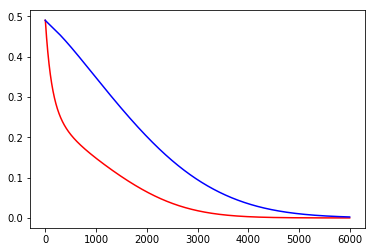}
		\label{fig:alpha_beta_deterministic}
	}
	\subfigure[$\delta$ over time]
	{
		\includegraphics[scale = 0.56]{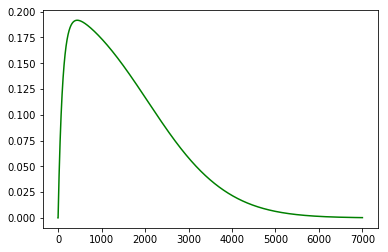}
		\label{fig:delta_deterministic}
	}
	\subfigure[$\gamma_j$ over time]
	{
		\includegraphics[scale=0.56]{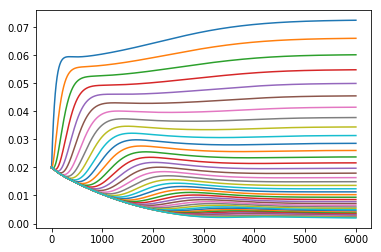}
		\label{fig:gamma_deterministic}
	}
	\subfigure[$\beta_j(t)$ over time]
	{
		\includegraphics[scale=0.56]{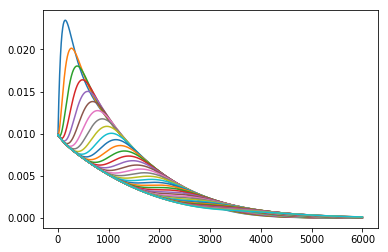}
		\label{fig:beta_j_deterministic}
	}
\subfigure[$\log \left(\frac{\alpha}{1/s - \delta} \right)$ and $\log\left(\frac{\beta_j(t)}{\gamma_j(t)}\right)$, for $j = \{1, \ldots, 8s\}$ over time. Note that $\log \left(\frac{\alpha}{1/s - \delta} \right)$ - shown in blue - is ahead of all the other variables in the plot.]
{
	\includegraphics[scale=0.7]{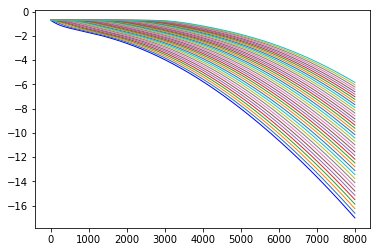}
	\label{fig:log_ratio_beta_gamma_j_deterministic}
}
	\caption{The deterministic system for $s=5$ and initial minority $49\%$ (that is, $\rho=0.02$).}
	\label{fig:alpha_beta_gamma_delta_deterministic}
\end{figure}

\begin{figure}[h!]
	\centering
	\subfigure[The advantage variables $\xi$ (in red) and $y$ (in blue) over time.]
	{
		\includegraphics[scale=0.56]{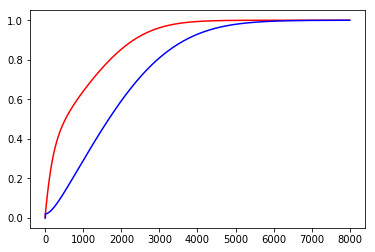}
		\label{fig:xi_y_deterministic}
	}
	\subfigure[The advantage variables $\eta_j$ over time, for $j \in \{2, \ldots, 8s\}$.]
	{
		\includegraphics[scale = 0.56]{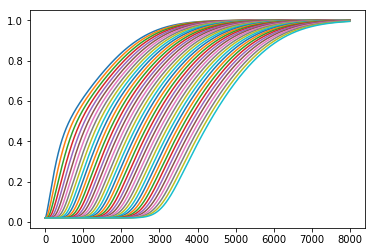}
		\label{fig:eta_j_deterministic}
	}
	\caption{The deterministic system for $s=5$ and initial minority $49\%$ (that is, $\rho=0.02$).}
	\label{fig:xi_eta_y_deterministic}
\end{figure}

We show the variables that define the deterministic system are bounded from below. For $\alpha$, we will study the variable $w := 1/s - \delta - \alpha$, while for $\beta_j$ we study $w_j := \gamma_j - \beta_j$. Then $\dot{w}_j = w_{j-1} - w_j R$.

\begin{lemma} \label{lem:bounded_differential}
Suppose $0 < \alpha(0) < 1/s - \delta(0)$ and $0 < \beta_j(0) < \gamma_j(0)$ for all $j$.
The following functions have a strictly positive derivative
\begin{align} \label{eq:positive_derivatives}
g_{\alpha}(t) & = \alpha(t) \cdot e^{\frac{t}{2}(1-\frac{1}{s})}, \, \, g_{\delta}(t) = \delta(t)\cdot  e^{t(1-\frac{1}{s})} , \, \,
g_{\beta_j}(t) = \beta_j(t) \cdot e^{t(1+\frac{1}{2s})} , \, \,
g_{u}(t) = u(t) \cdot e^{t(1-\frac{1}{s})} \notag \\
g_{\gamma_j}(t) & =  \gamma_j(t) \cdot e^{t(1+\frac{1}{2s})} , \, \,
g_{w}(t) = w(t) \cdot e^{\frac{t}{2} (1 - \frac{1}{s})} , \, \,
g_{w_j}(t) = w_j(t) \cdot e^{t(1+\frac{1}{2s})}
\end{align}
In particular, for all $t > 0$, all the variables $\alpha(t),\delta(t),\beta_j(t),u(t),\gamma_j(t),w(t),w_j(t)$ are strictly positive  and $\alpha(t) > e^{\frac{-t}{2}(1-\frac{1}{s})} \cdot \alpha(0)$.
\end{lemma}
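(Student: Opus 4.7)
The plan is to show, for each variable $x \in \{\alpha,\delta,\beta_j,u,\gamma_j,w,w_j\}$, that its ODE from Lemma~\ref{lem:def_deterministic} can be rewritten in the form $\dot{x} = -c_x(t)\,x + f_x(t)$ with $f_x(t) \geq 0$ and $c_x(t)$ bounded above by the constant $c_x^{\max}$ appearing in the exponent of $g_x$. Once that is in place, the identity
\[
\dot{g}_x(t) \;=\; e^{c_x^{\max} t}\,\bigl[(c_x^{\max} - c_x(t))\,x(t) + f_x(t)\bigr]
\]
immediately gives $\dot{g}_x \geq 0$, and strict positivity whenever either $x>0$ with $c_x^{\max} > c_x(t)$, or $f_x > 0$.

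The subtlety is that the bounds on $c_x(t)$ and the non-negativity of $f_x$ both presuppose the other variables are non-negative, so the argument is a bootstrap. First I derive ODEs for the auxiliary quantities $w = 1/s - \alpha - \delta$ and $w_j = \gamma_j - \beta_j$: summing the equations for $\dot\alpha$ and $\dot\delta$ gives $\dot{w} = -\tfrac{\beta}{2}\,w + \delta(\Gamma-\beta)$, and subtracting the $\beta_j$ equation from the $\gamma_j$ equation yields $\dot{w}_j = w_{j-1} - w_j R$ for $j \ge 2$ and $\dot{w}_1 = -w_1 R + \tfrac{w}{2}\,\Gamma$. Set $T^* = \inf\{t>0 : \min_x x(t) = 0\}$; continuity and the hypotheses give $T^*>0$. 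On $[0,T^*)$ every variable is non-negative, which forces the bounds $\Gamma - \beta \leq 1 - 1/s$, $\Gamma \leq 1 - 1/s$, $R \leq 1 + 1/(2s)$ and $\beta/2 \leq \tfrac12(1-1/s)$, and makes each source $f_x$ a non-negative combination of the variables. The identity above then makes $g_x$ non-decreasing, so $x(t) \geq x(0)\,e^{-c_x^{\max} t} > 0$ on $[0,T^*)$, contradicting $T^* < \infty$. This simultaneously proves the ``in particular'' positivity claim.

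With positivity in hand, strict positivity of $\dot{g}_x$ follows by inspection of each case. In most cases the source is already strictly positive: $f_{\beta_j} = \beta_{j-1}$ and $f_{\gamma_j} = \gamma_{j-1}$ for $j \geq 2$; $f_{\beta_1} = \tfrac{\alpha}{2}\Gamma$; $f_{\gamma_1} = (\tfrac{1}{2s}-\tfrac{\delta}{2})\Gamma$, strictly positive because $\alpha>0$ and $\alpha+\delta \leq 1/s$ force $\delta < 1/s$; $f_u = \gamma_{cs}$; $f_\delta = \tfrac{\alpha}{2}(\Gamma-\beta)+\tfrac{w}{2}\beta$; and analogously for $f_{w_j}$. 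For $\alpha$ and $w$ the source can degenerate, but the gap $c_x^{\max}-c_x(t)$ is strictly positive: for $\alpha$ it is $\tfrac12(u+\beta)>0$ since $\beta \geq \beta_1 > 0$, and for $w$ it is $\tfrac12((1-1/s)-\beta) > 0$ since $\beta<\Gamma$ when every $w_j > 0$. The lower bound $\alpha(t) > e^{-t(1-1/s)/2}\alpha(0)$ then falls out from monotonicity of $g_\alpha$. The step requiring the most care is the bootstrap itself --- specifically, verifying that the auxiliary quantities $w$ and $w_j$ stay positive, since this is precisely what keeps $\beta<\Gamma \leq 1-1/s$ and hence $c_x(t) \leq c_x^{\max}$ intact.
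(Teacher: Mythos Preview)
Your approach is the same as the paper's: both are first-violation bootstrap arguments showing each $g_x'>0$, and your rewriting $\dot x=-c_x(t)\,x+f_x(t)$ with $c_x(t)\le c_x^{\max}$ is simply a clean way of organizing the ten case-checks the paper carries out explicitly (the paper computes each $g_x'$ directly and verifies it is positive at the putative first violation time).

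One small slip to patch: the hypotheses do not force $\delta(0)>0$ or $u(0)>0$ --- in the paper's application both start at zero --- so ``continuity and the hypotheses give $T^*>0$'' is not yet justified, and the lower bound $x(t)\ge x(0)\,e^{-c_x^{\max}t}>0$ is vacuous for $x\in\{\delta,u\}$ when $x(0)=0$. The fix is immediate within your own framework: since $f_\delta(0)=\tfrac{\alpha(0)}{2}(\Gamma(0)-\beta(0))+\tfrac{w(0)}{2}\beta(0)>0$ and $f_u(0)=\gamma_{8s}(0)>0$, you get $\dot g_\delta(0),\dot g_u(0)>0$, hence $\delta,u>0$ on some interval $(0,\epsilon)$ and $T^*>0$; then for the contradiction at $T^*$ use any $t_0\in(0,T^*)$ as the base point in place of $0$. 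The paper sidesteps this by phrasing its first violation as ``first time some $g_x'\le 0$'' and only requiring $\delta(t),u(t)\ge 0$ (rather than $>0$) at that time.
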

\begin{proof}
We will use the first violation method.

Let $t \geq 0$ be the first time at which the derivative of a function in (\ref{eq:positive_derivatives}) is non-positive. Then at time $t$, we have that $u(t),\delta(t) \geq 0$ and all the other variables are strictly positive, i.e.
\begin{align}  \label{eq:positive_vars}
\alpha(t) > 0, \delta(t) \geq 0, \beta_j(t) > 0, u(t) \geq 0, \gamma_j(t) > 0, \Gamma(t) > 0, w(t) > 0, w_j(t)>0
\end{align}
We will check that all the functions in (\ref{eq:positive_derivatives}) have strictly positive derivatives at time $t$, which will give a contradiction.

\medskip

\noindent \emph{Case 1}: Consider $g_{\alpha}'(t)$. Then by (\ref{eq:positive_vars}),
\begin{align}
g_{\alpha}'(t) & = e^{\frac{t}{2}(1 - \frac{1}{s})} \cdot \left[ \frac{\alpha(t)}{2} \left( u(t) + \beta(t)\right) + \delta(t) \beta(t) \right] > 0\notag
\end{align}

\medskip

\noindent \emph{Case 2}: Consider $g_{\delta}'(t)$. Then by (\ref{eq:positive_vars})
\begin{align}
g_{\delta}'(t) & = e^{t(1-\frac{1}{s})} \cdot \left[\alpha(t) \beta(t) + \frac{\alpha(t)}{2} \Gamma(t) + \left(\frac{1}{s} - \delta(t)\right) \frac{\beta(t)}{2} + \delta(t)u(t) \right] > 0, \notag
\end{align}
where we used the fact that
 $1/s - \delta(t)> w(t)> 0$.

\medskip

\noindent \emph{Case 3}: Consider $g_{\beta_1}'(t)$. Then by (\ref{eq:positive_vars})
\begin{align}
g_{\beta_1}'(t) & = e^{t(1+\frac{1}{2s})} \cdot \left[ \beta_1(t) \cdot \left(u(t) + \frac{\delta(t)}{2} \right) + \frac{\alpha(t)}{2} \Gamma(t) \right] > 0 \notag
\end{align}

\medskip

\noindent \emph{Case 4}: Consider $g_{\beta_j}'(t)$, for $j \geq 2$. Then by (\ref{eq:positive_vars})
\begin{align}
g_{\beta_j}'(t) & = e^{t(1+\frac{1}{2s})} \cdot \left[ \beta_{j-1}(t) + \beta_j(t) \cdot \left(u(t) + \frac{\delta(t)}{2} \right) \right] > 0\notag
\end{align}

\medskip

\noindent \emph{Case 5}: Consider $g_{u}'(t)$. Using the identity $1 - 1/s - \Gamma(t) = u(t)$ and (\ref{eq:positive_vars}),
\begin{align}
g_{u}'(t) & = e^{t(1 - \frac{1}{s})} \cdot \left[\gamma_{cs}(t) + u(t)^2 \right] > 0 \notag
\end{align}

\medskip

\noindent \emph{Case 6}: Consider $g_{\gamma_1}'(t)$. Since $w(t) > 0$, we get that $1/s - \delta(t) > 0$, so
\begin{align}
g_{\gamma_1}'(t) & = e^{t(1+\frac{1}{2s})} \cdot \left[ \gamma_1(t) \cdot \left(u(t) + \frac{\delta(t)}{2}\right) + \frac{\Gamma(t)}{2} \left(\frac{1}{s} - \delta(t)\right) \right] > 0 \notag
\end{align}

\noindent \emph{Case 7}: Consider $g_{\gamma_j}'(t)$ for $j \geq 2$. Then by (\ref{eq:positive_vars})
\begin{align}
g_{\gamma_j}'(t) &=  e^{t(1+\frac{1}{2s})} \cdot \left[ \gamma_{j-1}(t) + \gamma_j(t) \cdot \left(u(t) + \frac{\delta(t)}{2}\right) \right] > 0 \notag
\end{align}

\medskip

\noindent \emph{Case 8}: Consider $g_{w}'(t)$. Since $w_j > 0$ for all $j$, we obtain that $\Gamma(t) - \beta(t) > 0$. Moreover, $1 - 1/s - \beta(t) \geq 1 - 1/s - u(t) - \beta(t) = \Gamma(t) - \beta(t) > 0$. Then by (\ref{eq:positive_vars})
\begin{align}
g_{w}'(t) & = e^{\frac{t}{2}(1-\frac{1}{s})} \cdot \left[ \delta(t) \cdot \left(\Gamma(t) - \beta(t)\right) + \frac{w(t)}{2}\left(1 - \frac{1}{s} - \beta(t)\right) \right] > 0 \notag
\end{align}

\medskip

\noindent \emph{Case 9}: Consider $g_{w_j}'(t)$ for $j \geq 2$. Then by (\ref{eq:positive_vars})
\begin{align}
g_{w_j}'(t) & = e^{t(1 + \frac{1}{2s})} \cdot \left[ w_j(t) \cdot \left(u(t) + \frac{\delta(t)}{2} \right) + w_{j-1}(t) \right] > 0
\end{align}
\medskip

\noindent \emph{Case 10}: Consider $g_{w_1}'(t)$. Then at time $t$ it still is the case that $w_1(t) > 0$; all the other variables are strictly positive as well. Then by (\ref{eq:positive_vars})
\begin{align}
g_{w_1}'(t) &= e^{t(1 + \frac{1}{2s})} \cdot \left[  w_1(t) \cdot \left(u(t) + \frac{\delta(t)}{2}\right) + w(t) \cdot \frac{\Gamma(t)}{2} \right] > 0
\end{align}
Thus all the functions in the lemma statement must have strictly positive derivatives.
\end{proof}

Let $T > 0 $ be an upper bound on time. Define a domain large enough to contain all the values that the variables may take:
\begin{align} \label{def:domain_0}
D_0 = & \Bigl\{(\vec{y}) \mid \vec{y} =
\left(\alpha, \{\beta_j\}, \{\gamma_j\}, \delta)\}\right); 0 < \alpha < \frac{1}{s} ; 0 < \beta_j < \gamma_j ; \sum_{j=1}^{8s} \gamma_j < 1 ; \frac{-1}{s} < \delta < \frac{1}{s} \Bigr\}
\end{align}

We use the following existence theorem, adapted from Hurewicz~\cite{hurewitz_book}.

\medskip

\noindent \textbf{Theorem} [\cite{hurewitz_book}, Theorem 11, pp. 32].
\emph{If in some bounded $a$-dimensional domain $D_0$, the function $F:D_0 \to {\mathbb R}$ satisfies a Lipschitz condition, then the solution of $\frac{dy}{dt}=F(y)$ passing through any point of $D_0$ may be uniquely extended arbitrarily close to the boundary of $D_0$.}

\medskip

\begin{corollary} \label{cor:hurewicz}
Suppose $0 < \alpha(0) < 1/s - \delta(0)$ and $0 < \beta_j(0) < \gamma_j(0)$ for all $j$. Then the system of differential equations in (\ref{alpha_dot}--\ref{gamma_1_dot}) with these initial conditions has a unique solution for all times $t \geq 0$.
\end{corollary}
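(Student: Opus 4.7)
The plan is to combine Hurewicz's local existence/uniqueness theorem with the a priori bounds of Lemma~\ref{lem:bounded_differential}, which together show the trajectory never reaches the boundary of $D_0$ in any finite time. First I would observe that the right-hand sides in equations~(\ref{alpha_dot}--\ref{gamma_1_dot}) are polynomial in the variables $(\alpha,\delta,u,\{\beta_j\},\{\gamma_j\})$, since $\Gamma$ and $R$ are linear functions of these variables and $\beta=\sum_j\beta_j$. Polynomials are smooth and hence Lipschitz on any bounded domain, so Hurewicz's theorem applies: starting from the given initial condition in the interior of $D_0$, there is a unique solution that can be extended arbitrarily close to the boundary $\partial D_0$.

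Next I would catalogue the faces of $\partial D_0$ and verify that none is reached in finite time. The boundary of $D_0$ is the union of the loci $\{\alpha=0\}$, $\{\alpha=1/s-\delta\}$ (i.e.\ $w=0$), $\{\beta_j=0\}$, $\{\beta_j=\gamma_j\}$ (i.e.\ $w_j=0$), $\{\sum_j\gamma_j=1\}$, and $\{|\delta|=1/s\}$. By Lemma~\ref{lem:bounded_differential}, each of $\alpha(t),\delta(t),\beta_j(t),u(t),\gamma_j(t),w(t),w_j(t)$ is strictly positive for every $t\ge 0$ for which the solution exists; this immediately rules out the first five types of faces. For the remaining face, strict positivity of $u$ together with the identity $u=1-1/s-\Gamma$ gives $\Gamma=\sum_j\gamma_j<1-1/s<1$, so this face is also avoided. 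Finally, $\delta$ is a fraction of nodes, and $0<\delta<1/s-\alpha<1/s$ follows from positivity of $w$ and of $\delta$, so $|\delta|<1/s$ as well.

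To conclude, fix any $T>0$ and suppose for contradiction that the maximal solution exists only on $[0,T^\ast)$ with $T^\ast<T$. By Hurewicz's theorem the trajectory must approach $\partial D_0$ as $t\uparrow T^\ast$. However, the monotonicity statements in Lemma~\ref{lem:bounded_differential} give explicit lower bounds of the form $\alpha(t)\ge \alpha(0)e^{-t(1-1/s)/2}$, and similarly for the other variables and for $w,w_j$; these lower bounds are uniform on $[0,T^\ast]$ and prevent the trajectory from reaching any face of $\partial D_0$. This contradiction shows $T^\ast\ge T$, and since $T$ was arbitrary the solution extends uniquely to all $t\ge 0$.

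The only mildly subtle step is the boundary analysis: one must check that Lemma~\ref{lem:bounded_differential} rules out \emph{every} face of $\partial D_0$, including the less obvious faces $w=0$, $w_j=0$, and $\Gamma=1-1/s$, which correspond to upper rather than lower bounds on the original variables. Once this bookkeeping is in place, the corollary is immediate from Hurewicz's theorem.
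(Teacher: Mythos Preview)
Your proposal is correct and follows the paper's approach exactly---apply Hurewicz's extension theorem and use the positivity bounds from Lemma~\ref{lem:bounded_differential} to show the trajectory never reaches $\partial D_0$---only with the boundary bookkeeping spelled out more fully than the paper's two-sentence proof. One small slip: the upper $\alpha$-face of $D_0$ as defined in~(\ref{def:domain_0}) is $\{\alpha = 1/s\}$, not $\{\alpha = 1/s - \delta\}$, but your argument still covers it since $w>0$ and $\delta\ge 0$ together force $\alpha < 1/s - \delta \le 1/s$.
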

\begin{proof}
Equation~(\ref{u_dot}) follows from equations (\ref{gamma_j_dot}) and (\ref{gamma_1_dot}) using the identity $u(t) = 1 - 1/s - \Gamma(t)$.
Lemma~\ref{lem:bounded_differential} and the preceding theorem ensure that for any $T > 0$, the system (\ref{alpha_dot}--\ref{gamma_1_dot}) has a unique solution in $[0,T)$.
\end{proof}

\section{Phase I} \label{sec:phase_1}

We start by studying the advantage of the correct nodes over the incorrect ones in several categories (leaders, nodes with counter value $j$, for $j=1, \ldots, 8s$), which are tracked by the following variables:
\begin{itemize}
	\item $\xi = \frac{1/s - \delta - 2\alpha}{1/s - \delta}$: the normalized advantage of the correct leaders over the incorrect leaders at time $t$, counted among the group of leaders without a question mark.
	\item $\eta_j = \frac{\gamma_j - 2 \beta_j}{\gamma_j}$, for $j = 1, \ldots, 8s$: the advantage of informed nodes with counter value $j$ with the correct bit over the nodes with the incorrect bit and counter $j$ at time $t$.
	\item $\eta = \frac{\Gamma - 2\beta}{\Gamma}$: the normalized advantage of the informed followers with the correct bit over the ones with the incorrect bit. Note that $\eta$ is a convex combination of the $\eta_j$'s, thus it can alternatively be written as:
	\begin{align} \label{convex_y}
	\eta = \sum_{j=1}^{8s} \frac{\gamma_j \cdot \eta_j}{\Gamma }
	\end{align}
\end{itemize}

We will show in Proposition \ref{prop:phase1_progress} that the minimum of these advantages is increasing with a rate large enough to guarantee sufficient progress in the first $O(s)$ steps of the protocol.

\begin{lemma}
The derivatives of ${\xi}$, ${\eta}_j$, and ${y}$ are given by:
\begin{align}
\dot{\xi} & = \frac{\delta \cdot \Gamma}{1/s - \delta} \cdot \bigl(\eta - \xi \bigr) + \frac{\Gamma}{4} \cdot \bigl(1 - \xi^2 \bigr) \cdot \eta \label{xi_dot_def} \\
\dot{\eta_j} & =  \frac{\gamma_{j-1}}{\gamma_j} \cdot  \bigl(\eta_{j-1} - \eta_j\bigr), \; \forall j = 2, \ldots, 8s \; \; \label{eta_j_dot_def} \\
\dot{\eta_1}(t) & = \frac{\Gamma}{2 \gamma_1} \cdot (1/s - \delta) \cdot \bigl(\xi - \eta_1\bigr) \label{eta_1_dot_def}\\
\end{align}
\end{lemma}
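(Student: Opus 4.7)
The plan is to verify each of the three stated identities by applying the quotient rule to its definition and substituting the differential equations from Lemma~\ref{lem:def_deterministic}. I would proceed in order of increasing difficulty: $\eta_j$ for $j\geq 2$, then $\eta_1$, then $\xi$.

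For $\eta_j=1-2\beta_j/\gamma_j$ with $j\geq 2$, the key observation is that both $\dot\beta_j=\beta_{j-1}-\beta_j R$ and $\dot\gamma_j=\gamma_{j-1}-\gamma_j R$ share the same linear multiplier $R$. Hence in $\dot{\eta_j}=-\tfrac{2}{\gamma_j^2}(\dot\beta_j\gamma_j-\beta_j\dot\gamma_j)$ the terms proportional to $R$ cancel, leaving $-\tfrac{2}{\gamma_j^2}(\beta_{j-1}\gamma_j-\beta_j\gamma_{j-1})$, which rewrites immediately as $\tfrac{\gamma_{j-1}}{\gamma_j}(\eta_{j-1}-\eta_j)$ after recognizing $\eta_{j-1}-\eta_j=2\beta_j/\gamma_j-2\beta_{j-1}/\gamma_{j-1}$.

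For $\eta_1$, the same cancellation of the $R$ terms occurs, but now the nonvanishing inflow terms $\tfrac{\alpha}{2}\Gamma$ in $\dot\beta_1$ and $(\tfrac{1}{2s}-\tfrac{\delta}{2})\Gamma$ in $\dot\gamma_1$ contribute. A short computation gives $\dot\beta_1\gamma_1-\beta_1\dot\gamma_1=\tfrac{\Gamma}{2}\bigl(\alpha\gamma_1-\beta_1(1/s-\delta)\bigr)$. Dividing by $\gamma_1^2$ and using $\xi-\eta_1=2\beta_1/\gamma_1-2\alpha/(1/s-\delta)$ recasts this exactly as $\tfrac{\Gamma(1/s-\delta)}{2\gamma_1}(\xi-\eta_1)$.

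The identity for $\xi$ is the main computation and the place I expect the bookkeeping to be most delicate. Setting $w=1/s-\delta$ so that $\xi=1-2\alpha/w$, one has $\dot\xi=-\tfrac{2}{w^2}(\dot\alpha\,w-\alpha\,\dot w)$ with $\dot w=-\dot\delta$. Substituting the expressions for $\dot\alpha$ and $\dot\delta$ from Lemma~\ref{lem:def_deterministic} produces four kinds of terms: those with coefficient $\Gamma$ and those with coefficient $\beta$, each appearing with and without a factor of $\delta$. The main obstacle is to group these cleanly. The guiding algebraic identities are
\begin{align*}
1-\xi^2 \;=\; \frac{4\alpha(w-\alpha)}{w^2}, \qquad \eta-\xi \;=\; \frac{2\alpha}{w}-\frac{2\beta}{\Gamma},
\end{align*}
which let one absorb the $\Gamma$-terms into $\tfrac{\Gamma}{4}(1-\xi^2)\eta$ (using $\eta=1-2\beta/\Gamma$ to pick up the $\beta(w-\alpha)$ cross term) and the remaining $\delta$-terms into $\tfrac{\delta\Gamma}{w}(\eta-\xi)$. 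Once the grouping is identified, the verification is a matter of collecting like monomials in $\alpha,\beta,\delta,\Gamma,w$, and this can be carried out in a few lines.
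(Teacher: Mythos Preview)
Your proposal is correct and follows essentially the same approach as the paper: apply the quotient rule to each of $\eta_j$, $\eta_1$, and $\xi$, substitute the differential equations from Lemma~\ref{lem:def_deterministic}, observe the cancellation of the $R$-terms for the $\eta$'s, and regroup using the algebraic identities relating $\xi,\eta,\eta_j$ to $\alpha,\beta,\beta_j,\gamma_j,\delta$. The paper differs only cosmetically---it treats $\xi$ first rather than last and substitutes $2\alpha=(1-\xi)(1/s-\delta)$, $2\beta=(1-\eta)\Gamma$ directly instead of introducing the shorthand $w=1/s-\delta$ and the factored identities $1-\xi^2=4\alpha(w-\alpha)/w^2$ and $\eta-\xi=2\alpha/w-2\beta/\Gamma$; the underlying computation is identical.
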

\begin{proof}
By definition, ${\xi} = 1 - \frac{2\alpha}{1/s - \delta}$. Rewriting $\alpha$, $\beta$, and $\beta_j$ in terms of the advantage variables $\xi$, $\eta$, and $\eta_j$ gives:
\begin{align}
2 \beta_j & = (1 - \eta_j) \gamma_j , \; \; \forall j \in \{1, \ldots, 8s\}  \notag \\
2 \alpha & = (1 - \xi) \left(\frac{1}{s} - \delta\right) \; \mbox{ and } \;
2 \beta = (1 - \eta) \Gamma
\end{align}
Differentiating $\xi$ gives:
\begin{align}
\left(\frac{1}{s} - \delta\right)^2 \cdot \dot{\xi} & = - 2 \dot{\alpha} \cdot \left(\frac{1}{s} - \delta\right) + \left(\frac{1}{s} - \delta\right)' \cdot 2 \alpha \notag \\
& = \left[ \alpha(\Gamma - \beta) - 2 \delta \beta \right] \cdot \left(\frac{1}{s} - \delta\right) - \alpha \cdot \left[ \alpha(\Gamma - \beta) + \beta \left(\frac{1}{s} - \delta - \alpha\right) - 2 \delta \Gamma \right] \notag \\
& = \left(\frac{1}{s} - \delta\right) \cdot \left[ \frac{1 - \xi}{2} \left( \eta \Gamma \left(\frac{1}{s} - \delta\right) \frac{1 + \xi}{2} + 2 \delta \Gamma \right) - \delta(1 - \eta) \Gamma \right]
\end{align}
Thus
\begin{align}
\left(\frac{1}{s} - \delta\right) \dot{\xi} & = (1 - \xi) \delta \Gamma - \delta(1 - \eta) \Gamma + \frac{1 - \xi^2}{4} \eta \Gamma \left(\frac{1}{s} - \delta\right) \notag \\
& = \delta \Gamma ( \eta - \xi) + \frac{1 - \xi^2}{4} \eta \Gamma  \left(\frac{1}{s} - \delta\right)
\end{align}
Dividing both sides by $\left(\frac{1}{s} - \delta\right)$ gives
\begin{align} \label{xi_dot_final}
\dot{\xi} = \frac{\delta \Gamma}{\left(\frac{1}{s} - \delta\right)} (\eta - \xi) + \frac{1 - \xi^2}{4} \cdot \Gamma \eta
\end{align}
Differentiating $\eta_j$ gives:
\begin{align} \label{eta_j_dot_def_direct}
\gamma_j^2 \cdot \dot{\eta}_j = - 2 \dot{\beta}_j \cdot \gamma_j + 2 \beta_j \cdot \dot{\gamma}_j
\end{align}
For $j \geq 2$, equation (\ref{eta_j_dot_def_direct}) can be rewritten as
\begin{align}
\gamma_j^2 \cdot \dot{\eta}_j & = -2 \gamma_j \cdot (\beta_{j-1} - \beta_j \cdot R) + 2 \beta_j \cdot (\gamma_{j-1} - \gamma_j \cdot R) \notag \\
& = - 2 \gamma_j \cdot \beta_{j-1} + 2 \beta_j \cdot \gamma_{j-1} \notag \\
& = - \gamma_j \cdot (1 - \eta_{j-1}) \cdot \gamma_{j-1} + \gamma_{j-1} \cdot (1 - \eta_j) \cdot \gamma_j \notag \\
& = \gamma_j \cdot \gamma_{j-1} \cdot \Bigl( \eta_{j-1} - \eta_j \Bigr)
\end{align}
Therefore $\dot{\eta}_j = \frac{\gamma_{j-1}}{\gamma_j} \cdot \Bigl( \eta_{j-1} - \eta_j\Bigr)$ as required.

\medskip

For $\eta_1$, the derivative satisfies the identities
\begin{align}
\gamma_1^2 \cdot  \dot{\eta}_1 & = -2 \dot{\beta}_1  \gamma_1 + 2 \beta_1  \dot{\gamma}_1 \\
& = - \gamma_1 \cdot \bigl[ -2 \beta_1 \cdot R + \alpha \cdot \Gamma \bigr] + 2 \beta_1 \cdot \Bigl[ - \gamma_1 \cdot R + \left( \frac{1}{2s} - \frac{\delta}{2} \right) \Gamma \Bigr] \\
& = - \alpha \Gamma \cdot \gamma_1 + \beta_1 \Gamma \left(\frac{1}{s} - \delta \right) \\
& = \frac{\Gamma}{2} \cdot \Bigl[ -2 \alpha \gamma_1 + 2 \beta_1 \left(\frac{1}{s} - \delta \right)  \Bigr] \\
& = \frac{\Gamma}{2} \cdot \Bigl[ - \gamma_1 (1 - \xi) \left(\frac{1}{s} - \delta \right)  + (1 - \eta_1) \gamma_1 \left(\frac{1}{s} - \delta \right)  \Bigr]
\end{align}
Dividing by $\gamma_1$ gives
$$
\gamma_1 \cdot \dot{\eta}_1 = \frac{\Gamma}{2} \left(\frac{1}{s} - \delta \right) \cdot \Bigl[ - (1 - \xi) + (1 - \eta_1) \Bigr] \implies \dot{\eta}_1 = \frac{\Gamma}{2} \left(\frac{1}{s} - \delta \right) \cdot \Bigl[ \xi - \eta_1 \Bigr]
$$
For $\dot{\gamma}_1$, we get the update rule
\begin{align}
\dot{\gamma}_1 & = - \gamma_1 \cdot \left(1 + \frac{1}{2s} - \frac{\delta}{2} - u\right) + \left(\frac{1}{2s} - \frac{\delta}{2}\right) \cdot \Gamma \notag \\
& = - \gamma_1 \cdot R + \left(\frac{1}{2s} - \frac{\delta}{2}\right) \cdot \Gamma
\end{align}
\end{proof}

\begin{proposition} \label{prop:monotone_phi}
The function $\Phi$ given by
\begin{align} \label{phi_fun_basic}
\Phi = \min\Bigl\{  \xi, \eta_1, \ldots, \eta_{8s} \Bigr\}
\end{align}
is weakly increasing.
\end{proposition}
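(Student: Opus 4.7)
The plan is to use the fact that $\Phi$ is Lipschitz (being the minimum of finitely many $C^1$ functions on any bounded time interval), hence absolutely continuous, and that at every point $t$ where $\Phi$ is differentiable, $\dot\Phi(t) = \dot f_i(t)$ for every $f_i \in \{\xi, \eta_1, \ldots, \eta_{8s}\}$ with $f_i(t) = \Phi(t)$ (since $f_i - \Phi \geq 0$ with equality at $t$ forces the derivative of the difference to vanish). So it suffices to verify $\dot f_i(t) \geq 0$ whenever $f_i(t) = \Phi(t)$; integrating then yields weak monotonicity of $\Phi$.

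If $\eta_j$ with $j \geq 2$ is a minimizer at $t$, then $\eta_{j-1} \geq \eta_j$ and (\ref{eta_j_dot_def}) immediately gives $\dot\eta_j \geq 0$. If $\eta_1$ is a minimizer then $\xi \geq \eta_1$ and (\ref{eta_1_dot_def}) gives $\dot\eta_1 \geq 0$. Both cases rely only on the positivity of $\gamma_j, \Gamma, 1/s - \delta$ guaranteed by Lemma~\ref{lem:bounded_differential}.

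The delicate case is when $\xi$ is the minimizer. Then $\eta_j \geq \xi$ for all $j$, so the convex combination $\eta = \sum_j (\gamma_j/\Gamma)\eta_j$ satisfies $\eta \geq \xi$, making the first term $\frac{\delta\Gamma}{1/s-\delta}(\eta - \xi)$ of (\ref{xi_dot_def}) nonnegative. The second term $\frac{\Gamma}{4}(1-\xi^2)\eta$ is nonnegative provided $\eta \geq 0$; the bound $|\xi| \leq 1$ needed for $1 - \xi^2 \geq 0$ is automatic from $0 \leq \alpha \leq 1/s - \delta$.

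The main obstacle is ensuring $\eta \geq 0$ in the $\xi$-minimizer case, which reduces to showing $\Phi(t) \geq 0$ throughout, given the standing assumption $\Phi(0) \geq 0$ (the correct bit is initially the majority among both decisive leaders and informed followers). I would establish this by a preliminary first-violation argument: at the hypothetical first time $\tau$ at which $\Phi(\tau) = 0$, every component satisfies $f_i(\tau) \geq 0$, hence $\eta(\tau) \geq 0$, so the case analysis above applied at $\tau$ gives $\dot f_i(\tau) \geq 0$ for every minimizer. A standard $\epsilon$-perturbation (replacing $\Phi$ by $\Phi + \epsilon t$ and sending $\epsilon \to 0$) then rules out $\Phi$ crossing below zero. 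Once $\Phi \geq 0$ is global, the full case analysis yields $\dot\Phi \geq 0$ almost everywhere, and the proposition follows.
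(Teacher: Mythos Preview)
Your proposal is correct and follows the same case analysis as the paper: examine which of $\xi,\eta_1,\ldots,\eta_{8s}$ attains the minimum and verify that its derivative is nonnegative there. The paper's argument is terser; it works with the right derivative of $\Phi$ directly and cites a result from Hardy's textbook (a continuous function with nonnegative right derivative is nondecreasing), whereas you use the Lipschitz/absolutely-continuous framework and integrate. You are also more careful on one point the paper leaves implicit: in the $\xi$-minimizer case the second term $\frac{\Gamma}{4}(1-\xi^2)\eta$ of (\ref{xi_dot_def}) needs $\eta\ge 0$, which does not follow from $\eta\ge\xi$ alone but does follow once $\Phi\ge 0$ is in hand. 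The paper tacitly relies on $\Phi(0)=\rho>0$ together with the very monotonicity being proved; your explicit first-violation bootstrap makes this step self-contained. Note, incidentally, that your bootstrap can be streamlined: on $[0,\tau]$ with $\tau=\inf\{t:\Phi(t)<\rho\}$ one has $\Phi\ge 0$, hence $\dot\Phi\ge 0$ a.e., hence $\Phi(\tau)\ge\Phi(0)=\rho$ by integration, a contradiction; the $\epsilon$-perturbation is not needed.
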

\begin{proof}
Note that $\eta_j(0) = \xi(0) = \rho$.
We consider a few cases, depending on which term in the definition of $\Phi$ realizes the minimum at time $t$. We have a few cases:
\begin{itemize}
	\item If $\Phi(t) = \xi(t)$, then $\eta(t) \geq \xi(t)$ by equation (\ref{convex_y}), so $\dot{\xi}(t) \geq 0$.
	\item If $\Phi(t) = \eta_j(t)$, for some $j \in \{2, \ldots, 8s\}$, then  $\eta_{j-1}(t) \geq \eta_j(t)$. By definition of $\dot{\eta_j}(t)$ this implies $\dot{\eta_j}(t) \geq 0$.
	\item If $\Phi(t) = \eta_1(t)$, then $\xi(t) \geq \eta_1(t)$. By definition of $\dot{\eta_1}(t)$ this implies $\dot{\eta_1}(t) \geq 0$.
\end{itemize}

\medskip

It follows that the right derivative of $\Phi(t)$ is non-negative, so $\Phi$ is non-decreasing, since $\Phi$ is a continuous function (see page 208, solved exercise 19 in \cite{hardy_book}).

\end{proof}

\begin{proposition} \label{ub:delta}
The fraction of leaders with question mark is bounded by
$\delta(t) < 0.2/s$ for all $t$.
\end{proposition}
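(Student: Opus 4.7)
The plan is a barrier (first-violation) argument: assuming $\delta(t_0) = 0.2/s$ at some first time $t_0 > 0$, I will derive $\dot\delta(t_0) < 0$, contradicting that $\delta$ crosses the barrier from below. Since $\delta(0) = 0$ (leaders begin with a decisive bit) and $\delta$ is continuous by Corollary~\ref{cor:hurewicz}, this suffices.

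To bound $\dot\delta$ I would invoke Proposition~\ref{prop:monotone_phi}, which gives $\xi(t) \geq \rho$ and $\eta_j(t) \geq \rho$ for every $j$ and $t$. Since $\eta$ is a convex combination of the $\eta_j$'s by (\ref{convex_y}), also $\eta \geq \rho$. Translating back to the original variables using the definitions of $\xi$ and $\eta$ yields
$$\alpha \leq \frac{1-\rho}{2}\left(\frac{1}{s} - \delta\right), \qquad \beta \leq \frac{1-\rho}{2}\,\Gamma.$$

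Next, I would parametrize $\alpha = x(1/s - \delta)$ and $\beta = y\Gamma$ with $x,y \in [0,(1-\rho)/2]$, so that the two positive terms in the ODE for $\dot\delta$ combine neatly as
$$\frac{\alpha}{2}(\Gamma - \beta) + \frac{1/s - \alpha - \delta}{2}\,\beta \;=\; \frac{(1/s-\delta)\Gamma}{2}\bigl(x + y - 2xy\bigr).$$
On the square $[0,(1-\rho)/2]^2$ the partials of $x+y-2xy$ are $1-2y$ and $1-2x$, both non-negative, so the maximum is attained at the upper corner and equals $(1-\rho^2)/2$. Hence
$$\dot\delta \;\leq\; \frac{(1/s - \delta)\,\Gamma\,(1-\rho^2)}{4} \;-\; \delta\,\Gamma.$$

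Evaluating this upper bound at the critical value $\delta = 0.2/s$ gives $\dot\delta \leq -\,0.2\,\rho^2\,\Gamma/s < 0$, using $\Gamma > 0$ (since each $\gamma_j > 0$ by Lemma~\ref{lem:bounded_differential}) and the standing assumption $\rho > 0$. This is the desired contradiction. The only subtle point is the choice of constant $0.2$: the bilinear worst case $x+y-2xy$ has equilibrium exactly at $\delta = 1/(5s)$ when $\rho = 0$, so $0.2$ is tight and any strictly smaller constant would fail in the degenerate limit $\rho \to 0$.
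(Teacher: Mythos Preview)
Your proof is correct and follows the same first-violation strategy as the paper, invoking Proposition~\ref{prop:monotone_phi} in the same way. The only difference is algebraic: the paper observes that at $\delta = 0.2/s$ the derivative factors exactly as
\[
\dot\delta(t) \;=\; \frac{\Gamma(t) - 2\beta(t)}{2}\,\Bigl(\alpha(t) - \frac{0.4}{s}\Bigr),
\]
and then checks that each factor has a definite sign (the first is nonnegative since $\eta \ge \rho > 0$, the second is strictly negative since $\xi \ge \rho > 0$). Your parametrization $\alpha = x(1/s-\delta)$, $\beta = y\Gamma$ and optimization over the square $[0,(1-\rho)/2]^2$ reaches the same conclusion and, as a bonus, makes the tightness of the constant $0.2$ transparent: the maximal value $(1-\rho^2)/2$ of $x+y-2xy$ exactly balances the negative term when $\rho = 0$. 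The paper's factorization is slicker for this specific value of the barrier, while your approach would adapt more readily if one wanted to track how the bound degrades with $\rho$.
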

\begin{proof}
At the beginning of the protocol, $\delta(0) = 0$. Let $t$ be the minimum time at which $\delta(t) = 0.2/s$. We show that at this time, $\dot{\delta}(t) < 0$, and this will yield a contradiction.

First, rewrite $\dot{\delta}(t)$ as follows:
\begin{align} \label{delta_ub}
\dot{\delta}(t) & = \frac{\alpha(t)}{2} \cdot \left(\Gamma(t) - \beta(t) \right) + \frac{1/s - \alpha(t) - \delta(t)}{2} \cdot \beta(t) - \delta(t) \cdot \Gamma(t) \notag \\
& = \frac{\alpha(t)}{2} \cdot \left(\Gamma(t) - 2\beta(t) \right) + \frac{1/s - 0.2/s}{2} \cdot \beta(t) - \frac{0.2}{s} \cdot \Gamma(t) \notag \\
& = \Gamma(t) \cdot \left(\frac{\alpha(t)}{2} - \frac{0.2}{s}\right) - \left(\alpha(t) - \frac{0.4}{s}\right) \cdot \beta(t) \notag \\
& = \frac{\Gamma(t)- 2 \beta(t)}{2} \cdot \left(\alpha(t) - \frac{0.4}{s} \right)
\end{align}
Proposition~\ref{prop:monotone_phi} implies $\xi(t) \geq \rho$ and $\eta_j(t) \geq \rho$ at all times $t$. Then $\Gamma(t) - 2\beta(t) \geq 0$ and
\begin{align}
\xi(t) = \frac{1/s - \delta(t) - 2 \alpha(t)}{1/s - \delta(t)} = 1 -  \frac{2 \alpha(t)}{0.8/s} = 1 - \frac{s \cdot \alpha(t)}{0.4} \geq \rho \iff \alpha(t) - \frac{0.4}{s} \leq - \frac{0.4 \rho}{s} < 0
\end{align}
Replacing in (\ref{delta_ub}), we obtain $\dot{\delta}(t) < 0$. It follows that $\delta(t) < 0.2/s$ at all times $t$.
\end{proof}

\begin{lemma} \label{lem:R_lb}
Recall $R(t) = 1 + \frac{1}{2s} - \frac{\delta(t)}{2} - u(t)$.
If $s \geq 32$ and $u(t) \leq \frac{1}{se^2} \left(1 + \frac{3}{s}\right)$, then $R(t) \geq e^{\frac{1}{4s}}$.
\end{lemma}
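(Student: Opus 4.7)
The lemma is purely a calculation, so the plan is just to bound each term in $R(t)$ using the hypotheses and then compare with a second-order Taylor upper bound for $e^{1/(4s)}$.

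First, I will invoke Proposition~\ref{ub:delta} to replace $\delta(t)/2$ by at most $0.1/s$, giving
\[
R(t) = 1 + \frac{1}{2s} - \frac{\delta(t)}{2} - u(t) \;>\; 1 + \frac{0.4}{s} - u(t).
\]
Then I will substitute the hypothesis $u(t) \leq \frac{1}{se^2}\bigl(1 + \frac{3}{s}\bigr)$ to obtain
\[
R(t) \;>\; 1 + \frac{1}{s}\left(\,0.4 - \frac{1}{e^2} - \frac{3}{s\,e^2}\right).
\]
Since $1/e^2 < 0.1353$ and, for $s \geq 32$, $\frac{3}{s e^2} \leq \frac{3}{32 e^2} < 0.013$, the parenthesized quantity exceeds $0.252$. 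Hence $R(t) > 1 + 0.252/s$ whenever $s \geq 32$.

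Next, I need a matching upper bound on $e^{1/(4s)}$. For $s \geq 32$ we have $x := 1/(4s) \leq 1/128$, so the tail of the exponential series is negligible: using $e^x \leq 1 + x + \frac{x^2}{2(1-x)}$ for $0<x<1$, one gets
\[
e^{1/(4s)} \;\leq\; 1 + \frac{1}{4s} + \frac{1}{32 s^2 (1 - 1/128)} \;\leq\; 1 + \frac{1}{4s} + \frac{1}{1000\,s}\;<\; 1 + \frac{0.252}{s}
\]
for $s \geq 32$. Combining with the previous display yields $R(t) > e^{1/(4s)}$ as required.

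There is no real obstacle here beyond ensuring the constants line up at the threshold $s=32$; the arithmetic tolerances $(0.4-1/e^2-O(1/s)$ versus $1/4+O(1/s))$ are comfortable, and the conclusion becomes tighter as $s$ grows. The only ingredient from the paper that needs to be in place is the bound $\delta(t) < 0.2/s$ from Proposition~\ref{ub:delta}, which has already been proved, and the hypothesis on $u(t)$ supplied directly by the lemma statement.
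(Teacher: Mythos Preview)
Your proof is correct and follows essentially the same route as the paper: bound $\delta(t)/2$ by $0.1/s$ via Proposition~\ref{ub:delta}, substitute the hypothesis on $u(t)$, and compare the resulting lower bound on $R(t)$ with a quadratic Taylor-type upper bound for $e^{1/(4s)}$ (the paper uses $e^x<1+x+x^2$, you use $e^x\le 1+x+\tfrac{x^2}{2(1-x)}$). One tiny numerical slip: $1/e^2\approx 0.13534>0.1353$, so your parenthesized quantity is about $0.2517$ rather than ``exceeds $0.252$'', but this still comfortably beats the $0.251/s$ you need for $e^{1/(4s)}$, so the argument goes through unchanged.
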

\begin{proof}
Using the bound on $\delta(t)$ from Proposition~\ref{ub:delta} and the bound on $u(t)$ in the lemma statement, when $s \geq 32$ we get
\begin{align} 
R(t) & \geq 1 + \frac{1}{2s} - \frac{0.1}{s} - \frac{e^{-2}}{s} \cdot \left( 1 + \frac{3}{s} \right) \notag \\
& = 1 + \frac{0.4}{s} - \frac{e^{-2}}{s} \cdot \left(1 + \frac{3}{s} \right) \notag \\
& > 1 + \frac{1}{4s} + \frac{1}{16s^2} > e^{\frac{1}{4s}} \notag
\end{align}
The penultimate inequality is equivalent to $s > \frac{1/16 + 4 e^{-2}}{0.4 - e^{-2} - 0.25}$ and the last inequality follows since $e^x < 1 + x + x^2$ for $x < 1$.
\end{proof}

\begin{lemma} \label{ub:u_gamma}
Let $s \geq 32$ and $\epsilon \in (0, 1/2)$. If the following upper bounds hold for $\gamma_j$ and $u$ at time $t = t_0$, then they hold at all times $t \geq t_0$:
\begin{align}
\label{eq:ub_gammaj_phase1} \gamma_j(t) & < \frac{1 - \epsilon}{s} \cdot e^{-\frac{j}{4s}}, \; \forall j \in \{1, \ldots, 8s \}  \\
\label{eq:ub_u_phase2} u(t) & < \frac{1 - \epsilon}{s  e^{2}}  \left(1 + \frac{3}{s}\right)
\end{align}
\end{lemma}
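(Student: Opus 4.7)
The plan is to apply the first-violation method already used in Lemma~\ref{lem:bounded_differential}. Set
$$
t^* \;=\; \inf\Bigl\{t \geq t_0 : \gamma_j(t) = \tfrac{1-\epsilon}{s}\, e^{-j/(4s)} \text{ for some } j, \text{ or } u(t) = \tfrac{1-\epsilon}{s\, e^2}\bigl(1 + \tfrac{3}{s}\bigr)\Bigr\}.
$$
If $t^* < \infty$, then by continuity all bounds in (\ref{eq:ub_gammaj_phase1})--(\ref{eq:ub_u_phase2}) still hold weakly at $t^*$, with at least one hitting equality. The strategy is to show that whichever variable touches its bound has strictly negative derivative at $t^*$, which contradicts the definition of $t^*$ and gives the invariance claim.

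Three cases must be handled, one for each type of bound. If $\gamma_1$ hits its bound: using $\dot{\gamma}_1 = -\gamma_1 R + (\tfrac{1}{2s} - \tfrac{\delta}{2})\Gamma$, the assumed bound on $u$ together with Proposition~\ref{ub:delta} lets me invoke Lemma~\ref{lem:R_lb} to get $R \geq e^{1/(4s)}$, and so $\gamma_1 R \geq \tfrac{1-\epsilon}{s}$; combined with the trivial bound $\Gamma < 1$, this gives $\dot{\gamma}_1 \leq -\tfrac{1-\epsilon}{s} + \tfrac{1}{2s} < 0$ whenever $\epsilon < 1/2$. If $\gamma_j$ hits its bound for some $j \geq 2$: using $\dot{\gamma}_j = \gamma_{j-1} - \gamma_j R$, together with the weak bound $\gamma_{j-1} \leq \tfrac{1-\epsilon}{s} e^{-(j-1)/(4s)}$ and the equality $\gamma_j = \tfrac{1-\epsilon}{s} e^{-j/(4s)}$, the problem reduces to $R > e^{1/(4s)}$ strictly. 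The proof of Lemma~\ref{lem:R_lb} actually establishes $R \geq 1 + \tfrac{1}{4s} + \tfrac{1}{16s^2} > e^{1/(4s)}$ with positive slack of order $1/s^2$, which yields the strict derivative bound $\dot{\gamma}_j < 0$.

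If $u$ hits its bound: using $\dot{u} = \gamma_{8s} - u\, \Gamma$ together with $\gamma_{8s} \leq \tfrac{1-\epsilon}{s} e^{-2}$ and $u = \tfrac{1-\epsilon}{s e^2}(1 + \tfrac{3}{s})$, the inequality $\dot{u} < 0$ reduces to $\Gamma > \tfrac{s}{s+3}$. Since $\Gamma = 1 - \tfrac{1}{s} - u$, this reduces in turn to $\tfrac{3}{s+3} - \tfrac{1}{s} > u = \tfrac{(1-\epsilon)(s+3)}{s^2 e^2}$, which is an elementary estimate for $s \geq 32$: the left side is $\Omega(1/s)$ whereas the right side is only $O(1/(s e^2))$.

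The main obstacle is the middle case, where the geometric-decay ansatz $e^{-j/(4s)}$ has been engineered to exactly match the factor of $R$ appearing in $\dot{\gamma}_j$; strict negativity of $\dot{\gamma}_j$ therefore depends on extracting the quadratic second-order slack between $R$ and $e^{1/(4s)}$ furnished by Lemma~\ref{lem:R_lb}, not just the leading-order bound. Once $\dot{\gamma}_1$, $\dot{\gamma}_j$ or $\dot{u}$ is shown strictly negative at the candidate $t^*$, the continuity argument standard to the first-violation method delivers the contradiction, and both bounds propagate forward for all $t \geq t_0$.
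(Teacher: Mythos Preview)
Your proposal is correct and follows essentially the same first-violation argument as the paper, with the same three cases and the same use of Lemma~\ref{lem:R_lb} and Proposition~\ref{ub:delta}. The one technical difference is in Case~2: the paper imposes an ordering on the variables ($\gamma_1,\ldots,\gamma_{8s},u$) and considers the \emph{first} violated bound in that order, which guarantees $\gamma_{j-1}(t^*)$ is still strictly below its bound and lets the weak conclusion $R\ge e^{1/(4s)}$ of Lemma~\ref{lem:R_lb} suffice; you instead allow $\gamma_{j-1}$ to sit at its bound and recover strictness from the $\Theta(1/s^2)$ slack inside the proof of Lemma~\ref{lem:R_lb}. Both routes are valid, but the ordering trick is worth noting since it keeps Lemma~\ref{lem:R_lb} as a black box.
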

\begin{proof}
The set of times $t$ at which of one or more of the inequalities in (\ref{eq:ub_gammaj_phase1}--\ref{eq:ub_u_phase2}) is violated is closed, so if this set is nonempty, it has a minimal element $t>t_0$.  We will derive a contradiction by establishing that the derivative of the first variable that violates the inequality at time $t$ is negative at that time. (The variables are ordered $\gamma_1,\ldots,\gamma_{8s},u$.)
	
\medskip

\noindent \emph{Case 1} : The first variable to violate the inequalities in (\ref{eq:ub_gammaj_phase1}--\ref{eq:ub_u_phase2}) is $\gamma_1$.
Using the bounds on $\delta(t)$ and $R(t)$, the derivative of $\gamma_1(t)$ at the time where $\gamma_1(t) = \frac{1-\epsilon}{s} \cdot e^{-\frac{1}{4s}}$ is
\begin{align}
\dot{\gamma}_1 & = - \gamma_1(t) \cdot R(t) + \left( \frac{1}{2s} - \frac{\delta(t)}{2} \right) \Gamma(t) \notag \\
& < - \frac{1-\epsilon}{s} \cdot e^{-\frac{1}{4s}} \cdot e^{\frac{1}{4s}} + \frac{1}{2s} < 0, \mbox{ since } \epsilon \in (0, 1/2)
\end{align}
Therefore if $\gamma_1(t)$ reaches the upper bound at time $t$, then its derivative at that time is strictly negative, which contradicts the definition of $t$ as the first time where the upper bound for $\gamma_1$ is reached.

\medskip

\noindent \emph{Case 2} : The first variable to violate the inequalities in (\ref{eq:ub_gammaj_phase1}--\ref{eq:ub_u_phase2}) is $\gamma_j$, for some $j \geq 2$.
Then
\begin{align} \label{ub_gamma_j_case1}
\dot{\gamma}_j(t) & = \gamma_{j-1}(t) - \gamma_j(t) \cdot R(t) \notag \\
& < \frac{1-\epsilon}{s} \cdot e^{-\frac{j-1}{4s}} - \frac{1-\epsilon}{s} \cdot e^{-\frac{j}{4s}} \cdot R(t) \notag \\
& = \frac{1-\epsilon}{s} \cdot e^{-\frac{j}{4s}} \cdot \Bigl[ e^{\frac{1}{4s}} - R(t) \Bigr]
\end{align}

Since (\ref{eq:ub_u_phase2}) holds before time $t$, the assumption of  Lemma~\ref{lem:R_lb} holds at time $t$, so $R(t) > e^{\frac{1}{4s}}$ when $s \geq 32$.
Thus $\dot{\gamma}_j(t) < 0$ by (\ref{ub_gamma_j_case1}), contradicting         the assumption of Case 2.

\medskip

\noindent \emph{Case 3} : The first variable to violate the inequalities in (\ref{eq:ub_gammaj_phase1}--\ref{eq:ub_u_phase2}) is $u$.
Then we get
\begin{align}
\dot{u}(t) & = \gamma_{8s}(t) - u(t) \cdot \Gamma(t) \notag \\
& < \frac{1-\epsilon}{se^{2}} \Bigl[ 1 - \left(1+\frac{3}{s}\right) \cdot \Gamma(t) \Bigr] \notag \\
& < \frac{1-\epsilon}{se^{2}} \Bigl[ 1 - \left(1+\frac{3}{s}\right)\cdot \left( 1 - \frac{2}{s} \right) \Bigr] \notag \\
& = \frac{1-\epsilon}{se^{2}} \Bigl[ 6/s^2-1/s  \Bigr] <0 \,, \mbox{ for } s > 6 \notag
\end{align}
This is in contradiction with the assumption of Case 3.

Thus there is no time $t$ at which one of the upper bounds in (\ref{eq:ub_gammaj_phase1}--\ref{eq:ub_u_phase2}) is violated.
\end{proof}

\begin{lemma} \label{gamma_ratio_lb}
For each $j \geq 2$, the ratio of consecutive $\gamma_j$'s is bounded as follows:
$$\frac{\gamma_{j-1}(t)}{\gamma_j(t)} > 1/2, \; \forall t \geq 0$$
\end{lemma}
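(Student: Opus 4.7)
The plan is a first-violation argument on the auxiliary quantity $z_j(t) := 2\gamma_{j-1}(t) - \gamma_j(t)$, so that the desired inequality $\gamma_{j-1}/\gamma_j > 1/2$ becomes simply $z_j(t) > 0$ (positivity of $\gamma_j$ is guaranteed by Lemma~\ref{lem:bounded_differential}). At $t=0$ every $\gamma_j$ equals $(1-1/s)/(8s)$, hence $z_j(0) = (1-1/s)/(8s) > 0$. Differentiating and using the equations in Lemma~\ref{lem:def_deterministic}, for $j \geq 3$ one obtains the clean cascade $\dot z_j = z_{j-1} - R\, z_j$, while for $j=2$ an extra source term arises from the forcing in (\ref{gamma_1_dot}):
\[
\dot z_2 = -R\, z_2 + v(t), \qquad v(t) := \left(\tfrac{1}{s} - \delta(t)\right)\Gamma(t) - \gamma_1(t).
\]

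For $j \geq 3$, positivity of $z_j$ propagates automatically: at a hypothetical first violation time $t^\star$ with smallest violating index $j^\star \geq 3$, minimality gives $z_{j^\star-1}(t^\star) > 0$ while $z_{j^\star}(t^\star) = 0$, so $\dot z_{j^\star}(t^\star) = z_{j^\star-1}(t^\star) > 0$, contradicting the choice of $t^\star$. For $j^\star = 2$ the analogous contradiction $\dot z_2(t^\star) = v(t^\star) > 0$ works provided $v > 0$ at all times. Thus the whole argument reduces to establishing $v(t) > 0$.

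To obtain $v > 0$, I would first prove the auxiliary bound $\gamma_1(t) < 1/(2s)$ by another one-step first-violation argument. At a putative first time $t^\star$ with $\gamma_1(t^\star) = 1/(2s)$, equation (\ref{gamma_1_dot}) gives
\[
\dot\gamma_1(t^\star) = \tfrac{1}{2s}\bigl[(1 - s\delta)\Gamma - R\bigr].
\]
Proposition~\ref{ub:delta} ensures $R - \Gamma = \tfrac{3}{2s} - \tfrac{\delta}{2} > 0$, and $(1 - s\delta)\Gamma \leq \Gamma$, so the derivative is negative, contradicting the choice of $t^\star$. Combining $\gamma_1 < 1/(2s)$ with $\delta < 0.2/s$ (Proposition~\ref{ub:delta}) and with $\Gamma = 1 - 1/s - u > 1 - 2/s$ (from Lemma~\ref{ub:u_gamma}, whose initial-time hypotheses at $t_0 = 0$ are easily verified for some small $\epsilon \in (0,1/2)$), we obtain $v \geq (0.8/s)\Gamma - 1/(2s) > 0$ for $s \geq 32$, closing the argument.

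The only mildly delicate point is the base index $j=2$: because $\gamma_1$ lacks an upstream sibling $\gamma_0$ in the cascade, its equation carries the external forcing $(1/(2s) - \delta/2)\Gamma$, which produces the source $v$ in $\dot z_2$. The separate bound $\gamma_1 < 1/(2s)$ tames this source --- reflecting the fact that $1/(2s)$ is essentially the instantaneous steady-state value of $\gamma_1$ --- after which the rest is the standard cascading first-violation template.
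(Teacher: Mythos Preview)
Your argument is correct and follows the same first-violation template as the paper, but the implementation differs in two respects worth noting. First, you linearize by tracking $z_j = 2\gamma_{j-1} - \gamma_j$, which gives the clean cascade $\dot z_j = z_{j-1} - R\,z_j$ for $j\ge 3$; the paper instead differentiates the ratio $\gamma_{j-1}/\gamma_j$ directly, obtaining $\bigl(\gamma_{j-1}/\gamma_j\bigr)' = (\gamma_{j-1}/\gamma_j)\bigl[\gamma_{j-2}/\gamma_{j-1} - \gamma_{j-1}/\gamma_j\bigr]$, which leads to the same contradiction at the smallest violating index. Second, the base case $j=2$ is handled differently: you prove the separate auxiliary bound $\gamma_1 < 1/(2s)$ and combine it with $\delta<0.2/s$ and $\Gamma>1-2/s$ to force $v>0$; the paper avoids this extra step and instead uses the already-available bound $\gamma_2 < 1/s$ from Lemma~\ref{ub:u_gamma} (together with $\Gamma \ge 1-2/s$) to get $\Gamma/\gamma_2 \ge s-2$, so that $\bigl(\gamma_1/\gamma_2\bigr)' \ge 0.4(s-2)/s - 1/4 > 0$ at the violation time. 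Your route trades one invocation of Lemma~\ref{ub:u_gamma} for a short self-contained bound on $\gamma_1$; the paper's route is a bit more economical since the required bound on $\gamma_2$ is already in hand.
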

\begin{proof}
	We use the method of first violation. Thus we look at the first time $t$ where the inequalities are violated; if multiple inequalities are violated at the same time $t$, then consider the smallest index $j$ with the property that $\gamma_{j-1}(t)/\gamma_j(t) = 1/2$.
If $j = 2$, then
\begin{align}
\left(\frac{\gamma_{1}(t)}{\gamma_{2}(t)} \right)' & = \frac{\gamma_2(t) \cdot \dot{\gamma}_1(t) - \gamma_1(t) \cdot \dot{\gamma}_2(t)}{\gamma_2(t)^2} \notag \\
& = \frac{\gamma_2(t) \cdot \Bigl(-\gamma_1(t) R(t) + \left(\frac{1}{2s} - \frac{\delta(t)}{2}\right)\Gamma(t)\Bigr) - \gamma_1(t) \Bigl(\gamma_1(t) - \gamma_2(t) R(t) \Bigr)}{\gamma_2(t)^2} \notag \\
& = \frac{\left(\frac{1}{2s} - \frac{\delta(t)}{2}\right)\Gamma(t)}{\gamma_2(t)} - \left( \frac{\gamma_1(t)}{\gamma_2(t)}\right)^2 \geq \frac{0.4}{s} \cdot \frac{\Gamma(t)}{\gamma_2(t)} - \left( \frac{\gamma_1(t)}{\gamma_2(t)}\right)^2 \notag \\
& \geq 0.4 \cdot \left(\frac{s-2}{s}\right) - \frac{1}{4} > 0
\end{align}
Thus the derivative of $\gamma_1/\gamma_2$ at time $t$ is strictly positive, which means that the value of $\gamma_1/\gamma_2$ couldn't have reached $1/2$ at time $t$. Since $t$ was the time of the first violation, this means there is no violation.

If
$j \geq 3$, then
\begin{align}
\left(\frac{\gamma_{j-1}(t)}{\gamma_j(t)} \right)' & = \frac{\gamma_{j}(t) \cdot \dot{\gamma}_{j-1}(t) - \gamma_{j-1}(t) \cdot \dot{\gamma}_j(t)}{\gamma_j(t)^2} \notag \\
& = \frac{\gamma_j(t)  \Bigl( \gamma_{j-2}(t) - \gamma_{j-1}(t) \cdot R(t)\Bigr)}{\gamma_j(t)^2} - \frac{\gamma_{j-1}(t)  \Bigl( \gamma_{j-1}(t) - \gamma_{j}(t) \cdot R(t)\Bigr)}{\gamma_j(t)^2}  \notag \\
& = \frac{\gamma_{j-1}(t)}{\gamma_j(t)} \cdot \Bigl[ \frac{\gamma_{j-2}(t)}{\gamma_{j-1}(t)} - \frac{\gamma_{j-1}(t)}{\gamma_j(t)} \Bigr]
\end{align}
When $\gamma_{j-1}(t) / \gamma_j(t) = 1/2$, we have
\begin{align}
\left(\frac{\gamma_{j-1}(t)}{\gamma_j(t)} \right)' & = \frac{\gamma_{j-1}(t)}{\gamma_j(t)} \cdot \Bigl[ \frac{\gamma_{j-2}(t)}{\gamma_{j-1}(t)} - \frac{\gamma_{j-1}(t)}{\gamma_j(t)} \Bigr] \notag \\
& = \frac{1}{2} \cdot \Bigl( \frac{\gamma_{j-2}(t)}{\gamma_{j-1}(t)} - \frac{1}{2} \Bigr) > 0
\end{align}
This means that the value of $\gamma_{j-1}/\gamma_j$ could not have dropped to $1/2$ at time $t$, in contradiction with $t$ being the time of the first violation.
It follows that $\gamma_{j-1}(t)/\gamma_j(t) > 1/2$ for all $j \geq 2$ at all times $t$.
\end{proof}

\medskip

By definition of $\Gamma$ and the upper bound on $u$ in (\ref{ub:u_gamma}),
\begin{align} \label{eq:Gamma_lb_ub}
1 - \frac{2}{s} < \Gamma(t) \leq 1 - \frac{1}{s}\,.
\end{align}

\begin{proposition} \label{prop:phase1_progress}
Let $\lambda \in (0, 1)$. There exists $C = C(\lambda, \rho) \leq \frac{144}{\rho (1 - \lambda)}$ and a time $T_{1} < C \cdot s$ so that $\xi(T_{1}) > \lambda$ and $\eta_j(T_{1}) > \lambda$ for all $j$. Consequently, at all times  $t \geq T_{1}$, we have $2 \beta(t) < 1 - \lambda$.
\end{proposition}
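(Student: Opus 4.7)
The plan is to show that the monotone minimum $\Phi = \min\{\xi,\eta_1,\ldots,\eta_{8s}\}$ from Proposition \ref{prop:monotone_phi} exceeds $\lambda$ by time $T_{1} = O(s/(\rho(1-\lambda)))$. Given this, the ``Consequently'' clause is immediate: once every $\eta_j > \lambda$ we have $\gamma_j - 2\beta_j > \lambda\gamma_j$, so $2\beta_j < (1-\lambda)\gamma_j$, and summing over $j$ with $\Gamma \leq 1$ yields $2\beta < 1-\lambda$; monotonicity of $\Phi$ preserves this for all $t \geq T_{1}$.

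Two complementary effects drive the growth of $\Phi$. First, $\xi$ grows genuinely whenever it realizes the minimum: the convex combination identity (\ref{convex_y}) combined with Proposition \ref{prop:monotone_phi} gives $\eta \geq \xi$, so the consensus term in (\ref{xi_dot_def}) is non-negative; combining this with $\eta \geq \rho$, the bound $\Gamma \geq 1-2/s$ from (\ref{eq:Gamma_lb_ub}), and $\xi \leq \lambda$ yields
\begin{align}
\dot{\xi} \geq \frac{\Gamma(1-\xi^2)\eta}{4} \geq \frac{(1-2/s)(1-\lambda)(1+\lambda)\rho}{4} = \Omega\bigl((1-\lambda)\rho\bigr).
\end{align}
Second, the chain $\xi \to \eta_1 \to \eta_2 \to \cdots \to \eta_{8s}$ propagates information at a constant rate per link: by (\ref{eta_1_dot_def}) combined with Proposition \ref{ub:delta} and Lemma \ref{ub:u_gamma}, the coefficient $\Gamma(1/s-\delta)/(2\gamma_1)$ is bounded below by a positive absolute constant, so $\dot{\eta}_1 \geq \Omega(\xi-\eta_1)$, and by Lemma \ref{gamma_ratio_lb} we have $\dot{\eta}_j \geq \tfrac12(\eta_{j-1}-\eta_j)$ for $j \geq 2$. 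Hence any gap in the chain contracts at rate $\Omega(1)$, and a value ``pushed in'' at $\xi$ reaches $\eta_{8s}$ with delay $O(s)$.

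To convert these qualitative observations into the quantitative bound on $T_{1}$, I would construct the modified potential $\Psi$ alluded to in the overview as a weighted linear combination of $\xi$ and the $\eta_j$, with weights tuned so that the cascading contributions $\sum_j \omega_j \dot{\eta}_j$ telescope into a single boundary term, leaving $\dot{\Psi} \geq c(1-\lambda)\rho/s$ whenever $\Phi \leq \lambda$. A natural first attempt is $\Psi = \xi + \frac{1}{8s}\sum_{j=1}^{8s}\eta_j$ (uniform weights), for which $\Psi \in [-2,2]$; more generally, weights of the form $\omega_j \propto 1/c_j$ (with $c_j$ the chain rates $\gamma_{j-1}/\gamma_j$) reduce $\sum_j \omega_j \dot{\eta}_j$ to a multiple of $\xi - \eta_{8s}$ up to uniformly bounded endpoint terms. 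Since $\Psi$ is bounded, integrating the lower bound on $\dot{\Psi}$ gives $T_{1} \leq O(s/((1-\lambda)\rho))$, and tracking the explicit constants from Proposition \ref{ub:delta}, Lemma \ref{ub:u_gamma}, and (\ref{eq:Gamma_lb_ub}) yields the bound $C \leq 144/(\rho(1-\lambda))$. The main obstacle is the construction and verification of $\Psi$: on intervals where the bottleneck is a deep $\eta_{j^*}$ (so that $\xi$ may be well above $\lambda$ and may even have $\dot{\xi} < 0$ because of the negative cross-term $\frac{\delta\Gamma}{1/s-\delta}(\eta-\xi)$ from (\ref{xi_dot_def})), the potential must still extract growth at rate $\Omega((1-\lambda)\rho/s)$ from the chain cascade, and this uniform lower bound must be verified in all cases while keeping the constants sharp.
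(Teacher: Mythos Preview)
Your reduction of the ``Consequently'' clause and your identification of the two driving effects (genuine quadratic growth of $\xi$ when it is the bottleneck, and $\Omega(1)$-rate propagation along the chain $\xi\to\eta_1\to\cdots\to\eta_{8s}$) are correct and match the paper. The gap is in the construction of $\Psi$.

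You propose $\Psi$ as a \emph{weighted linear combination} of $\xi$ and the $\eta_j$, hoping to telescope $\sum_j\omega_j\dot\eta_j$. This does not work, for two reasons. First, even if you establish $\dot\Psi\ge c(1-\lambda)\rho/s$ and integrate, a large value of the weighted average $\Psi$ says nothing about the minimum $\Phi$: you could have $\Psi$ near $1$ with one $\eta_j$ still near $\rho$. Second, the telescoped quantity $\xi-\eta_{8s}$ is not bounded below by anything useful when $\Phi\le\lambda$; in particular, on intervals where $\xi$ itself realizes the minimum (the very case that supplies the genuine growth), $\xi-\eta_{8s}$ is negative, so the chain contribution works \emph{against} you rather than for you. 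You flag exactly this scenario as ``the main obstacle,'' and with a sum-type potential it is not an obstacle you can overcome.

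The paper's $\Psi$ is not a linear combination but a \emph{shifted minimum}:
\[
\Psi(t)=\min\bigl\{\xi(t),\ \eta_1(t)+\epsilon v_1,\ \ldots,\ \eta_{8s}(t)+\epsilon v_{8s}\bigr\},
\]
with $0<v_1<\cdots<v_{8s}=2$ and $\epsilon=\Theta((1-\lambda)\rho)$. This construction directly controls $\Phi$, since $\Psi\ge\lambda_1$ forces $\Phi\ge\lambda_1-\epsilon v_{8s}>\lambda$. More importantly, the offsets manufacture a guaranteed gap: if $\eta_j+\epsilon v_j$ is the minimizer then $\eta_{j-1}-\eta_j\ge\epsilon(v_j-v_{j-1})>0$, so $\dot\eta_j\ge\tfrac12\epsilon(v_j-v_{j-1})$ by your own chain-rate bound; and if $\xi$ is the minimizer then $\eta\ge\xi-\epsilon v_{8s}$, so the potentially negative cross-term in $\dot\xi$ is only $O(\epsilon)$ while the quadratic growth term is $\Omega((1-\lambda)\rho)=\Omega(\epsilon)$. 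Choosing the $v_j$ in arithmetic progression with spacing $\Theta(1/s)$ makes all case bounds equal to $\epsilon/(9s)$, and integrating $\Psi$ from $\rho$ to $(1+\lambda)/2$ gives $T_1\le 9s/\epsilon$, which is the stated $C\le 144/(\rho(1-\lambda))$.
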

\begin{proof}
	From the expressions for the derivatives of $\eta_j$ and $\xi$, at a high level it can be observed that each $\eta_{j}$ follows $\eta_{j-1}$ when $j \geq 2$, $\eta_1$ follows $\xi$, and $\xi$ is pulled towards $\eta$ but in addition has a positive drift given by the term $\frac{\Gamma}{4} \eta (1 - \xi^2)$. This motivates the following definition:
	\begin{align} \label{psi_fun}
	\Psi(t) = \min\Bigl\{ \xi(t), \eta_1(t) + \epsilon \cdot v_1, \ldots, \eta_{8s}(t) + \epsilon \cdot v_{8s} \Bigr\},
	\end{align}
	where $\epsilon, v_1, \ldots, v_{8s}$ are free parameters that will be set later, where $0 < v_1 < \ldots < v_{8s}$. Note $\Psi(0) = \rho$.
	
	Let $\lambda_1 = (1 + \lambda)/2$ and $T_1 = T_1(\lambda) = \min\{t > 0 \; : \; \Psi(t) = \lambda_1\}$. The goal is to bound from below the right derivative $\Psi_{+}'(t)$ for $t < T_{1}$.
	
\medskip
	
	We consider a few cases, depending on which term in the definition of $\Psi$ realizes the minimum at time $t$:
	
	\smallskip
	
	\noindent \textbf{\emph{Case 1}}. The minimum is $\xi(t)$, that is, $\Psi(t) = \xi(t) < \lambda_1$. Then $\eta_j(t) + \epsilon v_j \geq \xi(t)$, so $\eta_j(t) \geq \xi(t) - \epsilon v_{8s}$ and $\eta(t) \geq \xi(t) - \epsilon v_{8s}$. Then
	\begin{align} \label{psi_plus_lb_case1}
	\Psi_{+}'(t) & \geq \dot{\xi}(t) = \frac{\delta(t)\Gamma(t)}{1/s - \delta(t)} \cdot \Bigl[ \eta(t) - \xi(t) \Bigr] + \frac{\Gamma(t)}{4} \left( 1 - \xi(t)^2 \right) \eta(t) \notag \\
	& \geq \frac{0.2/s}{0.8/s} \cdot (-\epsilon v_{8s}) + \frac{1}{8}(1 - \lambda_1^2)\rho
	\end{align}
	Let $\epsilon = \frac{\rho}{8}(1 - \lambda_1^2)$. Then for $v_{8s}=2$, the inequality in (\ref{psi_plus_lb_case1}) implies
	\begin{align}
	\Psi_{+}'(t) & \geq \epsilon \left( 1 - \frac{v_{8s}}{4}\right) = \frac{\epsilon}{2}
	\end{align}
	
	\noindent \textbf{\emph{Case 2}}. The minimum is $\Psi(t) = \eta_j(t) + \epsilon v_j$ for some $j \in \{2, \ldots, 8s\}$, so $\eta_j + \epsilon v_j \leq \eta_{j-1} + \epsilon v_{j-1}$.
	The right derivative of $\Psi$ is lower bounded by
	\begin{align}
	\Psi_{+}'(t) & \geq \dot{\eta}_j(t) = \frac{\gamma_{j-1}(t)}{\gamma_j(t)} \cdot \Bigl[ \eta_{j-1}(t) - \eta_j(t)\Bigr] \geq \frac{\epsilon}{2} \left( v_j - v_{j-1} \right)
	\end{align}	
	
	\noindent \textbf{\emph{Case 3}}. The minimum is $\Psi(t) = \eta_1(t) + \epsilon v_1$. Then the right derivative of $\Psi$ is lower bounded by
	\begin{align}
	\Psi_{+}'(t) & \geq \dot{\eta}_1(t) = \frac{\Gamma(t)}{2 \gamma_1(t)} \cdot \Bigl[ \frac{1}{s} - \delta(t) \Bigr] \left( \xi(t) - \eta_1(t) \right) \geq \frac{1 - 2/s}{2/s} \cdot \frac{0.8}{s} \cdot \epsilon v_1 \notag \\
	& \geq \frac{s}{4} \cdot \frac{0.8}{s} \cdot \epsilon v_1  = \frac{\epsilon v_1}{5}
	\end{align}
	
	By setting the $v_j$'s to satisfy the lower bounds with equality, i.e. $$
	\frac{v_1}{5} = \frac{v_2 - v_1}{2} = \ldots = \frac{v_{8s} - v_{8s-1}}{2}
	$$
we get $v_j  = \frac{2(3+ 2j)}{3+16s}$ for each $j \geq 1$.
\medskip
	
\noindent From cases $(1-3)$, it follows that at all times $t < T_{1}$, the right derivative $\Psi_{+}'(t) $ is large, i.e. it is bounded from below by $\epsilon/(9s)$. Then
	$$
	T_{1} \leq \frac{\lambda_1 - \rho}{(\epsilon/(9s))} \leq \frac{9s}{\epsilon}
	$$
	Moreover, at any time $t \geq T_{1}$,
	$$
	\min\{\xi(t), \eta_1(t), \ldots, \eta_{8s}(t)\} \geq \lambda_1 - 2 \epsilon > \lambda
	$$
	Taking
	$$
	C = C(\lambda, \rho) = 9/\epsilon = \frac{72}{\rho \cdot \left(1 - \left(\frac{1 + \lambda}{2}\right)^2\right)} \leq \frac{144}{\rho (1 - \lambda)}
	$$
	works. Moreover, since $y$ is a weighted average of $\eta_j$, for all $t > T_1$ we have
	$$\lambda < \eta(t) = \frac{\Gamma(t) - 2 \beta(t)}{\Gamma(t)} = 1 - \frac{2\beta(t)}{\Gamma(t)}\, .$$
	Therefore for all $t > T_1$ we have
	$
	2 \beta(t) < \Gamma(t)  \cdot (1 - \lambda) < 1 - \lambda
	$
as required.
\end{proof}

\section{Phases II and III: Exponential Decay of Errors} \label{sec:phases_2and3}

We divide the time after phase I in two phases, II and III. During phase II, we bound the fraction of wrong nodes by $\exp(-c_2 \cdot t/s)$ for some constant $c_2 > 0$. During phase III, we bound the fraction of wrong nodes by $\exp(-c_3 \cdot t)$ for some constant $c_3 > 0$.
Both of these will be handled in a similar way, by considering a potential function that is a linear combination of $\alpha$, $\delta$, and $\beta_j$ for $j \in \{1, \ldots, 8s+1\}$, with different coefficients for each of the two phases. The variable $\beta_{8s+1}(t)$ represents the fraction of nodes that are uninformed and have the wrong bit at time $t$. The derivative of $\beta_{8s+1}$ is
\begin{align} \label{eq:derivative_beta_8s1}
\dot{\beta}_{8s+1} = \beta_{8s} - \beta_{8s+1}\cdot \Gamma
\end{align}

The existence and uniqueness theorem (Corollary~\ref{cor:hurewicz}) applies when equation (\ref{eq:derivative_beta_8s1}) is added to the system of differential equations (\ref{alpha_dot}--\ref{gamma_1_dot}).

\subsection{Linear Potential Function}

Define
\begin{align} \label{def:Lambda}
\Lambda(t) = \alpha(t) + d \cdot \delta(t) + \sum_{j=1}^{8s+1} a_j \cdot \beta_j(t),
\end{align}
where $d, a_1, \ldots, a_{8s+1} \geq 0 $ have to be determined. We will require $\Lambda$ to satisfy the inequality $\dot{\Lambda}(t) \leq -\zeta \cdot \Lambda(t)$ for some constant $\zeta > 0$ and each time $t$ in the range of phases $II$ and $III$, respectively.

Bounding $\dot{\Lambda}$ using equations (\ref{alpha_dot}-\ref{eq:def_beta_1_dot}) and the inequality $\Gamma \leq 1$, we get
\begin{small}
\begin{align} 
\dot{\Lambda}(t) & \leq - \frac{\alpha(t)}{2} \cdot \Bigl(\Gamma(t) - \beta(t)\Bigr) + \delta(t) \beta(t) + d \cdot \frac{\alpha(t)}{2} \cdot \Gamma(t) + \frac{ d \beta(t)}{2s} - \frac{d \delta(t)}{2}  \notag \\
& \; \; \; \; + a_1  \Bigl[ -\beta_1(t)  R(t) + \frac{\alpha(t)}{2} \Bigr] + \sum_{j=2}^{8s} a_j  \Bigl[\beta_{j-1}(t) - \beta_j(t)  R(t) \Bigr] + a_{8s+1}   \Bigl[\beta_{8s}(t) - \beta_{8s+1}(t) \Gamma(t) \Bigr] \notag \\
& = \frac{\alpha(t)}{2}  \Bigl[ {\beta(t)} - 1 + \frac{2}{s} + {d} + {a_1} \Bigr] + \delta(t)  \Bigl[ \beta(t) - \frac{d}{2} \Bigr] + \sum_{j=1}^{8s} \beta_j(t)  \Bigl[\frac{d}{2s} - a_j  R(t) + a_{j+1} \Bigr] \notag
\end{align}
\end{small}

To ensure that $\dot{\Lambda}(t) \leq - \zeta \cdot \Lambda(t) = - \zeta \cdot \Bigl[ \alpha(t) + d \cdot \delta(t) + \sum_{j=1}^{8s} a_j \cdot \beta_j(t) \Bigr]$, it suffices that
\begin{align} \label{eq:phase2_constraints}
(a) \; \; \; & \frac{d}{2s} - a_j \cdot R(t) + a_{j+1} \leq - \zeta \cdot a_j, \; \forall j \leq 8s  \notag \\
& \iff a_{j+1} \leq a_j \cdot \Bigl(R(t) - \zeta \Bigr) - \frac{d}{2s}\, , \; \forall j \leq 8s \notag \\
(b) \; \; \; & \beta(t) - \frac{d}{2}  \leq - \zeta \cdot d  \iff \beta(t) \leq \left( \frac{1}{2} - \zeta \right) d \notag\\
(c) \; \; \; & \frac{\beta(t)}{2} - \frac{1 - 2/s}{2} + \frac{d}{2} + \frac{a_1}{2} \leq - \zeta \iff 2 \zeta + \beta(t) + d + a_1 \leq 1 - \frac{2}{s} \notag \\
(d) \; \; \; & - a_{8s+1} \cdot \Gamma \leq - a_{8s+1} \cdot \zeta
\end{align}

\subsection{Phase II} \label{sec:phase_2}

Let $\Lambda_{2}$ be the potential function for phase II, following the template in (\ref{def:Lambda}). The goal is to set the coefficients in $\Lambda_{2}$ so that  $\dot{\Lambda}_{2}(t) \leq -\zeta_{2} \cdot \Lambda_{2}(t)$, for some constant $\zeta_{2}> 0$.

Recall that at the end of Phase I we obtain that $\beta(t) \leq 1/64$ for $t \geq T_1$ when $\lambda = 31/32$ in Proposition~\ref{prop:phase1_progress}. Then there exists $C_1 > 0$ such that
\begin{align}  
T_1 \leq \frac{C_1 s}{\rho}, \mbox{ where } C_1 \leq 144 \cdot 32\,.
\end{align}
Also $R(t) \geq e^{\frac{1}{4s}} \geq 1 + \frac{1}{4s}$ from Lemma \ref{lem:R_lb} and $\Gamma(t) \geq 1 - 2/s$ from (\ref{eq:Gamma_lb_ub}).

\medskip

Set $a_j= a$, $\forall j \in \{1, \ldots, 8s\}$, and $a_{8s+1} = 0$, so part (d) of (\ref{eq:phase2_constraints}) holds.

For part (a) of (\ref{eq:phase2_constraints}) to hold, it suffices that $\frac{d}{2s} \leq a \left(\frac{1}{4s} - \zeta_{2} \right)$.
Let $\zeta_{2} = \frac{1}{8s}$ and $d = a/4$.

For part (b) of (\ref{eq:phase2_constraints}) to hold, it suffices that
$$
\beta(t) \leq \frac{d}{2} \cdot \left( 1 - \frac{1}{4s} \right)
$$
This holds when $\beta(t) \leq a/16$.

Finally, for part (c) of (\ref{eq:phase2_constraints}) to hold, it suffices that
$$
\beta(t) \leq \frac{1}{2} - \frac{5a}{4} - \frac{1}{4s}
$$
Let $a = 1/4$. Then the requirement on $\beta$ is $\beta(t) \leq 1/64$, which holds for all $t \geq T_1$.
Thus
\begin{align} \label{eq:lambda2_definition_exact_coeffs}
\Lambda_2(t) = \alpha(t) + \frac{\delta(t)}{16} + \frac{\beta(t)}{4}
\end{align}
It follows that $\dot{\Lambda}_{2}(t) \leq - \zeta_{2} \cdot \Lambda_{2}(t) = - \frac{\Lambda_{2}(t)}{8s}$, for all $t \geq T_1$.
\begin{align} \label{eq:lambda2_beyond_T1}
\Lambda_{2}(t) \leq \Lambda_{2}(T_1) \cdot \exp\left(-\frac{t - T_1}{8s}\right), \mbox{ for all } t \geq T_1\,.
\end{align}
Also note that $\Lambda_{2}(T_1) \leq 1$. We deduce that for $t \geq T_1$ and $j \leq 8s$
\begin{align}
a_j \cdot \beta_j(t) & = \frac{\beta_j(t)}{4} \leq \Lambda_{2}(t)
\implies \beta_j(t) \leq 4 \cdot \exp\Bigl(-\frac{t-T_1}{8s}\Bigr)
\end{align}
Thus $\beta$ can be bounded by
\begin{align}
\beta(t) = \sum_{j=1}^{8s} \beta_j(t) \leq  32s \cdot \exp\Bigl(-\frac{t-T_1}{8s}\Bigr)
\end{align}
In particular, if we take $T_2 = T_1 + 64 s^2$, then for all $t \geq T_2$
\begin{align} \label{eq:beta_ub_end_phase2}
\beta(t) \leq 32 s \cdot \exp\Bigl( -8s \Bigr)
\end{align}

\subsection{Phase III} \label{sec:phase_3}

Let $\Lambda_{3}$ be the potential function for phase III, following the template in (\ref{def:Lambda}). The goal is to set the coefficients in $\Lambda_{3}$ so that $\dot{\Lambda}_{3}(t) \leq -\zeta_{3} \cdot \Lambda_{3}(t)$, for some constant $\zeta_{3}> 0$.
The constraints needed to ensure that $\dot{{\Lambda}}_{3}(t) \leq - \zeta_{3} \cdot {\Lambda}_{3}(t)$ for $t \geq T_2$ are given in (\ref{eq:phase2_constraints}).

To satisfy part (a) of (\ref{eq:phase2_constraints}), set $\zeta_{3} = 1/2 - 2/s$ and ${a}_1 = 1/s$, and require that $\beta(t) + {d} \leq 1/s$. This also ensures that constraint $(d)$ is met.

\medskip

Part (b) of (\ref{eq:phase2_constraints}) is satisfied when $\beta(t) \leq 2 {d}/s$. Part (c) of (\ref{eq:phase2_constraints}) is satisfied when $ {d} \leq  {a}_j$ and $ {a}_{j+1} =  {a}_j/2$, since these imply
$$
{a}_{j+1} + \frac{ {d}}{2s} \leq  {a}_j \cdot \left( \frac{1}{2} + \frac{1}{2s} \right) =  {a}_j \cdot (1 - \zeta_{3})
$$
Thus let
\begin{align}
{a}_j & =  {a}_1 \cdot 2^{1-j} = \frac{1}{s} \cdot 2^{1-j} \notag \\
 {d} & = {a}_{8s} = \frac{1}{s} \cdot 2^{1-8s} \notag \\
\end{align}
These require that $\beta(t) \leq 2^{2-8s}/s^2$, which holds by (\ref{eq:beta_ub_end_phase2}) for all $t \geq T_2$.

\medskip

Recall $T_2 = O(s^2)$. For any $\theta > 0$, let $T_3 = T_2 + \frac{1 + \theta}{\zeta_{3}} \ln{n}$. Then $\Lambda_{3}(T_2) \leq 1$ implies
\begin{align}
\Lambda_{3}(T_3) \leq \exp \Bigl( -\zeta_{3} \cdot (T_3 - T_2) \Bigr) = n^{-1 - \theta}
\end{align}

Thus for all $j \leq 8s+1$,
$$\beta_j(T_3) \leq \Lambda_{3}(T_3) /  {a}_j \leq s \cdot  2^{j-1} n^{-1 - \theta}\,.$$

This implies that
\begin{align}
\alpha(T_3) + \beta(T_3) + \beta_{8s+1}(T_3) \leq n^{-1-\theta} \cdot \Bigl( 1 + s \cdot 2^{8s+1} \Bigr) = o\left(\frac{1}{n}\right)
\end{align}

\section{Random System}

Let $\widetilde{\alpha}(j) = $ number of leaders after $j$ rings in the finite system. Similarly, define variables $\widetilde{\delta}(j), \widetilde{\gamma}_1(j), \ldots, \widetilde{\gamma}_{8s}(j), \widetilde{\beta}_1(j), \ldots,  \widetilde{\beta}_{8s+1}(j)$. Also define
\begin{align}
\widetilde{\Gamma}(j) & = \sum_{i=1}^{8s} \widetilde{\gamma}_i(j) \notag \\
\widetilde{u}(j) & = n - \frac{n}{s} - \widetilde{\Gamma}(j) \notag \\
\widetilde{R}(j) & = n + \frac{n}{2s} - \frac{\widetilde{\delta}}{2} - \widetilde{u}(j) \notag
\end{align}

\medskip

Our goal is to show that the random system is closely approximated by the deterministic system analyzed in sections \ref{sec:deterministic_system_def}--\ref{sec:phases_2and3}. A key tool will be the following theorem by Warnke~\cite{warnke}, which refines earlier results by Kurtz~\cite{kurtz} and Wormald~\cite{wormald}. We include the main theorem in ~\cite{warnke} for the special case where the derivatives do not depend on time.

\begin{theorem}[\cite{warnke}]\label{thm:DEM}%
	Given integers~$a,n \ge 1$ and a bounded domain~$\mathcal{D} \subseteq \mathbb{R}^{a}$, let $F_k~:~\mathcal{D}~\to~\mathbb{R}$ be $L$-Lipschitz-continuous functions \footnote{With respect to the $\ell^{\infty}$ norm on $\mathbb{R}^a$.} on~$\mathcal{D}$ and $M=\sup\{F_k(y) \mid k \in [a], y \in \mathcal{D}\}$.
	Suppose $\{\mathcal{F}_i\}_{i \ge 0}$ are increasing $\sigma$-fields and
	$(Y_k(i))_{k=1}^{a}$ are $\mathcal{F}_i$-measurable random variables for all $i \ge 0$ with $Y_k(0)=n \cdot {y}_k^* $ where~$({y}_1^*, \ldots,  {y}_a^*) \in \mathcal{D}$. Moreover, assume that for all $i \geq 0$ and $1 \le k \le a$, the following conditions hold whenever~$\bigl(\frac{Y_1(i)}{n},...,\frac{Y_a(i)}{n}\bigr) \in \mathcal{D}$:
	\vspace{-0.25em}%
	\begin{enumerate}%
		\itemsep 0.125em \parskip 0em  \partopsep=0pt \parsep 0em
		\item[(i)]\label{dem:trend}%
		$\Bigl|\mathbb{E}\bigl[Y_k(i+1)-Y_k(i) \mid \mathcal{F}_{i}\bigr]-F_k\bigpar{\frac{Y_1(i)}{n},...,\frac{Y_a(i)}{n}} \Bigr| \le \lambda_0$,
		\item[(ii)]\label{dem:bounded}%
		$\bigabs{Y_k(i+1)-Y_k(i)}\le 1$. 
		 \end{enumerate}\vspace{-0.125em}%

	Let $T>0$ and $\lambda \ge \lambda_0 \min\{T,L^{-1}\} + M/n$. Let $(y_k(t))_{1 \le k \le a}$ be the unique solution for $t\le T$ of the system of differential equations
	\begin{equation}\label{dem:sol}
	y'_k(t) =F_k\bigpar{y_1(t), \ldots, y_a(t)} \quad \text{ with } \quad y_k(0) = {y}_k^* \qquad \text{for~$1 \le k \le a\,.$}
	\end{equation}
	 If $(y_1(t), \ldots y_a(t))$ has~$\ell^{\infty}$-distance at least~$3 e^{L T} \lambda$ from the boundary~of~$\mathcal{D}$ for all~$t \in [0,T]$, then
with probability at least $1-2a \cdot \exp(\frac{-n\lambda^2}{8T})$, we have
	\begin{equation}\label{dem:error}
	\max_{0 \le i \le T n} \max_{1 \le k \le a}\bigabs{Y_k(i)-n \cdot y_k\bigpar{\tfrac{i}{n}}} \; \leq \; 3 e^{L T} \lambda n \,.
	\end{equation}
\end{theorem}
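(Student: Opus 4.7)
The plan is to compare the rescaled random process $Y(i)/n$ with the ODE solution $y(t)$ by combining a martingale concentration argument with a discrete Gronwall inequality, working on the event that the process remains in $\mathcal{D}$.

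First, for each coordinate $k$, I would use the Doob decomposition
$$Y_k(i) = Y_k(0) + \sum_{j=0}^{i-1} \mathbb{E}\bigl[Y_k(j+1) - Y_k(j) \mid \mathcal{F}_j\bigr] + M_k(i),$$
where $M_k$ is an $\mathcal{F}_i$-martingale with increments bounded by $2$ by hypothesis~(ii). Azuma--Hoeffding combined with Doob's maximal inequality gives $\Pr\bigl[\max_{0 \le i \le Tn} |M_k(i)| \ge \lambda n\bigr] \le 2 \exp\bigl(-n\lambda^2/(8T)\bigr)$, and a union bound over the $a$ coordinates yields the prefactor $2a$ in the conclusion.

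Second, I would track the discrepancy $D_k(i) := Y_k(i) - n\, y_k(i/n)$ and derive a one-step recursion for it. Using hypothesis~(i) to approximate the conditional increment by $F_k(Y(i)/n)$ up to additive error $\lambda_0$, and Taylor-expanding $y_k$ across an interval of length $1/n$ with remainder bounded by $M/n$ (since $|y_k'(t)| = |F_k(y(t))| \le M$ on $\mathcal{D}$), I obtain
$$D_k(i+1) - D_k(i) = \bigl[F_k\bigl(Y(i)/n\bigr) - F_k\bigl(y(i/n)\bigr)\bigr] + \bigl(M_k(i+1) - M_k(i)\bigr) + r_k(i),$$
with $|r_k(i)| \le \lambda_0 + M/n$. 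The Lipschitz hypothesis gives $|F_k(Y(i)/n) - F_k(y(i/n))| \le L \cdot E(i)/n$, where $E(i) := \max_k |D_k(i)|$, yielding
$$E(i) \le \max_k |M_k(i)| + \bigl(\lambda_0 + M/n\bigr)\, i + \frac{L}{n} \sum_{j=0}^{i-1} E(j).$$

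Third, a discrete Gronwall iteration (with the trend error integrated against the kernel $(1+L/n)^{i-j}$, which yields the improved factor $\min\{T, L^{-1}\}$ instead of $T$) gives $E(i) \le e^{LT}\bigl(\max_k |M_k(i)| + (\lambda_0 \min\{T, L^{-1}\} + M/n)\, n\bigr)$ uniformly over $i \le Tn$. Choosing $\lambda \ge \lambda_0 \min\{T, L^{-1}\} + M/n$ as in the hypothesis, the bracket is at most $3\lambda n$ on the event $\{\max_{k,i} |M_k(i)| \le \lambda n\}$, so $E(i) \le 3 e^{LT} \lambda n$ throughout, as claimed.

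The main obstacle is that hypotheses~(i)--(ii) are only assumed to hold while $Y(i)/n \in \mathcal{D}$. I would handle this by a standard stopping-time argument: let $\tau$ be the first $i$ with $Y(i)/n \notin \mathcal{D}$, apply the bounds above to the stopped process $Y(i \wedge \tau)$, and invoke the assumption that $y(t)$ stays at $\ell^\infty$-distance at least $3 e^{LT} \lambda$ from $\partial \mathcal{D}$ to conclude that on the good martingale event the scaled process $Y/n$ itself must remain in $\mathcal{D}$ (its distance from $y$ being at most $3 e^{LT} \lambda$). Hence $\tau > Tn$ and the bound holds throughout the whole interval.
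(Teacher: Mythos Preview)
The paper does not prove this theorem at all: it is quoted verbatim as a tool from Warnke~\cite{warnke} (refining earlier results of Kurtz and Wormald) and then applied as a black box in Corollary~\ref{cor:warnke_bounds_beta_j} and Proposition~\ref{prop:warnke_app}. So there is no ``paper's own proof'' to compare against.

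That said, your sketch is essentially the standard proof of the differential equation method as presented in Warnke's paper: Doob decomposition plus Azuma--Hoeffding for the martingale part, a one-step recursion for the discrepancy using the Lipschitz hypothesis and the ODE, discrete Gronwall to close the loop, and a stopping-time argument to handle the domain restriction. The main ideas are all present and correctly organised. Two small points worth tightening if you ever write it out in full: (a) the Taylor remainder for $n\bigl(y_k((i{+}1)/n)-y_k(i/n)\bigr)-F_k\bigl(y(i/n)\bigr)$ is of order $LM/n$ rather than $M/n$, which is absorbed into the constant~$3$; and (b) the running-maximum version of Azuma--Hoeffding follows directly from the supermartingale property of the exponential moment, so invoking Doob's maximal inequality separately is not needed. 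Neither affects the validity of the outline.
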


\medskip

\begin{remark} \label{rmk:det_derivatives_map}
	Let $y = \left(\alpha, \beta_1, \ldots, \beta_{8s+1}, \gamma_1, \ldots, \gamma_{8s}, \delta\right) \in [0,1]^{16s+3}.$
	Recall $u = 1 - \frac{1}{s} - \sum_{j=1}^{8s} \gamma_j$, $\Gamma = \sum_{j=1}^{8s} \gamma_{j}$ and $R = 1 + \frac{1}{2s} - \frac{\delta}{2} - u.$
	Then Lemma~\ref{lem:def_deterministic} can be summarized by describing functions that give the time derivatives of the coordinates of $y$:
	\begin{align}
	& F_{\alpha}(y) = - \frac{\alpha}{2} \cdot \left(\Gamma - \beta \right) + \delta \cdot \beta \label{F_alpha} \\
	& F_{\delta}(y) = \frac{\alpha}{2} \cdot \left(\Gamma - \beta \right) + \frac{1/s - \alpha - \delta}{2} \cdot \beta - \delta \cdot \Gamma \\
	\label{F_beta_j}
	& F_{\beta_j}(y) = \beta_{j-1} - \beta_j \cdot R, \; \; \forall j \in\{2, \ldots, 8s\}\\
	\label{F_beta_1}
&	F_{\beta_1}(y) = -\beta_1 \cdot R + \frac{\alpha}{2} \cdot \Gamma \\
	\label{F_beta_8s1}
	& F_{\beta_{8s+1}}(y) = \beta_{8s} - \beta_{8s+1}\cdot \Gamma \\
	& F_{\gamma_j}(y) = \gamma_{j-1} - \gamma_j \cdot R, \; \; \forall j \in \{2, \ldots, 8s\} \label{F_gamma_j} \\
	& F_{\gamma_1}(y) = - \gamma_1 \cdot R + \left(\frac{1}{2s} - \frac{\delta}{2}\right) \cdot \Gamma \label{F_gamma_1}
	\end{align}
\end{remark}

In the discrete system, the derivatives are replaced by the differences in the values of variables at time $i+1$ and $i$, for each $i \in \mathbb{N}$.
Define $$\widetilde{{Y}}(i) = \left(\widetilde{\alpha}(i), \widetilde{\beta}_1(i), \ldots,  \widetilde{\beta}_{8s+1}(i),  \widetilde{\gamma}_1(i), \ldots, \widetilde{\gamma}_{8s}(i),\widetilde{\delta}(i)\right).$$

Let $\mathcal{F}_i$ be the history after $i$ clock rings for the process $\widetilde{Y}$. The next lemma verifies condition (i) from Theorem~\ref{thm:DEM}.

\begin{lemma} \label{lem:change_expectation_Y}
The expected change in $\widetilde{\alpha}$ is approximated by
\begin{align} \label{eq:exp_change_F_alpha}
\Bigl|\Ex \left[\widetilde{\alpha}(i+1) - \widetilde{\alpha}(i) \mid \mathcal{F}_i \right]  - F_{\alpha}({\widetilde{Y}(i)}/{n}) \Bigr| \leq \frac{1}{n}
\end{align}
More generally, for each $k = 1, \ldots 16s+3$,
\begin{align} \label{eq:expected_change_Y_tilde}
\Bigl| \Ex \left[\widetilde{Y}_k(i+1) - \widetilde{Y}_k(i) \mid \mathcal{F}_i \right]  - F_{{y}_k}({\widetilde{Y}(i)}/{n}) \Bigr| \leq \frac{1}{n}
\end{align}
\end{lemma}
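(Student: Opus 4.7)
The plan is to verify the bound coordinate-by-coordinate by direct bookkeeping. For each $k$, I would enumerate all one-step events that change $\widetilde{Y}_k$, compute their conditional probabilities given $\mathcal{F}_i$, and compare the resulting conditional expectation with $F_{y_k}(\widetilde{Y}(i)/n)$. The functions $F_{y_k}$ in Remark~\ref{rmk:det_derivatives_map} were derived as the $n \to \infty$ limit of precisely these expected increments, so the only discrepancy arises from the fact that the matched partner is drawn uniformly from the other $n-1$ nodes rather than from all $n$ nodes; this replaces certain factors of $\tfrac{1}{n}$ in the partner-selection probabilities by $\tfrac{1}{n-1}$.

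The case analysis proceeds by the role of the waking node. A node wakes up with probability $1/n$; a bit-leader matched with an informed follower flips a coin and either pushes (resetting the follower's counter to $1$) or pulls; a $?$-leader matched with an informed follower adopts that follower's bit; an uninformed follower matched with an informed follower copies both bit and counter; an informed follower simply increments its own counter and does not interact. The cases $\widetilde{\alpha}$ and $\widetilde{\delta}$ involve only leader transitions and reproduce $F_\alpha$ and $F_\delta$ term-by-term with $\tfrac{1}{n(n-1)}$ in place of $\tfrac{1}{n^2}$. The cases $\widetilde{\beta}_j$ and $\widetilde{\gamma}_j$ require more bookkeeping, since a leader push simultaneously removes a node from bin $j$ and adds one to bin $1$; this is the step I expect to be the main obstacle, but it mirrors exactly the derivation of Lemma~\ref{lem:def_deterministic}, so transposing that argument to the random setting yields the correct match term-by-term. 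The counter-increment contributions (which give the $\widetilde{\beta}_{j-1}$ and $-\widetilde{\beta}_j$ parts of $F_{\beta_j}$) are exact, with no $1/(n-1)$ correction, since no partner is involved.

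For the error bound, using $\tfrac{1}{n(n-1)} - \tfrac{1}{n^2} = \tfrac{1}{n^2(n-1)}$ and $\widetilde{Y}_\ell(i) \le n$ for every coordinate, each quadratic product in the expected increment contributes at most $\tfrac{1}{n-1}$ to the discrepancy, and each $F_{y_k}$ contains only a bounded (indeed small constant) number of such products. Summing gives an $O(1/n)$ bound; the precise constant stated in the lemma can be obtained by sharper per-term accounting, or alternatively by absorbing a small constant into the parameter $\lambda_0$ in Theorem~\ref{thm:DEM}, which does not affect the subsequent high-probability conclusion.
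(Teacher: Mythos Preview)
Your approach is essentially the same as the paper's: the paper also computes each conditional expectation explicitly (with the partner drawn from $n-1$ nodes), writes down $F_{y_k}(\widetilde{Y}(i)/n)$ with $n$ in the denominator, and subtracts to obtain the $O(1/n)$ bound. Your identification of the sole source of discrepancy (the $\tfrac{1}{n-1}$ versus $\tfrac{1}{n}$ partner-selection factor) and your remark that the counter-increment terms are exact are both correct and match the paper's treatment.
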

\begin{proof}
The expected change in the number of leaders with the incorrect bit is given by
\begin{align} \label{eq:alpha_1step}
\Ex\left[\widetilde{\alpha}(i+1) - \widetilde{\alpha}(i) \mid \mathcal{F}_i \right] & = - \frac{\widetilde{\alpha}(i)}{2n} \cdot \frac{\widetilde{\Gamma}(i)-\widetilde{\beta}(i)}{n-1} + \frac{\widetilde{\delta}(i)}{n} \cdot \frac{\widetilde{\beta}(i)}{n-1}
\end{align}
The first term in the update (\ref{eq:alpha_1step}) represents the probability that the node selected at step $i+1$ is a leader with the incorrect bit that pulls from an informed node with the correct bit. The second term is the probability that the node selected is a leader with \q that pulls from an informed node with the wrong bit.

\medskip

On the other hand, by Remark~\ref{rmk:det_derivatives_map}
\begin{align} \label{eq:alpha_1step_Ytilde}
F_{\alpha}(\widetilde{Y}(i)/n) = - \frac{\widetilde{\alpha}(i)}{2n} \cdot \frac{\widetilde{\Gamma}(i) - \widetilde{\beta}(i)}{n} + \frac{\widetilde{\delta}(i)}{n} \cdot \frac{\widetilde{\beta}(i)}{n}
\end{align}
Subtracting (\ref{eq:alpha_1step_Ytilde}) from (\ref{eq:alpha_1step}) implies (\ref{eq:exp_change_F_alpha}).


The expected change in the number of leaders with \q is given by
\begin{align}
\Ex\left[\widetilde{\delta}(i+1) - \widetilde{\delta}(i) \mid \mathcal{F}_i \right] & = \frac{\widetilde{\alpha}(i)}{2n} \cdot \frac{\widetilde{\Gamma}(i)-\widetilde{\beta}(i)}{n-1} + \frac{n/s - {\widetilde{\alpha}(i)} - {\widetilde{\delta}(i)}}{2n} \cdot \frac{\widetilde{\beta}(i)}{n-1} - \frac{\widetilde{\delta}(i)}{n} \cdot \frac{\widetilde{\Gamma}(i)}{n-1} \notag
\end{align}
The first term is the probability that a wrong leader pulled from a correctly informed node, the second term is the probability that a correct leader pulled from an incorrectly informed node, and the last term is the probability that a leader with \q pulled from an informed node.

\medskip

On the other hand, by Remark~\ref{rmk:det_derivatives_map} and Lemma~\ref{lem:def_deterministic}
\begin{align}
F_{\delta}(\widetilde{Y}(i)/n) = \frac{\widetilde{\alpha}(i)}{2n} \cdot \frac{\widetilde{\Gamma}(i)-\widetilde{\beta}(i)}{n}  + \frac{n/s - {\widetilde{\alpha}(i)} - {\widetilde{\delta}(i)}}{2n} \cdot \frac{\widetilde{\beta}(i)}{n} - \frac{\widetilde{\delta}(i)}{n} \cdot \frac{\widetilde{\Gamma}(i)}{n} \notag
\end{align}
Subtracting the last two displayed equations yields (\ref{eq:expected_change_Y_tilde}) for $\delta$.

The expected change in the number of informed nodes in bin $1$ with the wrong bit is
\begin{small}
	\begin{align}
	\Ex\left[\widetilde{\beta}_1(i+1) - \widetilde{\beta}_1(i) \mid \mathcal{F}_i \right] & = - \frac{\widetilde{\beta}_1(i)}{n}- \frac{n/s - \widetilde{\delta}(i) - \widetilde{\alpha}(i)}{2n} \cdot \frac{\widetilde{\beta}_1(i)}{n-1} + \frac{\widetilde{\alpha}(i)}{2n} \cdot \frac{\widetilde{\Gamma}(i) - \widetilde{\beta}_1(i)}{n-1}  + \frac{\widetilde{u}(i)}{n} \cdot \frac{\widetilde{\beta}_1(i)}{n-1} \notag
	\end{align}
\end{small}
The first term represents the probability the selected node is a follower with counter value $1$ that increases its counter value, thus leaving bin $1$. The second term is the probability the selected node is a follower that gets pushed a correct bit from the a leader (and so leaves the set of followers with the wrong bit and counter value 1). The third term is the probability the selected node is a follower that gets pushed the wrong bit from a leader with incorrect information, and the last term is the probability that an uninformed follower pulls the wrong bit from a follower with counter value 1. Comparing this to $F_{\beta_1}(\widetilde{Y}(i)/n)$ using Lemma~\ref{lem:def_deterministic} yields (\ref{eq:expected_change_Y_tilde}) for $\beta_1$.

\medskip

The expected change in the number of informed nodes in bin $j \in \{2, \ldots, 8s\}$ with the wrong bit is
\begin{small}
\begin{align}
\Ex\left[\widetilde{\beta}_j(i+1) - \widetilde{\beta}_j(i) \mid \mathcal{F}_i \right] & =
 \frac{\widetilde{\beta}_{j-1}(i)}{n} -  \frac{\widetilde{\beta}_{j}(i)}{n}- \frac{n/s - \widetilde{\delta}(i)}{2n} \cdot \frac{\widetilde{\beta}_j(i)}{n-1} + \frac{\widetilde{u}(i)}{n} \cdot \frac{\widetilde{\beta}_j(i)}{n-1}, \; \; \forall j \in\{2, \ldots, 8s\} \notag
 \end{align}
 \end{small}
 The first and second term represent the probability the selected node is a wrongly informed node from bin $j-1$ and bin $j$, respectively. The third term is the probability the selected node is a leader that pushes its bit to a wrong node in bin $j$, and the last term is the probability that an uniformed node pulls from an incorrect node in bin $j$.
 Comparing this to $F_{\beta_{j}}(\widetilde{Y}(i)/n)$ using Lemma~\ref{lem:def_deterministic} yields (\ref{eq:expected_change_Y_tilde}) for $\beta_j$.

\medskip

The expected change in the number of uninformed nodes with the wrong bit is 
\begin{small}
	\begin{align}
	\Ex\left[\widetilde{\beta}_{8s+1}(i+1) - \widetilde{\beta}_{8s+1}(i) \mid \mathcal{F}_i \right] & =
	\frac{\widetilde{\beta}_{8s}(i)}{n} -  \frac{\widetilde{\beta}_{8s+1}(i)}{n} \cdot \frac{\widetilde{\Gamma}(i)}{n-1} \notag
	\end{align}
\end{small}
The first term represents the probability the selected node is a wrongly informed node from bin $8s$ and the second is the probability that an uninformed node with the wrong bit is selected and contacts an informed node.
Comparing this to $F_{\beta_{8s+1}}(\widetilde{Y}(i)/n)$ using (\ref{eq:derivative_beta_8s1}) yields (\ref{eq:expected_change_Y_tilde}) for $\beta_{8s+1}$.

\medskip


The arguments for $\gamma_j$ are similar to those for $\beta_j$, so we only include the expected changes:
\begin{small}
\begin{align}
\Ex\left[\widetilde{\gamma}_j(i+1) - \widetilde{\gamma}_j(i) \mid \mathcal{F}_i \right] & = \frac{\widetilde{\gamma}_{j-1}(i)}{n} - \frac{\widetilde{\gamma}_j(i)}{n} - \frac{\widetilde{\gamma}_j(i)}{n-1} \cdot \Bigl( \frac{1}{2s} -  \frac{\widetilde{\delta}(i)}{2n} - \frac{\widetilde{u}(i)}{n}\Bigr) \; \; \forall j \in\{2, \ldots, 8s\} \notag \\
\Ex\left[\widetilde{\gamma}_1(i+1) - \widetilde{\gamma}_1(i) \mid \mathcal{F}_i \right] & = - \frac{\widetilde{\gamma}_1(i)}{n} + \left(\frac{1}{2s} - \frac{\widetilde{\delta}(i)}{2n} \right) \cdot  \frac{\widetilde{\Gamma}(i)-\widetilde{\gamma}_1(i)}{n-1} + \frac{\widetilde{u}(i)}{n} \cdot \frac{\widetilde{\gamma}_1(i)}{n-1} \notag
\end{align}
\end{small}
\end{proof}

\begin{corollary} \label{cor:warnke_bounds_beta_j}
	Let $ T \leq \frac{\ln{n}}{240s}$.
	Suppose $\alpha(0) = \widetilde{\alpha}(0) / n $ and similarly for the other variables. There exists an event $A$ where $\mathbb{P}(A) \geq 1 - e^{-2n^{1/3}}$, such that on $A$ the following inequalities hold:
	\begin{align}
	& \max_{0 \leq i \leq Tn} \mid \widetilde{\alpha}(i) - n \cdot \alpha\left({i}/{n}\right)| \leq 3 n^{7/8} \notag \\
	& \max_{0 \leq i \leq Tn} \mid \widetilde{\beta}_j(i) - n \cdot \beta_j\left({i}/{n}\right)| \leq 3 n^{7/8}, \mbox{ for each } j \in \{1, \ldots, 8s+1\}\notag  \\
	& \max_{0 \leq i \leq Tn} \mid \widetilde{\delta}(i) - n \cdot \delta\left({i}/{n}\right)| \leq 3 n^{7/8} \notag \\
	& \max_{0 \leq i \leq Tn} \mid \widetilde{\gamma}_j(i) - n \cdot \gamma_j\left({i}/{n}\right)| \leq 3 n^{7/8}, \mbox{ for each } j \in \{1, \ldots, 8s\} \notag
	\end{align}
\end{corollary}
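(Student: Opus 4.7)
The plan is to apply Warnke's differential equation method (Theorem \ref{thm:DEM}) to the vector-valued process $\widetilde{Y}(i) = (\widetilde{\alpha}(i), \widetilde{\delta}(i), \widetilde{\beta}_1(i), \ldots, \widetilde{\beta}_{8s+1}(i), \widetilde{\gamma}_1(i), \ldots, \widetilde{\gamma}_{8s}(i))$ with $a = 16s+3$ coordinates and drift functions $F_k$ from Remark \ref{rmk:det_derivatives_map}. I would choose $T = \ln(n)/(240 s)$, $\lambda = n^{-1/4}$, and take the domain $\mathcal{D} = (-1, 2)^a$, which comfortably contains the deterministic trajectory (which lives in $[0,1]^a$ by the earlier analysis) with an $\ell^\infty$ margin of at least $1$ to the boundary.

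The hypotheses of Theorem \ref{thm:DEM} would be checked in order. Condition (i) with $\lambda_0 = 1/n$ is precisely Lemma \ref{lem:change_expectation_Y}. Condition (ii) holds because a single clock ring changes the state of at most two nodes, so each coordinate of $\widetilde{Y}$ changes by at most $1$ per step. For the Lipschitz constant, each $F_k$ is a low-degree polynomial in the coordinates and in the derived quantities $\Gamma = \sum_j \gamma_j$, $\beta = \sum_{j \leq 8s} \beta_j$, and $R = 1 + 1/(2s) - \delta/2 - u$ (with $u = 1 - 1/s - \Gamma$), each of which is an affine function of $O(s)$ coordinates with bounded coefficients. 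Computing the $\ell^1$ norm of the gradient of each $F_k$ on $\mathcal{D}$ then yields $L \leq Cs$ for an absolute constant $C \leq 20$, while $M = \sup_{k,\, y \in \mathcal{D}} |F_k(y)| = O(1)$.

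With these ingredients, $e^{LT} \leq n^{C/240} \leq n^{1/12}$, so the conclusion of Theorem \ref{thm:DEM} gives the error bound $3 e^{LT} \lambda n \leq 3 n^{1/12 - 1/4 + 1} = 3 n^{5/6} \leq 3 n^{7/8}$, as required. The condition $\lambda \geq \lambda_0 \min(T, 1/L) + M/n = O(1/n)$ is easily satisfied by $\lambda = n^{-1/4}$, and the required distance from the boundary of $\mathcal{D}$ is $3 e^{LT} \lambda \leq 3 n^{-1/6} \ll 1$, far below the available margin of $1$. For the failure probability, $n \lambda^2 /(8 T) = 30\, s \, n^{1/2}/\ln n$ exceeds $n^{1/3}$ by a growing factor for every $s \geq 1$ and large $n$, hence $2a \exp(-n \lambda^2/(8 T)) \leq e^{-2 n^{1/3}}$, giving the success event $A$ of the statement.

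The main obstacle I anticipate is the third step: bookkeeping the absolute constant in $L \leq C s$ carefully enough (with $C \leq 30$) that $e^{LT} \leq n^{1/8}$ under the choice $T = \ln(n)/(240 s)$. This is what forces the particular constant $240$ in the allowed time horizon. Once this linear-in-$s$ Lipschitz bound is in hand, the remaining verifications are routine parameter substitutions into Theorem \ref{thm:DEM}.
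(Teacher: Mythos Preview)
Your plan is exactly the paper's: apply Theorem~\ref{thm:DEM} with $a=16s+3$, $\lambda_0=1/n$ from Lemma~\ref{lem:change_expectation_Y}, $\lambda=n^{-1/4}$, and $T\le\ln(n)/(240s)$. The verification of conditions~(i)--(ii), the arithmetic $3e^{LT}\lambda n\le 3n^{7/8}$, and the failure-probability estimate all match.

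There is, however, one concrete issue with your domain $\mathcal{D}=(-1,2)^{a}$. On this box the derived quantities $\Gamma=\sum_j\gamma_j$ and $\beta=\sum_{j\le 8s}\beta_j$ range over intervals of width $\Theta(s)$, not $O(1)$; consequently $R=3/(2s)-\delta/2+\Gamma$ can also be $\Theta(s)$. This inflates both $M$ and the Lipschitz constant. For example, $\partial_{\alpha}F_{\delta}=\Gamma/2-\beta$ and $\partial_{\delta}F_{\delta}=-\beta/2-\Gamma$ can each be of order $s$, and $\partial_{\beta_j}F_{\beta_j}=-R$ likewise; adding these to the $O(s)$ many $O(1)$ partial derivatives pushes $\|\nabla F_k\|_1$ up to roughly $100s$ on $(-1,2)^a$, not $20s$. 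With $L\approx 100s$ one gets $e^{LT}\approx n^{5/12}$ and the error bound $3e^{LT}\lambda n$ exceeds $n$, so the conclusion fails. Your claim $M=O(1)$ is also false on this domain, though that alone would be harmless.

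The paper avoids this by building the sum constraints into $\mathcal{D}$: it takes all coordinates in $[-1,1]$ together with $\sum_{j\le 8s}\gamma_j\in(-1,1)$ and $\sum_{j\le 8s+1}\beta_j\in(-1,1)$. On that domain $|\Gamma|<1$, $|\beta|<2$, $|R|<2$, and one does get $L\le 30s$ and $M=O(1)$, after which $e^{LT}\le n^{1/8}$ and your remaining arithmetic goes through verbatim. So the fix is simply to shrink your domain to include these two linear constraints; the boundary-distance check then becomes slightly more delicate (the $\ell^{\infty}$-distance to the face $\{\sum\gamma_j=1\}$ is $(1-\Gamma)/(8s)\ge 1/(8s^2)$, still far larger than $3n^{-1/8}$ for $s=O((\log n)^{\theta})$), but everything else in your outline is correct.
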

\begin{proof}
	We apply the differential equation method (Theorem~\ref{thm:DEM}). The variables are defined in Remark~\ref{rmk:det_derivatives_map}.
Let $a = 16s+3$ and consider the following domain:
\begin{align}
\mathcal{D}  = \Bigl\{ \bigl(\alpha, \delta, \beta_1, \ldots, \beta_{8s+1}, \gamma_1, \ldots, \gamma_{8s} \bigr) \mid
\alpha, \delta, \beta_j, \gamma_j \in [-1, 1] \; \forall j,  \sum_{j=1}^{8s+1} \beta_j \in (-1,1), \sum_{j=1}^{8s} \gamma_j \in (-1,1)  \Bigr\}
\end{align}
Examining the functions in Remark~\ref{rmk:det_derivatives_map}, it can be seen that $M = \sup\{F_k(y) \mid k \in [a], y \in \mathcal{D}\} \leq 4$ and the Lipschitz constant is $L = 30s$.

By Lemma~\ref{lem:change_expectation_Y}, condition (i) of Theorem~\ref{thm:DEM} holds with $\lambda_0 = 1/n$. Let $\lambda = n^{-1/4}$. Then by definition of $\mathcal{D}$ and Lemma~\ref{lem:bounded_differential}, the distance from the boundary is at least $1/s$ for all $t \in [0, T]$. Since
$
T \leq {\ln{n}}/(240s) 
$
we get that $3e^{LT}\lambda = 3e^{30s \cdot T} \cdot n^{-1/4} \leq 3 e^{\ln\left({n^{1/8}}\right)} n^{-1/4} \leq 3 n^{-1/8} < 1/s$.
Then with probability at least $1 - \exp\left(-2n^{1/3}\right)$, we have
\begin{align}
\max_{0 \le i \le T n} \max_{1 \le k \le a}\bigabs{Y_k(i)-n \cdot y_k\bigpar{\tfrac{i}{n}}} \; \leq \; 3 n^{7/8} \,.
\end{align}
\end{proof}

\begin{proposition} \label{prop:warnke_app}
For each $i \in \{n T_2, \ldots, 4n \ln{n}\}$, let $G_i$ be the event where the following constraints hold:
	\begin{align}
	& \widetilde{\beta}(i) \leq \frac{n}{s^2} \cdot {2^{2-8s}} \\
	& \frac{\widetilde{\delta}(i)}{2} + \widetilde{u}(i) \leq \frac{n-1}{2s}
	\end{align}
Define
$G = \bigcap_{i = nT_2}^{4n \ln{n}} G_i\,.$
Then $\Pr(G^c) \leq e^{-n^{1/3}}$.
\end{proposition}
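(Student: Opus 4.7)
The plan is to apply Corollary~\ref{cor:warnke_bounds_beta_j} iteratively over consecutive time windows of length $nT$ with $T=\ln(n)/(240s)$, maintaining an inductive hypothesis that the random state at each window boundary lies in a good region $\mathcal{G}$ that is invariant under the deterministic dynamics. The chunking is essential: the Lipschitz constant $L=\Theta(s)$ means that applying Warnke over the full interval $[0,4n\ln n]$ in one shot would produce an uncontrollable $e^{LT}$-type blow-up, while cutting the interval into smaller pieces keeps $e^{LT}=n^{1/8}$ per piece. Since we restart the local deterministic trajectory at each window boundary from the current random state, the Warnke error does not compound across windows.

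I would take $\mathcal{G}$ to be the subset of the domain $\mathcal{D}$ from Corollary~\ref{cor:warnke_bounds_beta_j} whose coordinates satisfy strengthened versions of the two proposition bounds, e.g.\ $\beta\le 2^{1-8s}/s^{2}$ and $\delta/2+u\le (n-2)/(2ns)$, together with the Phase~III-style inequalities $\delta\le 0.2/s$ and $u\le (1+3/s)/(se^{2})$. The ODE-invariance of $\mathcal{G}$ follows from three ingredients already in hand: the Phase~III analysis of Section~\ref{sec:phase_3}, which shows that from any point of $\mathcal{G}$ the potential $\Lambda_3$ decays exponentially and hence $\beta$ stays well below $2^{2-8s}/s^2$; Proposition~\ref{ub:delta}, which maintains $\delta\le 0.2/s$; and Lemma~\ref{ub:u_gamma}, which maintains the bound on $u$. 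By equation~(\ref{eq:beta_ub_end_phase2}) together with the earlier deterministic bounds, the deterministic state $y(T_2)$ lies in $\mathcal{G}$ with substantial margin.

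For the inductive step, I would partition $[0,4n\ln n]$ into chunks of length $nT$, and in chunk $k$ apply Corollary~\ref{cor:warnke_bounds_beta_j} with initial condition $\widetilde{Y}(knT)/n$, which by induction lies in $\mathcal{G}$. The local deterministic solution $y^{(k)}$ started there remains in $\mathcal{G}$ by the invariance above, and Warnke yields $\|\widetilde{Y}(knT+i)-n y^{(k)}(i/n)\|_\infty\le 3n^{7/8}$ for all $i\le nT$, with probability at least $1-2(16s+3)\exp(-2n^{1/3})$. For $s=O((\log n)^{\theta})$ with $\theta<1/2$ one checks that $n^{-1/8}\ll 2^{-8s}/s^{2}$, so this $O(n^{7/8})$ error is asymptotically much smaller than the slack of order $n\cdot 2^{-8s}/s^{2}$ built into $\mathcal{G}$. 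Hence the random trajectory satisfies both bounds of the proposition throughout the chunk and lands back in $\mathcal{G}$ at the right endpoint, closing the induction.

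A union bound over the $O(s^{3}/\ln n)$ chunks covering $[0,nT_2]$ and the $O(s)$ chunks covering $[nT_2,4n\ln n]$ gives a total failure probability of at most $O(s^{3})\cdot \exp(-2n^{1/3})\le \exp(-n^{1/3})$ for $n$ sufficiently large. The principal obstacle is the verification of ODE-invariance of $\mathcal{G}$ with \emph{quantitative} margin large enough to absorb the $3n^{7/8}$ Warnke error: this boils down to checking that starting from an arbitrary (not just deterministic-trajectory) point of $\mathcal{G}$, the coefficients and decay rates from the Phase~III analysis of $\Lambda_{3}$, Proposition~\ref{ub:delta}, and Lemma~\ref{ub:u_gamma} still go through, and that the margins survive the coarser initial condition. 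This is essentially a rerun of the deterministic arguments already in Sections~\ref{sec:phase_1}--\ref{sec:phases_2and3}, with the extra care that the constants produce a strict, not just weak, contraction into $\mathcal{G}$.
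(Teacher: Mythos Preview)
Your overall architecture---chunking the interval, restarting the deterministic ODE at each chunk boundary from the current random state, applying Corollary~\ref{cor:warnke_bounds_beta_j} per chunk, and union-bounding---matches the paper's proof. The chunk length differs (the paper uses windows of length $T_2$ rather than $\ln n/(240s)$), but that is inessential.

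There is, however, a real gap in your control of $\widetilde\beta$. You propose to maintain the bound on $\beta$ via the Phase~III potential $\Lambda_3$, arguing that exponential decay of $\Lambda_3$ keeps $\beta$ below $2^{2-8s}/s^{2}$. But the coefficients in $\Lambda_3$ are $a_j=2^{1-j}/s$, which decay geometrically, so a bound on $\Lambda_3$ gives only $\beta_j\le s\,2^{j-1}\Lambda_3$; summing yields $\beta\le s\,2^{8s}\Lambda_3$, which is useless for the target $\beta\le 2^{2-8s}/s^{2}$. Relatedly, your set $\mathcal{G}$ as written is \emph{not} ODE-invariant: at the boundary $\beta=2^{1-8s}/s^{2}$ one has $\dot\beta=\alpha\Gamma/2-\beta_{8s}-(R-1)\beta$, and without an independent control on $\alpha$ (which $\mathcal{G}$ does not include), $\dot\beta$ can be positive. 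The paper avoids this by tracking the Phase~II potential $\Lambda_2=\alpha+\delta/16+\beta/4$ instead: its coefficients on the $\beta_j$ are \emph{constant}, so $\beta\le 4\Lambda_2$ directly, and $\Lambda_2$ only requires $\beta\le 1/64$ (not the tiny Phase~III threshold) for its decay estimate $\dot\Lambda_2\le -\Lambda_2/(8s)$ to hold. The paper then shows inductively that $\Lambda_2^{[\ell]}(t)\le e^{-8s}+7sn^{-1/8}$ across all chunks, which immediately yields the $\widetilde\beta$ bound with the needed margin.

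A secondary point you glossed over: for the $u$ and $\gamma_j$ bounds, the paper does not simply cite Lemma~\ref{ub:u_gamma} once, but reapplies it at each restart with a sequence of slackness parameters $\epsilon_0>\epsilon_1>\cdots>\epsilon_{\ln n}$ that shrinks by $1/(60\ln n)$ per chunk, precisely to absorb the $O(n^{7/8})$ Warnke error introduced at each restart. Your proposal hints at ``quantitative margin'' but does not set up this mechanism, which is what actually closes the induction for the second inequality of the proposition.
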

\begin{proof}
Recall $T_2 = T_1 + 64 s^2 \leq \frac{C_1 s}{\rho} + 64 s^2$, where $C_1 \leq 144 \cdot 32$.
We apply the differential equation method iteratively on time intervals
\begin{align}
\{\ell nT_2, \ldots, (\ell+1)nT_2\}, \mbox{ for } \ell \in \{0, 1,\ldots, \ln{n}\}\,.
\end{align}
At the beginning of each $\ell$-th iteration, we reset the deterministic system to match the random one; we denote the $\ell$-th deterministic system set this way by $\alpha^{[\ell]}(t)$, where $t \in \{\ell T_2, \ldots, (\ell +1)T_2\}$. The initial condition is $\alpha^{[\ell ]}(\ell T_2) := \widetilde{\alpha}(\ell nT_2)/n$. The other variables are similarly set. Note that $\alpha^{[0]} = \alpha$. An example can be seen in the next figure.

\begin{figure}[h!]
	\centering
	\subfigure[Fraction of leaders with the wrong bit]
	{
		\includegraphics[scale=1.52]{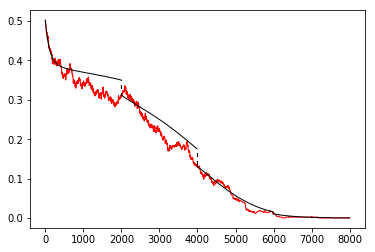}
	}
	\subfigure[Fraction of informed followers with the wrong bit]
	{
		\includegraphics[scale=1.52]{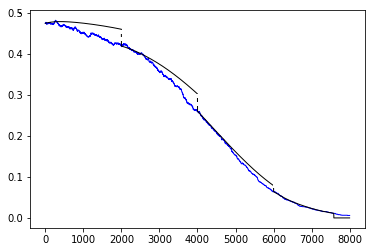}
	}
	\caption{The fraction of leaders with the wrong bit, where the deterministic system is reset periodically to restart from the random system ($\frac{\widetilde{\alpha}}{n} \cdot s$ shown in red and $\frac{\widetilde{\beta}}{n} \cdot \frac{s}{s-1}$ in blue). The $X$ axis shows the time, $n = 2500$, $s=5$, and the initial minority fraction is $0.49$}
\end{figure}

By Corollary~\ref{cor:warnke_bounds_beta_j}, there exists an event $A_{\ell}$ with $\mathbb{P}(A_{\ell}) \geq 1 - e^{-2n^{1/3}}$, so that on $A_{\ell}$ the following inequalities hold at all times $i \in \{\ell nT_2, \ldots, (\ell +1)nT_2\}$:
\begin{align} \label{eq:beta_close_beta_tilde}
& \mid \widetilde{\alpha}(i) - n \cdot \alpha^{[\ell]}(i/n) \mid \leq 3 n^{7/8} \notag \\
& \mid \widetilde{\delta}(i) - n \cdot \delta^{[\ell]}(i/n) \mid \leq 3 n^{7/8} \notag \\
& \mid \widetilde{\beta}(i) - n \cdot \beta^{[\ell]}(i/n) \mid \leq 24s \cdot n^{7/8} \notag \\
& \mid \widetilde{\Gamma}(i) - n \cdot \Gamma^{[\ell]}(i/n) \mid \leq 24s \cdot n^{7/8}
\end{align}
Recall $\Lambda_2(t) = \alpha(t) + \delta(t)/16 + \beta(t)/4$.
Summing up the inequalities in (\ref{eq:beta_close_beta_tilde}) with suitable weights, we get that on the event $A_{\ell}$ the next inequality holds
\begin{align} \label{eq:tilde_lambda_deterministic_interval_bound}
\mid \widetilde{\Lambda}_2(i) - n \cdot \Lambda_2^{[\ell]}(i/n) \mid \leq 7s \cdot n^{7/8}
\end{align}
The proof of inequality (\ref{eq:lambda2_beyond_T1}) gives
\begin{align} \label{eq:lambda2_beyond_T1_kth_system}
\Lambda_{2}^{[\ell]}(t) \leq \Lambda_{2}^{[\ell ]}(\ell T_2) \cdot \exp\left(-\frac{t - \ell T_2}{8s}\right), \mbox{ for all } t \geq \ell T_2\,.
\end{align}
By (\ref{eq:lambda2_beyond_T1}), it follows that $\Lambda_{2}^{[0]}(T_2) \leq e^{-8s}$. Therefore, on the event $A_0$, we have
\begin{align} \label{eq:lambda2_base_case_k=1}
\Lambda_{2}^{[1]}(T_2) = \frac{\widetilde{\Lambda}_2(nT_2)}{n} \leq \exp(-8s) + 7s \cdot n^{-1/8}\,.
\end{align}
Let $A^* = \bigcap_{\ell =0}^{\ln(n)} A_{\ell }$. Note that $\Pr((A^*)^c) \leq (1 + \ln{n}) \cdot e^{-2n^{1/3}} \leq e^{-n^{1/3}}$.
We use induction on $\ell  = 1, \ldots, \ln{n}$ to show that on $A^*$ we have
\begin{align} \label{eq:lambda2_k_induction_bound}
\Lambda_2^{[\ell ]}(t) \leq \exp(-8s) + 7s \cdot n^{-1/8}, \mbox{ for all } \ell  = 1, \ldots, \ln{n} \mbox{ and } t \in [\ell T_2, (\ell +1)T_2]
\end{align}
The base case $\ell =1$ holds by (\ref{eq:lambda2_base_case_k=1}). We assume (\ref{eq:lambda2_k_induction_bound}) holds for $\ell $ and derive it for $\ell +1$. By applying (\ref{eq:lambda2_beyond_T1_kth_system}), we obtain
\begin{align}  \label{eq:lambda2_k+1_induction_bound_initialTk}
\Lambda_2^{[\ell ]}((\ell +1) T_2) \leq \Bigl(\exp(-8s) + 7s \cdot n^{-1/8} \Bigr) \cdot \exp(-8s) \leq \exp(-8s)
\end{align}
Then on $A^*$ we have
\begin{align}\label{eq:lambda2_k+1_induction_bound_initialTk+1}  \Lambda_2^{[\ell +1]}((\ell +1)T_2) & = \frac{\widetilde{\Lambda}_2((\ell +1)nT_2)}{n} \notag \\
& \leq \Lambda_2^{[\ell ]}((\ell +1) T_2) + 7s \cdot n^{-1/8} \notag \\
& \leq \exp(-8s) + 7s \cdot n^{-1/8}
\end{align}
The induction claim follows from (\ref{eq:lambda2_beyond_T1_kth_system}) and (\ref{eq:lambda2_k+1_induction_bound_initialTk+1}). Note that $T_2 \geq 4$, thus for each $i \in \{nT_2, \ldots, 4n\ln{n}\}$, on $A^*$ we have
\begin{align} \label{eq:beta_delta_tilde_i_ub}
 & \widetilde{\beta}(i) \leq 4\left(\exp(-8s) + 7s \cdot n^{-1/8}\right) n \leq \frac{n}{s^2} \cdot 2^{2-8s}  \notag \\
& \widetilde{\delta}(i) \leq 16\left(\exp(-8s) + 7s \cdot n^{-1/8}\right) n\,.
\end{align}

Let $\epsilon_0 > \epsilon_1 > \ldots > \epsilon_{\;\ln{n}} > 0$, where $\epsilon_0 = 1/50$. For $\ell  > 0$, define $\epsilon_{\ell}= \epsilon_{\ell -1} - 1/(60 \ln{n})$. We will show by induction on $\ell = 0, \ldots, \ln{n}$ that on $A^*$, the following inequalities hold
\begin{align}
\gamma_j^{[\ell ]}(t) & < \frac{1 - \epsilon_{\ell }}{s} \cdot e^{-\frac{j}{4s}}, \; \forall j \in \{1, \ldots, 8s \}, \forall t \geq \ell T_2 \\
u^{[\ell ]}(t) & < \frac{1 - \epsilon_{\ell }}{s  e^{2}}  \left(1 + \frac{3}{s}\right), \forall t \geq \ell T_2
\end{align}
The base case $\ell =0$ follows by Lemma~\ref{ub:u_gamma} since $\gamma_j(0) = 1/(8s)$ and $u(0) = 0$. Assume it holds for $\ell -1$ and derive it for $\ell $. Recall $u(t) = 1 - 1/s - \Gamma(t)$. Then by (\ref{eq:beta_close_beta_tilde}), on $A^*$ we have
\begin{align}
\Bigl | u^{[\ell ]}(\ell T_2) - u^{[\ell -1]}(\ell T_2) \Bigr |  = \Bigl | \frac{\widetilde{u}(n \ell T_2)}{n} - u^{[\ell -1]}(\ell T_2) \Bigr | \leq 24s n^{-1/8} \le \frac{\epsilon_{\ell } - \epsilon_{\ell -1}}{se^2},
\end{align}
and similarly for $\gamma_j^{[\ell ]}(\ell T_2)$.
Applying Lemma~\ref{ub:u_gamma} again completes the induction step.
By (\ref{eq:beta_close_beta_tilde}), for each $i \in \{\ell nT_2, \ldots, (\ell +1)nT_2\}$, on $A^*$ we have
\begin{align} \label{eq:u_tilde_i_ub}
\widetilde{u}(i) \leq n \cdot u^{[\ell ]}(i/n) + 24s \cdot n^{7/8} \leq \frac{n}{se^2} \left(1 + \frac{3}{s} \right) + 24s \cdot n^{7/8}
\end{align}
Combining (\ref{eq:beta_delta_tilde_i_ub}) and (\ref{eq:u_tilde_i_ub}) implies that  $A^* \subseteq G$. Since $\Pr(A^*) \geq 1- e^{-n^{1/3}}$, this completes the argument.
\end{proof}

\begin{proposition} \label{prop:exp_lambda_3_tilde}
For each integer $i \geq 0$, let $\widetilde{\Lambda}_{3}(i) = \widetilde{\alpha}(i) + d \cdot \widetilde{\delta}(i) + \sum_{j=1}^{8s+1} a_j \cdot \widetilde{\beta}_j(i)$, where $d =1/s \cdot 2^{1-8s}$ and $a_j = 1/s \cdot 2^{1-j}$ for all $j$. Then for each $i \geq n T_2$
$$
\Ex\Bigl[\widetilde{\Lambda}_{3}(i+1) - \widetilde{\Lambda}_{3}(i) \Bigr] \leq - \frac{\zeta_{3}}{n} \cdot \Ex\left[\widetilde{\Lambda}_{3}(i)\right] + e^{-n^{1/3}}, \quad \mbox{for } \zeta_{3} = 1/2 - 2/s\,.
$$
\end{proposition}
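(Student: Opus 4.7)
The approach is to condition on $\mathcal{F}_i$ and decompose on the event $G_i$ from Proposition~\ref{prop:warnke_app}; note $G_i$ is $\mathcal{F}_i$-measurable since it depends only on $\widetilde{\beta}(i)$, $\widetilde{\delta}(i)$, $\widetilde{u}(i)$. First, I would apply Lemma~\ref{lem:change_expectation_Y} to each coordinate and combine by linearity with weights $1, d, a_1, \ldots, a_{8s+1}$ to obtain
\[
\Ex[\widetilde{\Lambda}_3(i+1) - \widetilde{\Lambda}_3(i) \mid \mathcal{F}_i] = F_\alpha(\widetilde{Y}(i)/n) + d\, F_\delta(\widetilde{Y}(i)/n) + \sum_{j=1}^{8s+1} a_j F_{\beta_j}(\widetilde{Y}(i)/n) + E_i,
\]
with $|E_i| \le (1 + d + \sum_j a_j)/n = O(1/n)$, since $\sum_j a_j \le 2/s$ and $d \le 2/s$.

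On the event $G_i$, I would verify the four constraints (a)--(d) from (\ref{eq:phase2_constraints}) pointwise at $\widetilde{Y}(i)/n$, with the phase III choices $d = (1/s)2^{1-8s}$, $a_j = (1/s)2^{1-j}$ and $\zeta_3 = 1/2 - 2/s$. Concretely, the bound $\widetilde{\beta}(i)/n \le 2^{2-8s}/s^2 = 2d/s$ handles (b) and also (c) once one notes $d + a_1 = O(1/s)$; the bound $\widetilde{\delta}(i)/(2n) + \widetilde{u}(i)/n \le 1/(2s)$ gives $R(\widetilde{Y}(i)/n) \ge 1$ and $\Gamma(\widetilde{Y}(i)/n) \ge 1 - 3/(2s) \ge \zeta_3$, and combined with the geometric bound $d/a_j = 2^{j-8s} \le 1$ for $j \le 8s$ yields (a) and (d). The same linear algebra as in Section~\ref{sec:phase_3} then gives, on $G_i$,
\[
F_\alpha(\widetilde{Y}(i)/n) + d\, F_\delta(\widetilde{Y}(i)/n) + \sum_{j=1}^{8s+1} a_j F_{\beta_j}(\widetilde{Y}(i)/n) \;\le\; -\frac{\zeta_3}{n}\, \widetilde{\Lambda}_3(i).
\]

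On $G_i^c$, each $|F_k(\widetilde{Y}(i)/n)|$ is bounded by an absolute constant (all coordinates lie in $[0,1]$), and $\widetilde{\Lambda}_3(i) \le 5n$; combined with $\Pr(G_i^c) \le e^{-n^{1/3}}$ from Proposition~\ref{prop:warnke_app}, this contributes at most $O(n\, e^{-n^{1/3}})$ to the unconditional expectation. Finally, I would take total expectation of both sides, writing
\[
\Ex[\widetilde{\Lambda}_3(i+1) - \widetilde{\Lambda}_3(i)] = \Ex\big[\mathbbm{1}_{G_i} \Ex[\Delta \widetilde{\Lambda}_3 \mid \mathcal{F}_i]\big] + \Ex\big[\mathbbm{1}_{G_i^c} \Ex[\Delta \widetilde{\Lambda}_3 \mid \mathcal{F}_i]\big],
\]
and use $\Ex[\widetilde{\Lambda}_3(i)\mathbbm{1}_{G_i^c}] \le 5n\, e^{-n^{1/3}}$ to replace $\Ex[\widetilde{\Lambda}_3(i)\mathbbm{1}_{G_i}]$ by $\Ex[\widetilde{\Lambda}_3(i)]$ up to an $e^{-n^{1/3}}$ error.

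The principal obstacle is the pointwise verification of the phase III constraints (a)--(d) at the noisy state $\widetilde{Y}(i)/n$ on $G_i$: the algebra mirrors Section~\ref{sec:phase_3} but has to accommodate random rather than deterministic values, and the constants must be tracked carefully to confirm that the event $G_i$ supplies exactly the bounds on $R$, $\Gamma$, and $\beta$ that each constraint needs. A secondary subtlety is that the approximation error $E_i = O(1/n)$ from Lemma~\ref{lem:change_expectation_Y} is of larger order than the stated $+\, e^{-n^{1/3}}$; absorbing it requires exploiting the substantial slack in constraint (a) (namely $d/a_j = 2^{j-8s} \le 1 \ll 9/2$, where $9/2$ is the loose bound permitted by $R \ge e^{1/(4s)}$), or else reading the additive term as $+\, C/n$.
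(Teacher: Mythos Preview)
Your overall strategy---condition on $\mathcal{F}_i$, split on the event $G_i$, verify the phase~III constraints pointwise on $G_i$, then take total expectation---is exactly the paper's. The one substantive difference is that the paper does \emph{not} pass through the approximation $F_k(\widetilde{Y}(i)/n)$ from Lemma~\ref{lem:change_expectation_Y}. Instead it writes down the \emph{exact} conditional expectations (with $n(n-1)$ in the denominators rather than $n^2$) and bounds those directly on $G_i$. This sidesteps your error term $E_i$ entirely: there is no additive $O(1/n)$, and the only error that survives to the unconditional statement is the $\Pr(G_i^c)\le e^{-n^{1/3}}$ contribution.

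Your concern about $E_i=O(1/n)$ being larger than $e^{-n^{1/3}}$ is therefore an artifact of your detour through the $F_k$'s. Your proposed fix via ``slack in constraint~(a)'' is not quite the right lever: that slack improves the coefficient multiplying $\widetilde{\beta}_j(i)/n$, but does not by itself cancel an additive $O(1/n)$. What would actually rescue your route is the observation that the errors in Lemma~\ref{lem:change_expectation_Y} are not uniformly $1/n$ but are in fact of the form $\frac{A\cdot B}{n^2(n-1)}$ with $A$ one of $\widetilde{\alpha},\widetilde{\delta},\widetilde{\beta}_j$, so that the weighted sum satisfies $|E_i|=O(\widetilde{\Lambda}_3(i)/n^2)$; this can then be absorbed into the $-\zeta_3\widetilde{\Lambda}_3(i)/n$ term with a negligible perturbation of $\zeta_3$. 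But the cleaner path, and the one the paper takes, is simply to keep the $n-1$ denominators throughout and never introduce $E_i$ at all.
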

\begin{proof}
By Lemma~\ref{lem:change_expectation_Y}, the change in the number of leaders with the wrong bit is
\begin{align} \label{eq:tilde_alpha_ub}
\Ex\left[ \widetilde{\alpha}(i+1) - \widetilde{\alpha}(i) \mid \mathcal{F}_i \right] & = \frac{\widetilde{\alpha}(i)}{2n(n-1)} \cdot \left[{-\widetilde{\Gamma}(i) + \widetilde{\beta}(i)}\right] + \frac{\widetilde{\delta}(i)\widetilde{\beta}(i)}{n(n-1)}
\end{align}
Recall the definition of $G_i$ from Proposition~\ref{prop:warnke_app}.
On $G_i$, since $\widetilde{\Gamma}(i) > n/2$,   Lemma~\ref{lem:change_expectation_Y} gives the following bound on the number of \q nodes:
\begin{align}  \label{eq:tilde_delta_ub}
\Ex\left[ \widetilde{\delta}(i+1) - \widetilde{\delta}(i) \mid \mathcal{F}_i \right] & \leq \frac{\widetilde{\alpha}(i)}{2n(n-1)} \cdot \left[{\widetilde{\Gamma}(i) - \widetilde{\beta}(i)}\right] + \frac{\widetilde{\beta}(i)}{2s(n-1)}  - \frac{\widetilde{\delta}(i)}{2(n-1)}
\end{align}

For the number of informed nodes in bin $1$ with the wrong bit, Lemma~\ref{lem:change_expectation_Y} gives the upper bound:
\begin{align} \label{eq:tilde_beta_1_ub}
\Ex\left[ \widetilde{\beta}_1(i+1) - \widetilde{\beta}_1(i) \mid \mathcal{F}_i \right] & \leq -\frac{\widetilde{\beta}_1(i)}{n} - \frac{\widetilde{\beta}_1(i)}{2s(n-1)}  + \frac{\widetilde{\alpha}(i)\widetilde{\Gamma}(i)}{2n(n-1)} + \frac{\widetilde{u}(i)\widetilde{\beta}_1(i)}{n(n-1)}
\end{align}

On $G_i$, since $\widetilde{\delta}(i) + 2 \widetilde{u}(i) \leq \frac{n-1}{2s}$, Lemma~\ref{lem:change_expectation_Y} gives the following upper bound on the number of informed nodes in bin $j \in\{2, \ldots, 8s\}$ with the wrong bit:
\begin{align} \label{eq:tilde_beta_j_ub}
\Ex\left[ \widetilde{\beta}_j(i+1) - \widetilde{\beta}_j(i) \mid \mathcal{F}_i \right] & \leq \frac{\widetilde{\beta}_{j-1}(i) - \widetilde{\beta}_{j}(i)}{n} -  \frac{\widetilde{\beta}_j(i)}{2n(n-1)} \cdot \frac{n}{2s}
\end{align}

Finally, for the number of uninformed nodes with the wrong bit, Lemma \ref{lem:change_expectation_Y} gives the bound:
\begin{align} \label{eq:tilde_beta_8s1_ub}
\Ex\left[ \widetilde{\beta}_{8s+1}(i+1) - \widetilde{\beta}_{8s+1}(i) \mid \mathcal{F}_i \right] & \leq \frac{\widetilde{\beta}_{8s}(i)}{n} -  \frac{\widetilde{\beta}_{8s+1}(i)}{n} \cdot \frac{\widetilde{\Gamma}(i)}{n-1}
\end{align}

By adding inequality (\ref{eq:tilde_alpha_ub}), inequality (\ref{eq:tilde_delta_ub}) multiplied by $d$, inequality (\ref{eq:tilde_beta_1_ub}) multiplied by $a_1$, inequality (\ref{eq:tilde_beta_j_ub}) multiplied by $a_j$, and inequality (\ref{eq:tilde_beta_8s1_ub}) multiplied by $a_{8s+1}$, we obtain that on $G_i$ the conditional expectation of $\widetilde{\Lambda}_3$ can be bounded by:
\begin{small}
\begin{align} \label{eq:lambda_3_tilde_ub}
\mathbbm{1}_{G_i} \cdot \Ex\left[ \widetilde{\Lambda}_{3}(i+1) - \widetilde{\Lambda}_{3}(i) \mid \mathcal{F}_i \right] & \leq \mathbbm{1}_{G_i} \cdot \Bigl\{ \frac{\widetilde{\alpha}(i)}{2n(n-1)}  \left[\widetilde{\Gamma}(i)  (d+a_1 - 1) + \widetilde{\beta}(i)  (1 - d) \right] + \frac{\widetilde{\delta}(i)}{n(n-1)} \left( \widetilde{\beta}(i) - d  \frac{n}{2} \right) \Bigr. \notag \\
&  \Bigl. + \sum_{j=1}^{8s} \frac{\widetilde{\beta}_j(i)}{n} \left[ a_{j+1} - a_j \left( 1 + \frac{1}{2s} \right) + \frac{dn}{2s(n-1)} \right] - \frac{\widetilde{\beta}_{8s+1}(i)}{2n} \Bigr\}
\end{align}
\end{small}

Using $\widetilde{\Gamma}(i) \geq (1 - \frac{2}{s})n$ and $d + a_1 < 2/s$, we get that on $G_i$ the conditional expectation can further be bounded as follows:
\begin{small}
\begin{align}
\mathbbm{1}_{G_i} \cdot \Ex\left[ \widetilde{\Lambda}_{3}(i+1) - \widetilde{\Lambda}_{3}(i) \mid \mathcal{F}_i \right] & \leq \mathbbm{1}_{G_i} \cdot
\Bigl\{ \frac{\widetilde{\alpha}(i)}{2n} \left[ - \left(1 - \frac{2}{s}\right)^2 + \frac{1}{s^2}\right] +  \frac{\widetilde{\delta}(i)}{n} \left[ - \frac{d}{2}\left(1 - \frac{4}{s}\right) \right] \Bigr. \notag \\
& \Bigl. + \sum_{j=1}^{8s+1} \frac{\widetilde{\beta}_j(i)}{n} \cdot \left[ - \frac{a_j}{2} \left( 1 - \frac{1}{s} \right) \right] \Bigr\} \notag \\
& \leq - \frac{\zeta_3}{n} \cdot \widetilde{\Lambda}_3(i) \notag
\end{align}
\end{small}
Taking expectation over all histories $\mathcal{F}_i$, we get the following inequality:
\begin{small}
\begin{align}
\Ex\left[ \widetilde{\Lambda}_{3}(i+1) - \widetilde{\Lambda}_{3}(i) \right] & = \mathbb{P}(G_i) \Ex\left[\widetilde{\Lambda}_{3}(i+1) - \widetilde{\Lambda}_{3}(i) \mid G_i \right] + \mathbb{P}(G_i^c) \Ex\left[\widetilde{\Lambda}_{3}(i+1) - \widetilde{\Lambda}_{3}(i) \mid G_i^c  \right] \notag \\
& \leq \frac{-\zeta_3}{n} \cdot \Ex\left[\widetilde{\Lambda}_3(i) \right] + e^{-n^{1/3}}\,. \notag
\end{align}
\end{small}
\end{proof}

The next proposition completes the proof of Theorem~\ref{thm:main}. We use the following lemma.

\begin{lemma}[Theorem A.1.15 in \cite{AlonSpencer_book}] \label{lem:alonspencer}
	Let $J$ have Poisson distribution with mean $\mu$. For $\epsilon > 0$
	\begin{align}
	\mathbb{P}\left[J \leq \mu(1-\epsilon)\right] & \leq e^{-\epsilon^2 \mu/2}, \notag \\
	\mathbb{P}\left[J \geq \mu(1+\epsilon)\right] & \leq \Bigl[ e^{\epsilon} (1 + \epsilon)^{-(1+\epsilon)}\Bigr]^{\mu}. \notag
	\end{align}
\end{lemma}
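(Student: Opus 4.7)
The plan is to derive both tail bounds by the classical Chernoff/Cram\'er method, exploiting the closed-form moment generating function of the Poisson distribution. Summing the series termwise gives
\[
\mathbb{E}[e^{tJ}] \;=\; e^{-\mu}\sum_{k\geq 0}\frac{(\mu e^t)^k}{k!} \;=\; \exp\bigl(\mu(e^t-1)\bigr) \qquad \text{for every } t\in\mathbb{R}.
\]

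For the upper tail, I would apply Markov's inequality to $e^{tJ}$ with $t>0$, obtaining
\[
\mathbb{P}\bigl[J\geq \mu(1+\epsilon)\bigr] \;\leq\; e^{-t\mu(1+\epsilon)}\,\mathbb{E}[e^{tJ}] \;=\; \exp\bigl(\mu(e^t-1)-t\mu(1+\epsilon)\bigr).
\]
Differentiating the exponent in $t$ and setting it to zero yields the optimal choice $t=\ln(1+\epsilon)$; substituting back leaves the exponent equal to $\mu\epsilon-\mu(1+\epsilon)\ln(1+\epsilon)$, which is exactly $\ln\bigl[e^{\epsilon}(1+\epsilon)^{-(1+\epsilon)}\bigr]^{\mu}$, proving the second inequality.

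For the lower tail, by Markov's inequality applied to $e^{-tJ}$ with $t>0$ and the same MGF, $\mathbb{P}[J\leq \mu(1-\epsilon)] \leq \exp\bigl(\mu(e^{-t}-1)+t\mu(1-\epsilon)\bigr)$. Optimizing in $t$ gives $t=-\ln(1-\epsilon)$ (valid for $\epsilon\in(0,1)$; for $\epsilon\geq 1$ the event has probability zero and the inequality is trivial), leaving the exponent $\mu\bigl(-\epsilon-(1-\epsilon)\ln(1-\epsilon)\bigr)$. The only remaining step is to show this quantity is at most $-\mu\epsilon^{2}/2$. Using the Taylor expansion $\ln(1-\epsilon)=-\sum_{k\geq 1}\epsilon^{k}/k$ and a direct computation,
\[
-\epsilon-(1-\epsilon)\ln(1-\epsilon) \;=\; -\frac{\epsilon^{2}}{2} \;-\; \sum_{k\geq 3}\Bigl(\frac{1}{k-1}-\frac{1}{k}\Bigr)\epsilon^{k} \;\leq\; -\frac{\epsilon^{2}}{2},
\]
since every omitted term has a strictly negative sign.

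The main (and only) subtle step is the Taylor-series comparison in the lower tail, but once one observes that all residual terms share the same sign the inequality is immediate; the upper tail requires no such expansion. Since this is the standard Chernoff bound for Poisson variables (Theorem A.1.15 in Alon--Spencer), no new ideas are needed.
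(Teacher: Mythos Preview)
Your proof is correct and is the standard Chernoff argument; the paper does not prove this lemma at all but simply cites it from Alon--Spencer, so there is nothing to compare. One tiny inaccuracy: at $\epsilon=1$ the event $\{J\le 0\}=\{J=0\}$ has probability $e^{-\mu}$, not zero, but since $e^{-\mu}\le e^{-\mu/2}$ the inequality is still trivially satisfied there.
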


\begin{proposition}
Given $\theta > 0$, let $T_4 = T_4(\theta) = \frac{1 + 4\theta}{\zeta_{3}} \ln{n}$. Then with probability at least $1 - n^{-\theta/2}$, by time $T_4$ the system has reached consensus and the total communication until consensus is $O(\frac{n \log{n}}{s})$.
\end{proposition}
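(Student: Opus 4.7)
The plan is to handle two claims: (i) consensus is reached by time $T_4$ with probability $\geq 1 - n^{-\theta/2}$, and (ii) the total number of communications in $[0, T_4]$ is $O(n \log n / s)$ with high probability; claim (ii) bounds the communications until consensus, since the latter occurs earlier.

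For (i), I iterate the recursion from Proposition~\ref{prop:exp_lambda_3_tilde}. Writing $b_i = \Ex[\widetilde{\Lambda}_3(i)]$, the inequality $b_{i+1} \leq (1 - \zeta_3/n) b_i + e^{-n^{1/3}}$ unrolled over $n(T_4 - T_2)$ steps gives
\[
b_{nT_4} \;\leq\; b_{nT_2} \cdot e^{-\zeta_3(T_4 - T_2)} + \frac{n}{\zeta_3}\, e^{-n^{1/3}}.
\]
The trivial bound $b_{nT_2} \leq O(n/s)$ follows from $\widetilde{\alpha}, \widetilde{\delta} \leq n/s$, $\sum_j \widetilde{\beta}_j \leq n$, and $\max_j a_j = a_1 = 1/s$. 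Since $T_2 = O(s^2)$ and $s^2 = o(\log n)$ under $s = O((\log n)^\theta)$ with $\theta < 1/2$, one gets $\zeta_3(T_4 - T_2) = (1 + 4\theta - o(1)) \log n$, so $b_{nT_4} \leq n^{-4\theta + o(1)}$. The key observation is that since $\widetilde{\alpha}, \widetilde{\delta}, \widetilde{\beta}_j$ are nonnegative integers and $a_{8s+1} = (1/s)2^{-8s}$ is the smallest coefficient appearing in $\widetilde{\Lambda}_3$, the event $\widetilde{\Lambda}_3(nT_4) < a_{8s+1}$ forces all of them to vanish, i.e., consensus is reached. By Markov's inequality,
\[
\Pr\Bigl(\widetilde{\Lambda}_3(nT_4) \geq a_{8s+1}\Bigr) \;\leq\; \frac{b_{nT_4}}{a_{8s+1}} \;\leq\; n^{-4\theta + o(1)} \cdot s \cdot 2^{8s} \;=\; n^{-4\theta + o(1)},
\]
using $s \cdot 2^{8s} = e^{O((\log n)^\theta)} = n^{o(1)}$. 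For $n$ large this is at most $\tfrac{1}{2} n^{-\theta/2}$.

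For (ii), communications are initiated only by leaders (of which there are $n/s$) or by uninformed nodes. During Phase I the deterministic bound $u(t) \leq (1 + 3/s)/(se^2)$ from Lemma~\ref{ub:u_gamma}, combined with Corollary~\ref{cor:warnke_bounds_beta_j}, yields $\widetilde{u}(i) \leq n \cdot u(i/n) + O(s n^{7/8}) \leq 2n/s$ with probability at least $1 - e^{-2n^{1/3}}$; during $[T_2, T_4]$, the event $G$ from Proposition~\ref{prop:warnke_app} gives the same $O(n/s)$ bound; the intermediate interval $[T_1, T_2]$ is handled by another application of Corollary~\ref{cor:warnke_bounds_beta_j}. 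Consequently, on an event of probability at least $1 - e^{-n^{1/3}}$, the number of nodes that initiate communication is bounded by $Cn/s$ throughout $[0, T_4]$ for some constant $C$. The total number of such rings is therefore stochastically dominated by $\mathrm{Poisson}(C n T_4 / s) = \mathrm{Poisson}(O(n \log n / s))$, and Lemma~\ref{lem:alonspencer} confines this to $O(n \log n / s)$ with failure probability $e^{-\Omega(n \log n / s)}$, negligible compared to $n^{-\theta/2}$.

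The main obstacle is step (i), specifically reconciling the crude initial value $b_{nT_2} = O(n)$ against the Markov divisor $a_{8s+1} = (1/s) 2^{-8s}$. This requires an exponential decay budget of $\log(n \cdot 2^{8s}) = \log n + O(s)$ at rate $\zeta_3$; since $s = O((\log n)^{1/2})$, the $O(s)$ term is absorbed into $o(\log n)$, which is why the hypothesis $\theta < 1/2$ is essentially sharp for this argument (note the factor $e^{\zeta_3 T_2} = e^{O(s^2)}$ in the decay also requires $s^2 = o(\log n)$). A union bound over (i) and (ii) completes the proof.
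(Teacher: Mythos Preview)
Your argument tracks the paper's closely, but there is a genuine gap in part (i): you bound $\Ex[\widetilde{\Lambda}_3(nT_4)]$ and conclude consensus after $nT_4$ \emph{clock rings}, whereas the proposition asks for consensus by continuous \emph{time} $T_4$. Since the ring count $J(T_4)$ is $\mathrm{Poisson}(nT_4)$, it falls short of $nT_4$ with probability about $1/2$, so your Markov bound at index $nT_4$ does not yield the stated conclusion. The paper closes this by running the recursion only up to $i = nT_3$ with $T_3 = T_2 + \frac{1+\theta}{\zeta_3}\ln n < T_4$, applying Markov there, and then invoking Lemma~\ref{lem:alonspencer} to show $\Pr(J(T_4) \le nT_3)$ is exponentially small; a union bound finishes. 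Your exponent $4\theta$ gives more than enough slack to insert such an intermediate time, so the repair is immediate once you notice the distinction.

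A smaller issue in part (ii): the claimed stochastic domination of the communication count by $\mathrm{Poisson}(CnT_4/s)$ is not immediate, because the bound $\widetilde{u}(i)\le Cn/s$ holds only on an event determined by the process itself, and conditioning on that event spoils the Poisson thinning. The paper avoids this by writing the communication count as $M_m + \sum_{i\le m}\bigl(1/s + \widetilde{u}(i)/n\bigr)$, where $M_m = \sum_{i\le m}\bigl(L_i - 1/s + \Upsilon_i - \widetilde{u}(i)/n\bigr)$ is a bounded-increment martingale; Azuma--Hoeffding controls $M_m$, and the event $G$ of Proposition~\ref{prop:warnke_app} bounds $\sum_i \widetilde{u}(i)/n$ deterministically. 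Your conclusion is correct and your outline becomes rigorous once recast in this form.
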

\begin{proof}
By Proposition~\ref{prop:exp_lambda_3_tilde}, for all $i \in \{nT_2, \ldots, 4n \ln{n}\}$, we have
\begin{small}
\begin{align}
\Ex\left[ \widetilde{\Lambda}_{3}(i+1) \right] & \leq  \left(1 - \frac{\zeta_3}{n} \right)\Ex\left[ \widetilde{\Lambda}_{3}(i) \right] + e^{-n^{1/3}}\,.
\end{align}
\end{small}
By definition of $\widetilde{\Lambda}_3$, we have that $ \widetilde{\Lambda}_{3}(i) \leq n$ for all $i$.
By induction on $i \geq nT_2$, it follows that
\begin{small}
\begin{align} \label{eq:lambda_3_ub_t2}
\Ex\left[ \widetilde{\Lambda}_{3}(i) \right] & \leq  \left(1 - \frac{\zeta_3}{n}\right)^{i-nT_2} \cdot n + (i-nT_2) \cdot e^{-n^{1/3}}\,.
\end{align}
\end{small}
Let $T_3 = T_2 + \frac{1 + \theta}{\zeta_3} \ln{n}$ as in Section~\ref{sec:phase_3}.
For $i = nT_3$, inequality (\ref{eq:lambda_3_ub_t2}) gives
\begin{small}
	\begin{align}
	\Ex\left[ \widetilde{\Lambda}_{3}(i) \right] & \leq  n\left(1 - \frac{\zeta_3}{n}\right)^{n(T_3 - T_2)}  + n(T_3 - T_2) \cdot e^{-n^{1/3}} \notag \\
	& \leq n e^{-(1+\theta)\log{n}} + \frac{1 + \theta}{\zeta_3} n \ln{n} \cdot e^{-n^{1/3}} < 2n^{-\theta}, \; \mbox{for large } n\,.
	\end{align}
\end{small}
If after $i$ clock rings there is no consensus, then at least one of $\widetilde{\alpha}(i)$, $\widetilde{\delta}(i)$, $\widetilde{\beta}_1(i), \ldots, \widetilde{\beta}_{8s+1}(i)$ is positive, so $\widetilde{\Lambda}_3(i) \geq \min\{1, d, a_1, \ldots, a_{8s+1}\} = a_{8s+1}$. Using Markov's inequality we get
\begin{align} \label{eq:prob_no_consensus}
\mathbb{P}(\mbox{no consensus after } n T_3 \mbox{ rings})
\leq \mathbb{P}\left(\widetilde{\Lambda}_3(n T_3) \geq a_{8s+1} \right) \leq
a_{8s+1}^{-1} \Ex\left( \widetilde{\Lambda}_3(n T_3) \right) \leq s 2^{8s} \cdot 2 n^{-\theta}
\end{align}

Let $J(t)$ denote the number of clock rings by time $t$. Then $J(t)$ has Poisson distribution with parameter $nt$. By Lemma~\ref{lem:alonspencer}, where $\mu = nT_4$ and $\epsilon = 1 - T_3/T_4$, we have
\begin{align} \label{eq:prob_few_clock_rings}
\mathbb{P}(J(T_4) \leq n T_3) & \leq \exp(- \left( 1 - \frac{T_3}{T_4}\right)^2 \cdot nT_4) \leq \exp\left(-\left(\frac{2\theta}{1 + 4 \theta}\right)^2 \cdot n\left(\frac{1+4\theta}{\zeta_3}\right) \ln{n}\right) \notag \\
& \leq \exp\left(- \frac{4 \theta^2}{\zeta_3(1+4\theta)} \cdot n \ln{n}\right) \,.
\end{align}

\smallskip

Combining (\ref{eq:prob_no_consensus}) and (\ref{eq:prob_few_clock_rings}) implies that
\begin{align} \label{eq:prob_no_consensus_ub_by_T4}
\mathbb{P}(\mbox{no consensus by time } T_4) & \leq \mathbb{P}(\mbox{no consensus after } n T_3 \mbox{ rings}) + \mathbb{P}\left(J(T_4) \leq n T_3\right)  \notag \\
& \leq 2s 2^{8s} n^{-\theta}  + \exp(-n\theta^2)
\end{align}

Let $L_i = \textbf{1}_{\{\mbox{i-th clock ring is a leader}\}}$ and $\Upsilon_i = \textbf{1}_{\{\mbox{i-th clock ring is uninformed}\}}$. Let $N_{C}(j)$ denote the number of communications by time $j$. Observe that the sequence $$M_j = \sum_{i=1}^{m} \Bigl(L_i - \Ex[L_i \mid \mathcal{F}_{i-1}] +  \Upsilon_i -\Ex[\Upsilon_i \mid \mathcal{F}_{i-1}]\Bigr) = \sum_{i=1}^{m} \left(L_i - \frac{1}{s} +  \Upsilon_i -\frac{\widetilde{u}(i)}{n}\right)$$ is a martingale, as the increments have mean zero given the past. Moreover, the increments are in $[-1, 1]$. Applying Azuma-Hoeffding gives
$\mathbb{P}\left(M_j \geq \frac{j}{2s}\right) \leq e^{-\frac{j}{8s^2}}$. Then we have
\begin{align} \label{eq:comm_by_nT_3_ub}
\mathbb{P}\left( N_C(nT_3) \geq \frac{2nT_3}{s} \right) & \leq \mathbb{P}\left( \sum_{i=1}^{n T_3} (L_i + \Upsilon_i) \geq \frac{2n T_3}{s} \right) \notag \\
& \leq \mathbb{P}\left(\sum_{i=1}^{n T_3} \left(L_i - \frac{1}{s} + \Upsilon_i - \frac{\widetilde{u}(i)}{n}\right) \geq \frac{ n T_3}{2s} \right) +  \mathbb{P}\left(\sum_{i=1}^{n T_3} \frac{\widetilde{u}(i)}{n} \geq \frac{n T_3}{2s} \right) \notag \\
& \leq e^{-\frac{n T_3}{8s^2}} + e^{-n^{1/3}} \leq e^{-{n}} + e^{-n^{1/3}} \,.
\end{align}
Combining inequalities (\ref{eq:prob_no_consensus_ub_by_T4}) and (\ref{eq:comm_by_nT_3_ub}) proves the proposition.
\end{proof}

\section{Concluding Remarks}

Recall the potential function $\Phi$ defined in Phase I. Next we discuss the extension of Theorem 1 to the setting where the initial advantage $\rho=\Phi(0)$ is allowed to tend to zero as $n \to \infty$.
\begin{corollary} Suppose $\rho=\Phi(0)>n^{-\sigma}$ for some $\sigma \in (0,1/2)$.
Then the protocol in Theorem 1 reaches consensus on the majority belief w.h.p in  time to $O(\log{n} +s|\log\rho|)$, and the total communication until consensus is $O(n/s \cdot \log(n)+ n |\log \rho|)$ w.h.p.
\end{corollary}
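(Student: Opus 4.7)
The plan is to extend the analysis of Phases~I, II, and~III from the constant-$\rho$ case to handle $\rho \in (n^{-\sigma}, 1)$ with $\sigma \in (0, 1/2)$, and then to convert the deterministic guarantees to the random system using the differential equation method with a finer tolerance.

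The first step is to sharpen Phase~I: the bound $T_1 = O(s/\rho)$ of Proposition~\ref{prop:phase1_progress} is too weak here, since it would already give $T_1 = O(s n^\sigma)$ and communication $\Omega(n^{1+\sigma})$. Empirically (see Figure~\ref{fig:log_ratio_beta_gamma_j_deterministic}), $(1-\xi)$ and $(1-\eta_j)$ decay exponentially, so one expects $T_1 = O(s|\log\rho|)$, with the chain of $8s$ ODE's for $\eta_1,\ldots,\eta_{8s}$ acting as the rate-limiting step. To establish this I would study the linearized dynamics near $\Phi = 0$ and show that the system admits an approximately self-similar exponential mode in which $\xi, \eta_1, \ldots, \eta_{8s}$ all grow at a common rate $\Omega(1/s)$ with a spatial profile bounded above and below by geometric sequences in $j$. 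In place of the additive potential $\Psi$ used in Proposition~\ref{prop:phase1_progress}, the case analysis would be redone with a multiplicative potential such as $-\log(1-\Phi)$ augmented by a term penalizing a steep profile of $\eta_j/\xi$; the key verification is that this potential grows at rate $\Omega(1/s)$ whenever $\Phi < 31/32$, after an initial transient of length $O(s)$ has been absorbed.

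Phases~II and~III transfer essentially unchanged, since the potentials $\Lambda_2$ and $\Lambda_3$ exhibit exponential decay at rates $\Omega(1/s)$ and $\Omega(1)$ respectively, neither depending on $\rho$. The total deterministic time is thus $T_4 = O(\log n + s|\log\rho|)$. For the random approximation during Phase~I, Theorem~\ref{thm:DEM} would be applied iteratively on chunks of length $T_{\mathrm{chunk}} = c_0(1/2-\sigma)(\log n)/s$ with $\lambda = n^{-(1/2-\varepsilon)}$ for small $\varepsilon \in (0, 1/2-\sigma)$, so that the per-chunk error $3 e^{L T_{\mathrm{chunk}}}\lambda$ is at most $n^{-\sigma - \Omega(1)} = o(\rho)$; this preserves the exponential-mode inequalities when chunks are chained as in Proposition~\ref{prop:warnke_app}. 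The number of chunks required for Phase~I is $O(s^2|\log\rho|/\log n)$, and a union bound over them still has $o(1)$ failure probability since each chunk's concentration probability is exponentially small in $n^{2\varepsilon}$. Phases~II and~III reuse the original $\lambda = n^{-1/4}$, as the relevant potentials remain bounded away from zero until the last $O(\log n)$ time units.

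Finally, the communication bound follows from the martingale argument at the end of Section~5: on the good event of the iterated DE method, the fraction of uninformed nodes stays $O(1/s)$ throughout, so the communication rate per clock-ring is $O(1/s)$, and Azuma--Hoeffding applied over the $O(n(\log n + s|\log\rho|))$ clock-rings until consensus yields total communication $O(n\log n/s + n|\log\rho|)$ w.h.p. The main obstacle is the first step: the chain cases in Proposition~\ref{prop:phase1_progress} have growth rate only $\Omega(\rho/s)$ because the potential is additive in $\epsilon$. Upgrading to exponential convergence requires controlling the ratios $\eta_{j-1}/\eta_j$ uniformly in $t$ -- Lemma~\ref{gamma_ratio_lb} gives the analogue $\gamma_{j-1}/\gamma_j > 1/2$ for the denominators, but the corresponding bound on the ``signal'' variables $\eta_j$ needs a new argument. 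Establishing the self-similar spatial profile throughout $[O(s), T_1]$ is the delicate technical step.
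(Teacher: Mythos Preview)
Your target for Phase~I is correct: you need $T_1 = O(s|\log\rho|)$ rather than $O(s/\rho)$. But you are working much harder than necessary, and the step you yourself flag as ``the delicate technical step'' --- controlling the ratios $\eta_{j-1}/\eta_j$ to establish a self-similar spatial profile --- can be avoided entirely.

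The paper's proof simply \emph{iterates} the existing Phase~I analysis. The derivative bound in Proposition~\ref{prop:phase1_progress} is $\Psi_+'(t) \ge \epsilon/(9s)$, and $\epsilon$ is chosen proportional to the \emph{current} value of $\Phi$. Taking $\epsilon = \rho/16$, after time $t_1 = 288s$ the potential $\Psi$ has grown by at least $2\rho$, so $\Phi(t_1) \ge \Psi(t_1) - 2\epsilon > 2\rho$. Now redefine $\Psi$ using $\Phi(t_1)$ in place of $\rho$ (so $\epsilon$ doubles as well) and repeat: $\Phi$ doubles again in the next $O(s)$ time units. After $O(\log_2(1/\rho))$ rounds --- i.e.\ by time $O(s|\log\rho|)$ --- one has $\Phi \ge 1/2$ and Phase~I ends. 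The function $\Phi$ is fixed throughout; only the auxiliary $\Psi$ changes form at each round. This doubling \emph{is} the exponential growth you are seeking, obtained without any new potential function, linearization, or profile argument. Your proposed multiplicative potential $-\log(1-\Phi)$ would in fact not help here, since for small $\Phi$ it behaves like $\Phi$ itself and is still additive; the right multiplicative quantity is $\log\Phi$, and the iteration already shows it grows at rate $\Omega(1/s)$.

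Your treatment of the random system is actually more explicit than the paper's: the paper contents itself with observing that the hypothesis $\rho > n^{-\sigma}$, $\sigma < 1/2$, is exactly what allows Warnke's theorem to be applied with error $o(\rho)$ so that each doubling survives in the stochastic process. Your chunking scheme with $\lambda = n^{-(1/2-\varepsilon)}$ spells this out and is fine, though with the iteration approach the number of chunks needed is only $O(|\log\rho|)$ (one per doubling), not $O(s^2|\log\rho|/\log n)$. Phases~II and~III and the communication martingale argument carry over as you describe.
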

\begin{proof}
Let $\epsilon=\rho/16$ and $v_j= v_j  = \frac{2(3+ 2j)}{3+16s}$ for each $j \geq 1$. Recall the potential function from (\ref{psi_fun}):
\begin{align} 
\Psi(t) = \min\Bigl\{ \xi(t), \eta_1(t) + \epsilon \cdot v_1, \ldots, \eta_{8s}(t) + \epsilon \cdot v_{8s} \Bigr\}.
\end{align}
The proof of Proposition \ref{prop:phase1_progress} yields a lower bound for the right derivative
$\Psi_{+}'(t)  \ge \epsilon/(9s)=\rho/(144s)$, hence $t_1=288s$ satisfies
$\Psi(t_1) \ge 3\rho$, whence $\Phi(t_1)  \ge 5\rho/2>2\rho$. 

Iterating, we deduce that
$t_*=144s \log_2(1/\rho)$ satisfies $\Phi(t_*) \ge 1/2$. 
(Note that $\Psi$ changes its form in each round of the iteration, but $\Phi$ does not.) 
The assumption that $\rho>n^{-\sigma}$ for some $\sigma \in (0,1/2)$ is needed in order to apply Warnke's Theorem and conclude that the random system also satisfies $\tilde{\Phi}(t_1)>2\rho$ and reaches relative advantage at least $1/2$ by time $O(s\log(1/\rho))$.
\end{proof}


\bibliographystyle{natbib}
\bibliography{consensus_bib}

\end{document}